\newtheorem{thm}{Theorem}[section]
\newtheorem{prop}[thm]{Proposition}
\newtheorem{cor}[thm]{Corollary}
\newtheorem{defn}{Definition}[section]
\newtheorem{rem}{Remark}
\def\BibTeX{{\rm B\kern-.05em{\sc i\kern-.025em b}\kern-.08em
    T\kern-.1667em\lower.7ex\hbox{E}\kern-.125emX}}
\begin{document}

\title{Integrated Task and Motion Planning for Safe Legged Navigation in Partially Observable Environments}

\author{
Abdulaziz~Shamsah,~\IEEEmembership{Student Member,~IEEE,}  Zhaoyuan~Gu,~\IEEEmembership{Student Member,~IEEE,} 
Jonas~Warnke,~\IEEEmembership{Student Member,~IEEE,}
Seth~Hutchinson,~\IEEEmembership{Fellow,~IEEE,}
and~Ye~Zhao,~\IEEEmembership{Senior Member,~IEEE}
\thanks{This work was partially funded by the NSF grants \# IIS-1924978, \# CMMI-2144309, ONR grant \#  N00014-23-1-2223, Georgia Tech Research Institute IRAD grant, and Georgia Tech Institute for Robotics and Intelligent Machines (IRIM) Seed Grant.}
\thanks{The authors are with the Laboratory for Intelligent Decision and Autonomous Robots, Woodruff School of Mechanical Engineering, Georgia Institute of Technology, Atlanta, GA 30313, USA
        {\tt\small \{ashamsah3, zgu78, jwarnke, seth, yezhao\}@gatech.edu}}%
\thanks{Mailing address: 801 Ferst Dr. NW
Atlanta, Georgia, United State, 30332.
}
\thanks{Corresponding author: Y. Zhao}
}

\IEEEaftertitletext{\vspace{-2\baselineskip}} 

\maketitle

\begin{abstract}

This study proposes a hierarchically integrated framework for safe task and motion planning (TAMP) of bipedal locomotion in a partially observable environment with dynamic obstacles and uneven terrain. The high-level task planner employs linear temporal logic (LTL) for a reactive game synthesis between the robot and its environment and provides a formal guarantee on navigation safety and task completion. To address environmental partial observability, a belief abstraction is employed at the high-level navigation planner to estimate the dynamic obstacles’ location. 
Accordingly, a synthesized action planner sends a set of locomotion actions 
to the middle-level motion planner, while incorporating safe locomotion specifications extracted from safety theorems based on a reduced-order model (ROM) of the locomotion process. The motion planner employs the ROM to design safety criteria and a sampling algorithm to generate non-periodic motion plans that accurately track high-level actions. At the low level, a foot placement controller based on an angular-momentum linear inverted pendulum model is implemented and integrated with an ankle-actuated passivity-based controller for full-body trajectory tracking. To address external perturbations, this study also investigates safe sequential composition of the keyframe locomotion state and achieves robust transitions against external perturbations through reachability analysis. The overall TAMP framework is validated with extensive simulations and hardware experiments on bipedal walking robots Cassie and Digit  designed by Agility Robotics.

\end{abstract}

\begin{IEEEkeywords}
Humanoid and Bipedal Locomotion, Formal Methods in Robotics and Automation, Task Planning, Motion and Path Planning.
\end{IEEEkeywords}


\section{Introduction}

\IEEEPARstart{R}{obots} are increasingly being deployed in real-world environments, with legged robots presenting superior 
versatility in complex workspaces. However, safe legged navigation in real-life workspaces still poses a challenge,  particularly in a partially observable environment comprised of dynamic and possibly adversarial obstacles as seen in Fig.~\ref{fig:sim}. While motion planning for bipedal systems in dynamic environments has been widely studied \cite{bohorquez2016safe, pajon2019safe, motahar2016composing}, the proposed solutions often lack formal guarantees on simultaneous locomotion and navigation safety, with the exception of a recent work in \cite{scianca2021behavior}.
Formal guarantees on safety and task completion in a complex environment have been gaining interest in recent years \cite{srinivasan2020control, kress2018synthesis, vasile2017minimum, vasilopoulos2020reactive}, however hierarchical planning frameworks with multi-level safety guarantees for underactuated legged robots remain lacking. An intrinsic challenge of such multi-level formal guarantees is how to guarantee viable execution of high-level commands for low-level, full-body control that involves inherently complex bipedal dynamics.


\begin{figure}[t]
\centerline{\includegraphics[width=.46\textwidth]{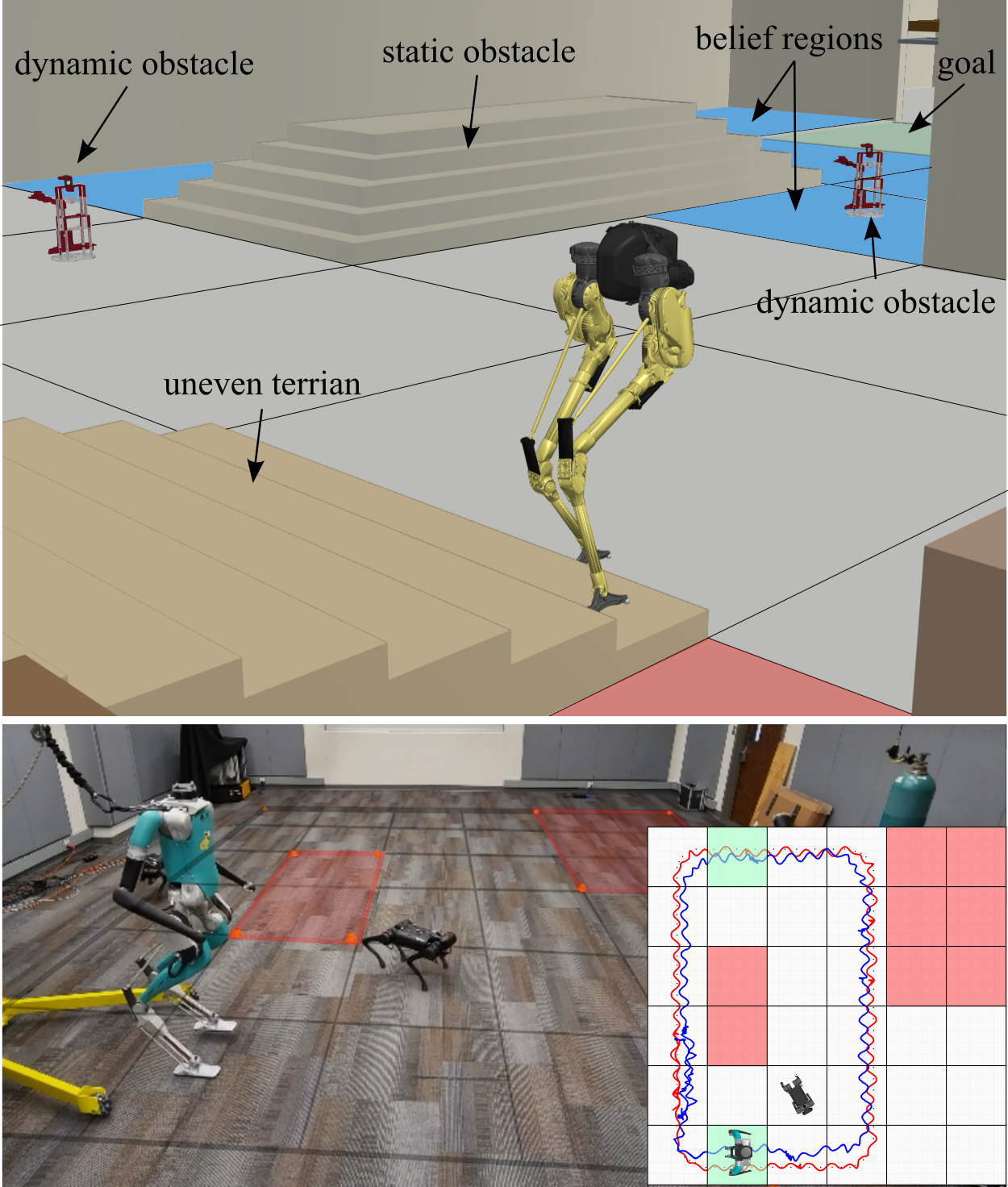}}
\caption{A snapshot of the simulation environment (top subfigure) and real-world experiment environment (bottom subfigure) for the proposed TAMP framework. The walking robot is deployed to accomplish safe navigation tasks. The environment contains static and dynamic obstacles, and uneven terrains.}
\label{fig:sim}
\vspace{-0.15in}
\end{figure}

\begin{figure*}[t]
\centerline{\includegraphics[width=.9\textwidth]{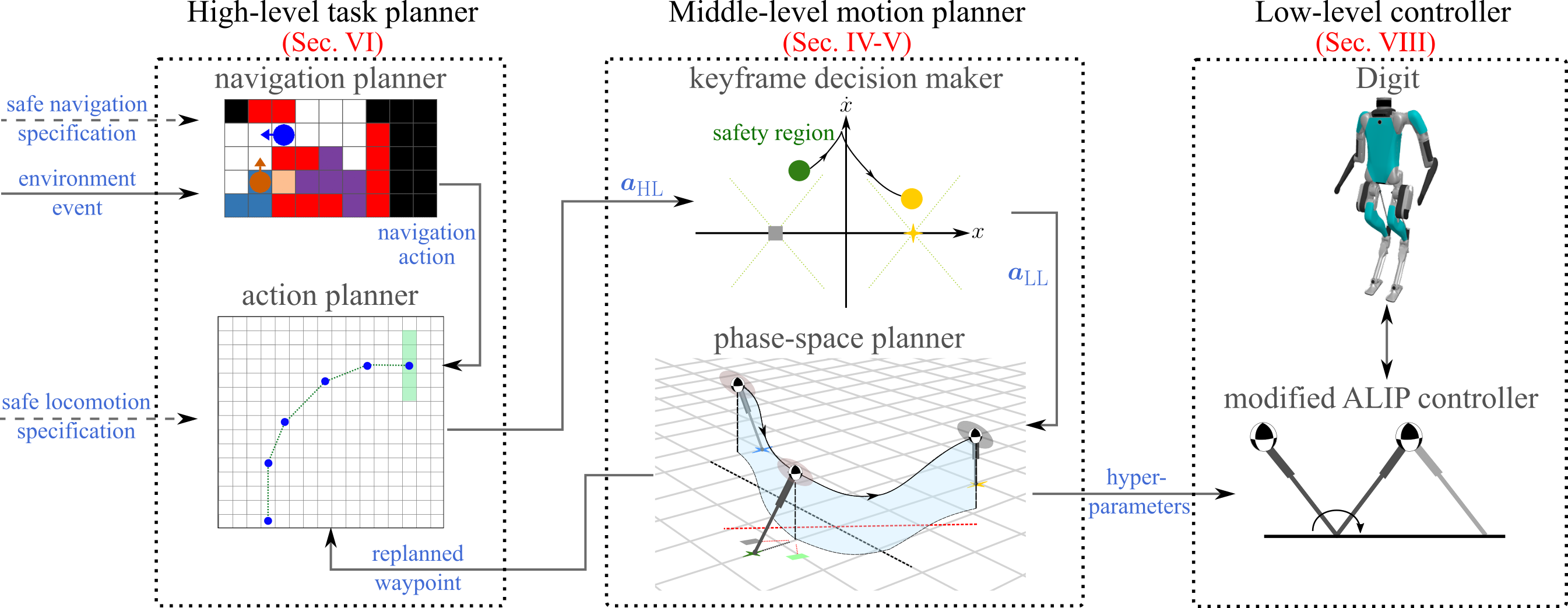}}
\caption{Block diagram of the proposed TAMP framework. The high-level task planner employs a linear temporal logic approach to synthesize locomotion actions for navigation tasks. The middle-level motion planner generates a safe motion plan based on a ROM. Safe motion plan's hyperparameters are sent to the modified ALIP controller to generate a stepping location that tracks the desired motion plan, based on the measured robot state. The desired foot placement is then achieved through a passivity-based controller on Digit. The high-level task planner and the middle-level motion planner are integrated in an \textit{online} fashion as shown by the solid black arrows. The dashed arrows represent \textit{offline} computations.}
\label{fig:frame}
\vspace{-0.15in}
\end{figure*}

This study proposes a hierarchically integrated task and motion planning (TAMP) framework as shown in Fig.~\ref{fig:frame} and provides multi-level formal safety guarantees on dynamic locomotion and navigation in dynamic and partially observable environments as shown in Fig.~\ref{fig:sim}. Guaranteeing safe navigation of legged robots in the presence of possibly adversarial obstacles becomes particularly challenging in partially observable environments. The range of a robot's sensor and occlusion caused by static obstacles give the adversarial agent a strategic advantage when trying to falsify the bipedal robot's safety guarantees by moving through non-visible regions of the environment. Our work is motivated by the surveillance game literature \cite{bharadwaj2018synthesis} to track possible non-visible dynamic obstacle locations via belief space planning. Belief space planning allows us to model the set of possible obstacle locations and track how this set evolves, guaranteeing collision avoidance in a larger set of environments.

Our framework takes safety into account in each layer of the hierarchical structure, and in how these layers are interconnected to achieve simultaneous safe locomotion and navigation. The high-level linear-temporal-logic (LTL)-based task planner incorporates reduced-order-model (ROM)-based dynamics constraints into the safety specifications, thus guaranteeing safe execution of the high-level actions in the underlying motion planners. The middle-level motion planner also employs a ROM-based planner called phase-space planning (PSP). This PSP can be seamlessly integrated with the abstracted high-level symbolic planner due to its hybrid planning nature. We integrate the task and the motion planners for \textit{online execution}, even in the presence of external perturbations, such as center-of-mass (CoM) velocity jumps caused by external forces. At the low level, we build a foot placement controller based on the angular-momentum linear inverted pendulum model (ALIP)~\cite{Gong2022AngularMomentum}, with a few critical modifications for phase-space plans. Geometric-based inverse kinematics functions design full-body reference trajectories using hyperparameters from the middle-level phase-space plan. An ankle-actuated passivity-based controller~\cite{sadeghian2017passivity} tracks the full-body reference trajectory. We are able to regulate the full-body motion to emulate the ROM and minimize tracking errors in foot placements and CoM velocities on a bipedal robot Digit \cite{agility}.

Robustness at the motion planning layer is of key importance, as continuous perturbations (e.g., CoM perturbations) can be naturally handled at this layer \cite{heim2019beyond}, unlike in the discretized high-level task planner which is unaware of locomotion dynamics. To this end, we formulate the locomotion gait in the lens of controllable regions \cite{zaytsev2018boundaries} and sequential composition \cite{burridge1999sequential} where we sequentially compose controllable regions to robustly complete a walking step. We employ ROM-based backward reachability analysis to compute robust controllable regions and synthesize appropriate controllers to safely reach the targeted state.

The main contributions of this study are as follows:
\begin{itemize}


\item Design a hierarchically integrated planning framework that provides formal safety guarantees simultaneously for the high-level task planner and middle-level ROM-based motion planner, which enables safe locomotion and navigation involving steering walking.

\item Design safe sequential composition of controllable regions for robust locomotion in the presence of perturbations, and sampling-based keyframe decision maker for accurate waypoint tracking to facilitate middle-level navigation safety.
\item Synthesize an LTL-based reactive navigation game for safe legged navigation and employ a belief abstraction method to expand navigation decisions in partially observable environments.
\item Experimental evaluation of the proposed framework on a bipedal robot Digit to navigate safely in a complex environment with dynamic obstacles.
\end{itemize}

A conference version of the work presented in this paper was published in \cite{warnke2020towards}. The work presented here extends the middle-level motion planner's safety and robustness against external CoM perturbations through formulating our previously introduced keyframe PSP scheme in the lens of controllable regions, safe sequential composition and reachability analysis. We also introduce a sampling-based keyframe decision maker to replace the heuristic-based keyframe decision maker in the conference version for accurate high-level waypoint tracking. From the high-level task planner, we present non-deterministic LTL transitions to facilitate online replanning capability of the high-level waypoint, as well as joint belief abstractions for efficient estimation of multiple dynamic obstacles' locations.  Finally, we validate the feasibility of ROM-based locomotion models on a bipedal robot Digit\footnote{In this paper, we use Cassie and Digit interchangeably given their similar leg kinematics and dynamics, and we do not consider the upper body dynamics of the Digit robot.} \cite{agility} with 28 degrees of freedom.

This paper is outlined as follows. Sec.~\ref{sec:related_work} is a literature review of related work.
Sec.~\ref{sec:prelim} introduces the ROM-based locomotion planning and keyframe definitions. Then safety theorems for locomotion planning and reachability-based analysis for robustness against perturbations are introduced in Sec.~\ref{sec:motion planner}. In Sec.~\ref{sec:Tracking}, we introduce our sampling-based keyframe decision maker algorithm. 
Sec.~\ref{sec:task planner} outlines our LTL-based high-level task planner which guarantees safe navigation in a partially observable environment. In Sec.~\ref{sec:recoverbility}, we evaluate the performance of the proposed recoverability strategy. Low-level controllers and hardware implementation details are in Sec.~\ref{sec:hardware}. The results of our integrated framework are shown in Sec.~\ref{sec:results}. We discuss the limitations in Sec.~\ref{sec:discussion} and conclude in Sec.~\ref{sec:conclusion}. 

\section{Related Work}
\label{sec:related_work}

Motion planning in complex environments has been extensively studied, with a spectrum of approaches in the literature \cite{chinchali2019multi, teng2021toward, alwala2020joint, Kulgod2020LTL, cao2022leveraging}. Reactive methods for motion planning with formal guarantees are widely studied with the methods of safety barrier certificates \cite{certificate_Wang2017}, artificial potential functions \cite{rimon1990exact}, and more recent extensions \cite{vasilopoulos2020reactive, huang2021efficient}. While the work in \cite{vasilopoulos2020reactive} provides convergence guarantees and obstacle avoidance 
in geometrically complicated unknown environments, 
it is restricted to static obstacles and has only been demonstrated on a fully actuated particle, or a simple unicycle model for a quadruped. Whereas the framework we present here is able to generate safe locomotion plans reacting to environmental events that include multiple, possibly adversarial, dynamic obstacles with formal guarantees on task completion and safety. Moreover, our framework is validated on a bipedal robot Digit \cite{agility}.

Numerous bipedal motion planning works are based on pendulum models. Classic approaches---such as zero moment point \cite{Kajita2003}, divergent component of motion \cite{DCM}, capture point \cite{pratt2006capture, koolen2012capturability, zaytsev2018boundaries}---have been extensively studied. The work in \cite{zaytsev2018boundaries} is closely related to our work, as we use controllable regions and viability theory \cite{aubin2011viability} to guarantee safety and to achieve non-periodic walking gaits. A majority of existing works have focused on locomotion safety \cite{bohorquez2016safe, pajon2019safe, motahar2016composing, huang2021efficient, li2021vision, kuindersma2016optimization}, but high-level task planning has been largely ignored. The work in \cite{huang2021efficient} introduces a variation on the non-holonomic differential-drive wheeled robot model to include the capabilities and limitations of bipedal robots. This study demonstrates remarkable safe navigation and locomotion in a variety of static environments with uneven terrains; however, the navigation safety relies solely on the ability of the periodic-gait controller \cite{Gong2022AngularMomentum} to track the velocity commands outputted from the proposed omnidirectional control Lyapunov function (CLF) based on a wheeled robot model. Similarly, the work in \cite{li2021vision} relies on a gait library to generate foot placements which are constrained based on the robot dynamics and kinematics limits. The authors in \cite{li2021vision} ensure navigation safety by generating a path that maintains a safe distance from static obstacles.

In our work, we attempt to differentiate between navigation and locomotion safety. The former focuses on designing robot navigation paths to avoid obstacle collisions, and the latter relates to motion design during individual walking steps to maintain balance. We claim that, specifically in regards to locomotion, the foot placement needs to be designed properly to achieve both navigation and locomotion safety, while solely relying on a ROM-based planner to select foot placements can only achieve locomotion safety.

Model predictive control (MPC), as a well-studied online method for locomotion motion planning, uses models with different complexities---such as linear inverted pendulum model \cite{MPC_LIPM}, single rigid body model \cite{Ding_MPC_SRBM}, and centroidal model \cite{Centroidal_MPC}---to promptly update motions for a certain time horizon. Recent works use MPC as a foot placement planner for terrain adaption~\cite{gibson2022terrain} or obstacle avoidance through control barrier functions (CBFs)~\cite{teng2021toward, narkhede2022sequential}. These works, in a sense, separate navigation and balancing safety, similar to the principles we target in this paper. However, the ROM-based MPC methods~\cite{gibson2022terrain,teng2021toward, narkhede2022sequential} lack the integration of a high-level task planner. Our high-level planner runs in an MPC fashion; it interacts with the environment as a one-step-horizon MPC and provides safety guarantees for both navigation and locomotion tasks. 

\begin{figure*}[t]
\centerline{\includegraphics[width=.8\textwidth]{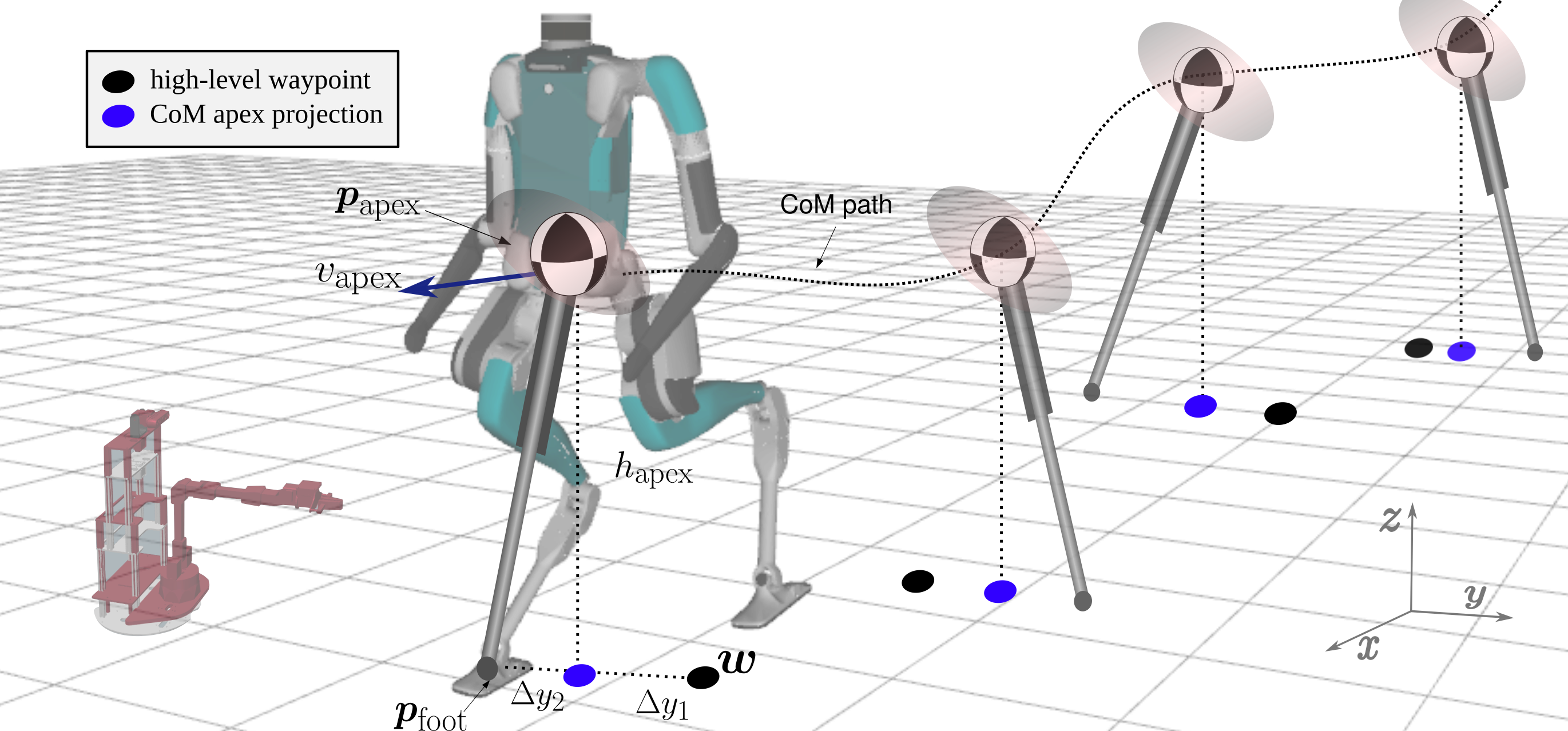}}
\caption{Reduced-order modeling of our Digit robot as a 3D prismatic inverted pendulum model with all of its mass concentrated on its CoM and a telescopic leg to comply to the varying CoM height. The CoM motion follows a parameterized CoM path depending on keyframe states. $\Delta y_1$ is the relative lateral distance between lateral CoM apex position and the high-level waypoint $\boldsymbol{w}$, and $\Delta y_2$ is the lateral distance between the CoM lateral apex position and the lateral foot placement.}
\label{fig:notation}
\vspace{-0.15in}
\end{figure*}

Formal synthesis methods have been well established to guarantee high-level robot behaviors in dynamic environments \cite{kaelbling2013integrated, kress2009temporal, fainekos2005temporal}. Collision-free navigation in the presence of dynamic obstacles has been achieved via multiple approaches such as local collision avoidance controllers in \cite{decastro2018collision}, incrementally expanding a motion tree in sampling-based approaches \cite{plaku2016motion}, and Velocity Obstacle Sets generated by obstacle reachability analysis in \cite{wu2012guaranteed}. Collision avoidance and task completion become more challenging to formally guarantee when the environment is only partially observable as such an environment has a strategic advantage in being adversarial. Navigating through partially known maps with performance guarantees has been achieved through exploring \cite{Exploring_Partially_Known}, updating the discrete abstraction, and re-synthesizing a controller at runtime in \cite{Temporal_hybrid_systems}. To avoid the computational costs of online re-synthesis, others have proposed patching a modified local controller into an existing global controller when unmodeled non-reachable cells, i.e. static obstacles, are discovered at runtime \cite{livingston2012backtracking, livingston2013patching}. The authors in \cite{Temporal_hybrid_systems} have proposed a satisfaction metric to meet the specification as closely as possible when run-time discovered environment constraints render the specification unsatisfiable. 

Partial observability and hierarchical planning are addressed recently in \cite{rosolia2022unified}, where the authors demonstrate safe navigation and task planning using high-level LTL task planning, two MPC problems at the middle level, and CBF-CLF tracking controller at the low level. The work in \cite{rosolia2022unified} leverages mixed observable Markov decision processes to model the system-environment interaction in a partially-observable environment, i.e., some discrete regions in the environment are not known \textit{a priori} to be traversable. This approach above is better suited for guaranteeing successful navigation and collision avoidance in environments that are uncertain only with respect to static obstacles as they can not reason about when and where a dynamic obstacle may appear. In our work, the environment is partially observable when static obstacles occlude the robot's view of certain regions in the environment, thus guaranteeing collision avoidance with dynamic obstacles becomes a challenging problem. 

Collision avoidance with dynamic obstacles in partially observable environments has been achieved through approaches such as partially observable Markov decision processes (POMDPs) \cite{ragi2013uav}, probabilistic velocity obstacle modeling \cite{fulgenzi2007dynamic}, and object occlusion cost metrics \cite{chung2009safe}. The authors in \cite{bouraine2012provably} guarantee passive motion safety by avoiding braking Inevitable Collision States (ICS) at all times via a braking ICS-checking algorithm. While these solutions provide collision avoidance guarantees, they assume dynamic obstacles could appear at any time and result in an overly conservative strategy. Our method investigates belief-space planning to provide the controller with additional information on when and where dynamic obstacles may appear in the robot's visible range to inform the synthesized strategy if navigation actions are guaranteed to be safe, even when static obstacles occlude the robot's view of adjacent environment locations. We have devised a variant of the approach in \cite{bharadwaj2018synthesis} to explicitly track a belief of which non-visible environment locations are obstacle free, reducing the conservativeness of a guaranteed collision-free strategy. This belief tracking method is then integrated into our hierarchical TAMP framework.

For whole-body joint trajectory design and low-level control, many approaches have been put forward in recent years: Bayesian optimization~\cite{yang2022bayesian}, sum-of-squares optimization~\cite{smit2019walking}, multi-layered CBFs and MPC~\cite{grandia2021multi}, CBF with CLF \cite{CBF_CLF_Ames}, data-driven step planner~\cite{DataS2S_Dai}, and reinforcement learning~\cite{li2021reinforcement, siekmann2021blind} to name a few. Recently, Grizzle's group at Michigan proposed an angular-momentum linear inverted pendulum model (ALIP)~\cite{Gong2022AngularMomentum}, which uses the angular momentum around the contact point in the robot's state. This state incorporates both linear momentum and angular momentum about the CoM, forming a more comprehensive representation that is less sensitive to internal joint-level noise and external contact impact. Hence, controlling foot placements with the ALIP model yields better velocity tracking accuracy. Our approach in this paper leverages this ALIP model with a few critical modifications to achieve high-performance, non-periodic locomotion control on our bipedal robot Digit hardware.

\section{Preliminaries}
\label{sec:prelim}
This section will introduce a phase-space planning approach \cite{zhao2012three, zhao2017robust} for CoM trajectory generation based on a ROM. Our framework is based on inverted pendulum models except for the low-level controller. Starting with a derivation of the dynamics of a Prismatic Inverted Pendulum Model (PIPM), and then define the locomotion keyframe state (a discretized feature state of our PSP approach) used as a connection between the high-level planner and the middle-level motion planner. Consequently, we define keyframe-based transitions to achieve safe locomotion. This section builds the basis for the safe locomotion planning proposed in later sections.

\subsection{Reduced-order Locomotion Planning}
\label{subsec:reduced-model}

This subsection introduces a mathematical formulation of our ROM. As shown in Fig. \ref{fig:notation}, the CoM position $\boldsymbol{p}_{\rm com} = (x, y, z)^T$ is composed of the sagittal, lateral, and vertical positions. We denote the apex CoM position as $\boldsymbol{p}_{{\rm apex}}=(x_{{\rm apex}},y_{{\rm apex}}, z_{{\rm apex}})^T$, the foot placement as $\boldsymbol{p}_{{\rm foot}}=(x_{{\rm foot}},y_{{\rm foot}}, z_{{\rm foot}})^T$, and $h_{\rm apex}$ is the relative apex CoM height with respect to the stance foot height. $v_{{\rm apex}}$ denotes the CoM velocity at $\boldsymbol{p}_{{\rm apex}}$. $\Delta y_1$ is the lateral distance between CoM and the high-level waypoint $\boldsymbol{w}$\footnote{The high-level discrete representation of the robot location.} at apex. $\Delta y_2 :=y_{{\rm apex}}-y_{{\rm foot}}$ denotes the lateral CoM-to-foot distance at apex. This parameter will be used to determine the allowable steering angle in Sec.~\ref{subsubsec:safetyprop}.

PIPM has been proposed for agile, non-periodic locomotion over rough terrain~\cite{zhao2017robust}. Here we reiterate for completeness the derivation of the centroidal momentum dynamics of this model. The single contact case using the moment balance equation along with linear force equilibrium is expressed as
\begin{equation}
    (\boldsymbol{p}_{\rm com} - \boldsymbol{p}_{\rm foot}) \times (\boldsymbol{f}_{\rm com} + m\boldsymbol{g})=-\boldsymbol{\tau}_{\rm com}
    \label{eqn:moment_balance}
\end{equation}
where $\boldsymbol{\tau}_{\rm com}$ is the angular moments of the torso exerted on the CoM, and $\boldsymbol{g}$ is the gravitational vector. For nominal planning we set $\boldsymbol{\tau}_{\rm com}=0$. Formulating the dynamics in Eq. (\ref{eqn:moment_balance})  for $j^{\rm th}$ walking step as a hybrid control system
\begin{equation}
    \Ddot{\boldsymbol{p}}_{{\rm com},j} =\Phi(\boldsymbol{p}_{{\rm com},j},\boldsymbol{u}_j)= \begin{pmatrix}
\omega^2_{j}(x-x_{{\rm foot},j})\\
\omega^2_{j}(y-y_{{\rm foot},j})\\
a\omega^2_{j}(x-x_{{\rm foot},j})
\end{pmatrix}
\label{eq:centrodial_dynamics}
\end{equation}
where the asymptote slope $\omega_j = \sqrt{g/h_{{\rm apex}, j}}$. The hybrid control input is $\boldsymbol{u}_j=(\omega_{j},\boldsymbol{p}_{{\rm foot},j})$, with $\boldsymbol{p}_{{\rm foot},j}$ being the discrete input\footnote{Hereafter, we will ignore the subscript q for notation simplicity. We will instead use $\cdot_c$ and $\cdot_n$ denoting the current and next apex
, respectively.}. The CoM motion is constrained within a piece-wise linear surface parameterized by $h = a(x -x_{{\rm foot}})+h_{\rm apex}$, where $h$ denotes the CoM height from the stance foot height, the ROM becomes linear and an analytical solution exists. Detailed derivations are elaborated in Appendix~\ref{appendix1}.

\noindent\textbf{Summary of Phase-space Planning:}
In PSP, the sagittal planning takes precedence over the lateral planning. The decisions for the planning algorithm are primarily made in the sagittal phase-space, such as step length and CoM apex velocity, where we propagate the dynamics forward from the current apex state and backward from the next apex state until the two phase-space trajectories intersect. The intersection state defines the foot stance switching instant. On the other hand, the lateral phase-space parameters are searched for to adhere to the sagittal phase-space plan and have consistent timings between the sagittal and lateral plans. In this paper, we build on our previous PSP work~\cite{zhao2017robust,warnke2020towards} to derive safety criteria for sagittal planning in order to achieve successful transitions between keyframe states in the presence of perturbations in Sec.~\ref{sec:motion planner}. Moreover, we employ a sampling algorithm based on the lateral apex states to select the next sagittal apex velocity that allows the lateral dynamics to comply to high-level waypoint tracking in Sec.~\ref{sec:Tracking}.

\subsection{Locomotion Keyframe for 3D Navigation}

PSP uses keyframe states for  non-periodic dynamic locomotion planning\cite{zhao2017robust}. Our study generalizes the keyframe definition in our previous work by introducing diverse navigation actions in 3D environments.

\begin{defn}[Locomotion keyframe state for 3D environment navigation]\label{def:keyframe}
A keyframe state of our ROM is defined as $\boldsymbol{k} = (d, \Delta\theta, \Delta z_{\rm foot}$,$ 
v_{\rm apex}, z_{\rm apex}$) $ \in \mathcal{K}$, where 
\begin{itemize}
    \item $d := x_{{\rm apex},n}-x_{{\rm apex},c}$ is the walking step length\footnote{In straight walking $d$ represents the step length. However, during steering walking $d$ is adjusted to reach the next waypoint on the new local coordinate.};
    \item $\Delta \theta:=\theta_{{\rm apex},n}-\theta_{{\rm apex},c}$ is the heading angle change at two consecutive CoM apex states;
    \item $\Delta z_{\rm foot} := z_{{\rm foot},n}-z_{{\rm foot},c}$ is the height change for successive foot placements;
    \item $v_{\rm apex}$ is the CoM sagittal apex velocity;
    \item $z_{\rm apex}$ is the global CoM height at apex.
\end{itemize}

\end{defn}
The keyframe state above can be divided into two action sets: a high-level (HL) action ($\boldsymbol{a}_{\rm HL}$) and a low-level (LL) action ($\boldsymbol{a}_{\rm LL}$). The HL action includes $\boldsymbol{a}_{\rm HL} = (d, \Delta\theta, \Delta z_{\rm foot}
)\in \mathcal{A}_{\rm HL}$, which is determined by the navigation policy to be designed in the task planner. The parameters $d$, $\Delta\theta$, and $\Delta z_{\rm foot}$ are expressed in the Cartesian space as the high-level waypoints $\boldsymbol{w}$. On the other hand, the LL action is $\boldsymbol{a}_{\rm LL} = (v_{\rm apex}, z_{\rm apex}) \in \mathcal{A}_{\rm LL}$, which is determined in the middle-level motion planner. The keyframe parameters are sent from the high-level task planner to the middle-level motion planner \textit{online} as shown in Fig.~\ref{fig:frame}.

\subsection{Keyframe Transition}
We now aim to formulate locomotion transition definitions in terms of the locomotion keyframe state $\boldsymbol{k}$ and describe the connection between the discretized keyframe state and the continuous dynamics of our reduced-order system introduced in Sec.~\ref{subsec:reduced-model}. We will first define locomotion safety.
\begin{figure}[t]
\centerline{\includegraphics[width=.4\textwidth]{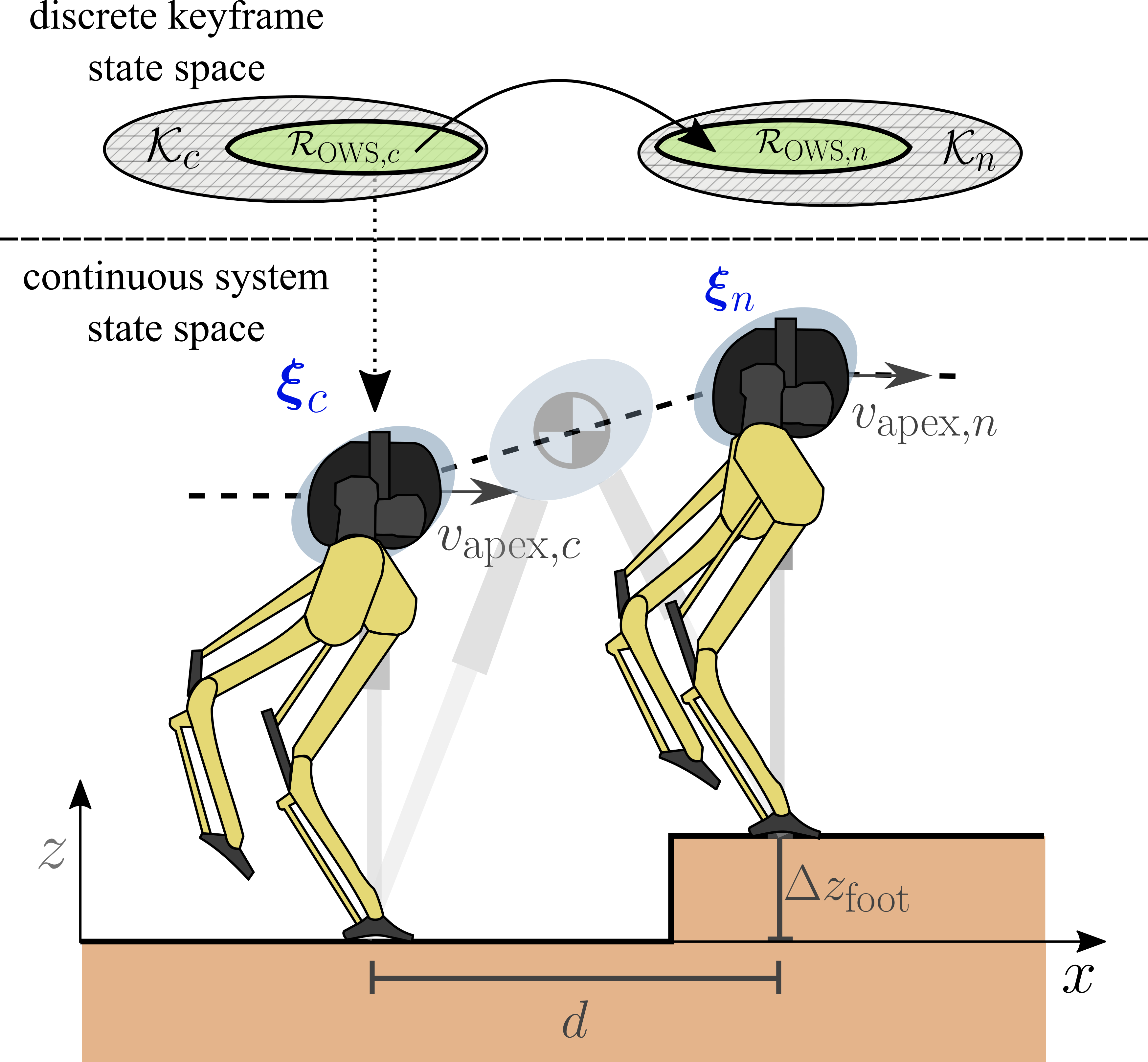}}
\caption{Sagittal system state transition for One Walking Step (OWS) without heading angle change. OWS is shown as the transition between two consecutive apex states, $\boldsymbol{\xi}_{c}$ and $\boldsymbol{\xi}_{n}$. The system state transition in Def.~\ref{def:transition_map} is shown as the projection of keyframe state onto the system state $\boldsymbol{\xi}_{c}$, and $d$ and $\Delta z_{\rm foot}$ are used to select the next foot placement $\boldsymbol{p}_{{\rm foot},n}$ in the hybrid control input $\boldsymbol{u}$. In the discrete keyframe state space, we show the transition between two consecutive keyframe states, where $\mathcal{R}_{{\rm OWS},c}$ is the set of current viable keyframe states that allows a successful system state transition to the next viable keyframe state set $\mathcal{R}_{{\rm OWS},n}$.}
\label{fig:system_state_transition}
\vspace{-0.15in}
\end{figure}
\begin{defn}[Locomotion safety]
Safety for a locomotion process is defined as a formal guarantee that the robot maintains balance dynamically, i.e., the CoM state in phase-space staying within the desired quadrant\footnote{The safe regions are shown in Fig.~\ref{fig:steering_safety} and further explained in Sec.~\ref{subsubsec:safetyprop}.}, while transitioning between consecutive locomotion keyframe states $\boldsymbol{k} \in \mathcal{K}$.
\label{def:balance_saf}
\end{defn}
Note that, the keyframe state $\boldsymbol{k}$ includes high-level actions $\boldsymbol{a}_{\rm HL}$ so the control is implicit in the \textit{Locomotion Safety}. Based on our keyframe definition in Def.~\ref{def:keyframe}, we define One Walking Step (OWS) as the transition between two consecutive keyframe states as shown in Fig.~\ref{fig:system_state_transition}. Therefore, we define the set of viable keyframe states for OWS as follows.
\begin{defn}[Viable keyframe set for one walking step]
$\mathcal{R}_{\rm OWS}$ is the set of keyframe states $\mathcal{K}$ that results in a viable transition to the next desired keyframe state through the continuous PIPM dynamics in Eq.~(\ref{eq:centrodial_dynamics}), thus achieving locomotion safety for OWS.
\label{def:Rows}
\end{defn}

The transition between keyframes is hybrid since it includes a continuous progression of the system states under the PIPM dynamics in Eq.~(\ref{eq:centrodial_dynamics}), followed by a discrete foot contact switch. 
In this study, we aim to provide formal guarantees that the selected keyframe states are within $\mathcal{R}_{\rm OWS}$. Since quantifying $\mathcal{R}_{\rm OWS}$ is computationally intractable due to its high dimensionality, we propose a set of safety theorems to quantify the viable region when $\mathcal{R}_{\rm OWS}$ is projected onto a reduced dimensional parameter space, which is selected as 
\begin{equation*}
   \textrm{Sagittal CoM system state: } \boldsymbol{\xi}=(x,\dot{x}) \in \boldsymbol{\Xi}
\end{equation*}

The current discrete keyframe state $\boldsymbol{k}_c \in \mathcal{K}$ corresponds to (i) the continuous system state at apex ($\boldsymbol{\xi}_c$) and (ii) the PSP hybrid control input $\boldsymbol{u}$ at the CoM apex in both straight and steering walking scenarios,
where the apex state in the keyframe Def.~\ref{def:keyframe} is the system state at the CoM apex\footnote{$\boldsymbol{\xi}_c = \boldsymbol{\xi}_{{\rm apex},c}$ only when $\Delta \theta = 0$, otherwise $\boldsymbol{\xi}_c$ is a non-apex state as can be seen in Fig.~\ref{fig:steering_safety}(a).}, and the step length and step height are used to calculate $\boldsymbol{p}_{\rm foot}$ in the hybrid control input $\boldsymbol{u}$. An illustration of these variables is shown in Fig.~\ref{fig:system_state_transition}. The desired next system state is always selected to be an apex state $\boldsymbol{\xi}_n = (d, v_{{\rm apex},n})$, where $ d \in \boldsymbol{a}_{\rm HL}$ is determined by the high-level planner and $v_{{\rm apex},n}$ is determined by the keyframe decision maker as detailed in Sec.~\ref{sec:Tracking}. Therefore, we can define a system state transition.

\begin{defn}[System state transition $\boldsymbol{\xi}_n = Tr(\boldsymbol{k}_c)$]
$Tr$ is a system state transition that takes the projection of the current keyframe state onto the continuous system state at apex and hybrid control input $(\boldsymbol{\xi}_c.\boldsymbol{u})$, to the desired next system state $\boldsymbol{\xi}_n$ through PSP of the centroidal dynamics $\Phi$ in Eq.~(\ref{eq:centrodial_dynamics}).
\label{def:transition_map}
\end{defn}
In Fig.~\ref{fig:system_state_transition}, we illustrate the system state transition for OWS. By projecting $\boldsymbol{k}_c$ state onto the $\boldsymbol{\xi}_c$, the centroidal dynamics in Eq.~(\ref{eq:centrodial_dynamics}) allows the system state to reach $\boldsymbol{\xi}_n$ based on $\boldsymbol{u}$.

\begin{figure}[t]
\centerline{\includegraphics[width=.3\textwidth]{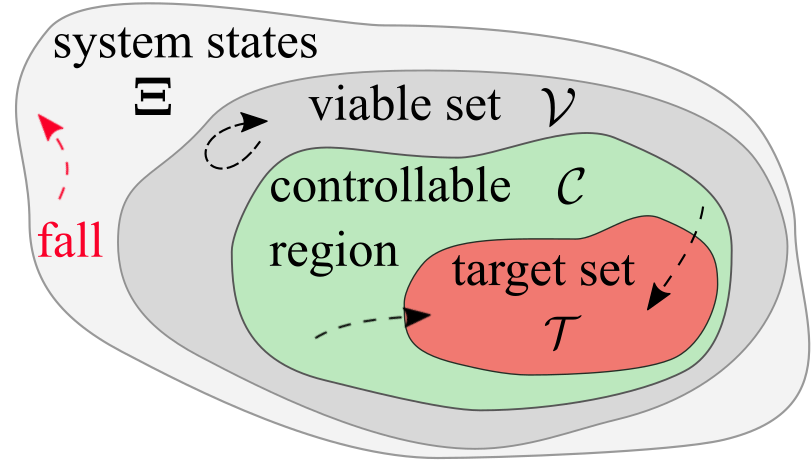}}
\caption{Viable sets and controllable regions for the set of system states $\boldsymbol{\Xi}$. Any state within the Viable set $\mathcal{V}$ (dark gray region), is guaranteed to remain inside $\mathcal{V}$ in finite time thus avoiding a fall. While any state within the controllable region set $\mathcal{C}$ (green region) is guaranteed to reach the target set $\mathcal{T}$ (red region) in finite time given an appropriate control input. The red trajectory indicates an initial state that results in a fall. In general, the viable set $\mathcal{V}$ is computationally difficult to estimate (e.g., for the states composing an OWS but not reaching $\mathcal{T}$), and thus this study focuses on computing the controllable region $\mathcal{C}$ for safe sequential composition.}
\label{fig:abs_viability}
\vspace{-0.05in}
\end{figure}

\begin{figure*}[th]
\centerline{\includegraphics[width=0.97\textwidth]{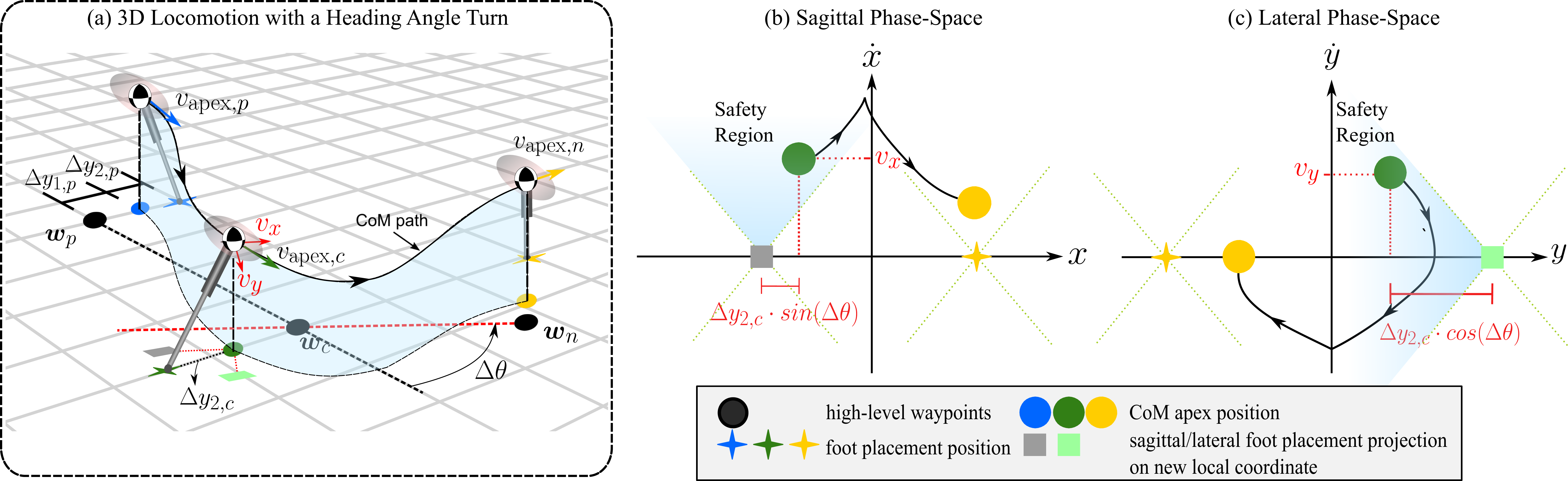}}
\caption{Phase-space safety region for steering walking: (a) shows three consecutive keyframes with a heading angle change ($\Delta \theta$) between the current keyframe and the next keyframe. The CoM trajectory and its projection on the sagittal-lateral plane are represented by the blue surface. The direction change introduces a new local coordinate, where the dashed black line is the sagittal coordinate before the turn, and the dashed red line is the sagittal coordinate after the turn. Subfigures (b) and (c) show the sagittal and lateral phase-space plots respectively, both satisfying the safety criteria proposed in Theorem \ref{thm:steering1}. The CoM apex state in the original coordinate becomes non-apex in the new coordinate (due to the coordinate change). The subscripts $p$, $c$, and $n$ denote the previous, current, and next walking steps, respectively.}
\label{fig:steering_safety}
\vspace{-0.15in}
\end{figure*}

\section{Safe Locomotion Planning}
\label{sec:motion planner}
This section will propose a set of safety theorems based on PSP for the ROM that allows us to select a safe next keyframe state under nominal conditions. Then in  Sec.~\ref{subsubsec:capturebasin} we define and compute controllable regions under bounded state disturbance by reachability analysis for a set of keyframe transitions that satisfy the safety theorems introduced in Sec.~\ref{subsubsec:safetyprop}. 
Within this controllable region, any state is guaranteed to reach a target set ($\mathcal{T}$) in finite time given a feasible control sequence as shown in Fig.~\ref{fig:abs_viability}. 
Accordingly, we sequentially compose consecutive controllable regions through our hybrid control input to guarantee locomotion safety and correctness. 
In addition to the safety criteria defined only for sagittal locomotion, safety for lateral tracking of the high-level waypoint is equally important for 3D locomotion. Therefore, Sec.~\ref{sec:Tracking} presents a sampling-based algorithm that adjusts the sagittal phase-space plan to allow for lateral waypoint tracking.

\subsection{Locomotion Safety Criteria}
\label{subsubsec:safetyprop}

In this subsection, we propose safe locomotion criteria based on the PIPM introduced in Sec.~\ref{subsec:reduced-model}, and
provide safety constraints for the locomotion keyframe state. 

As a general principle of balancing safety, the sagittal CoM position should be able to cross the sagittal apex with a positive CoM velocity while the lateral CoM velocity should be able to reach zero lateral velocity at the next apex. Ruling out the fall situations provides us the bounds of balancing safety regions. First, we study the constraints between the apex velocities of two consecutive walking steps and propose the following theorems and corollaries. 
\begin{thm}\label{thm:straight}
For safety-guaranteed straight walking, given $d$ and $\omega$, the apex velocity for two consecutive walking steps ought to satisfy the following velocity constraint:
\begin{equation}\label{eq:straight}
    -\omega^{2}d^{2} \le \underbrace{v_{{\rm apex},n}^{2}-v_{{\rm apex},c}^{2}}_{\begin{gathered}
       \vspace{-0.1in} \text{\scriptsize apex velocity square difference} \\ \vspace{-0.1in}\text{\scriptsize for two consecutive steps}
    \end{gathered}} \le \omega^{2}d^{2}
\end{equation}
where $d^2 = (x_{{\rm apex},n}-x_{{\rm apex},c}) (x_{{\rm apex},c}+x_{{\rm apex},n}-2x_{{\rm foot},c})$. Notably, $d$ is equal to the step length in Def.~\ref{def:keyframe}, i.e., $d = x_{{\rm apex},n}-x_{{\rm apex},c}$, during a straight walking where $x_{{\rm apex},c}=x_{{\rm foot},c}$. The proof of this criterion can be seen in Appendix~\ref{appendiX:proof_prop_straight}.
\end{thm}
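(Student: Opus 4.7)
The plan is to combine an orbital-energy first integral of the sagittal PIPM dynamics with the continuity of the continuous state across the foot-switch event, and then to extract the apex-velocity difference as an explicit function of the switching position and bound it over the admissible switching range.

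Concretely, the first step is to observe that the sagittal component of Eq.~(\ref{eq:centrodial_dynamics}) is the linear ODE $\ddot{x} = \omega^{2}(x - x_{\rm foot})$, which admits the conserved orbital energy $E(x,v) = v^{2} - \omega^{2}(x - x_{\rm foot})^{2}$ along any single-stance trajectory (verified directly by differentiating in time and substituting the dynamics). Applying this invariant separately to the current step (stance at $x_{{\rm foot},c}$, evaluated at the apex $(x_{{\rm apex},c}, v_{{\rm apex},c})$ and at the switching state $(x_s, v_s)$) and to the next step (stance at $x_{{\rm foot},n}$, evaluated at $(x_{{\rm apex},n}, v_{{\rm apex},n})$ and at the same $(x_s, v_s)$) yields two equations sharing $v_s^{2}$. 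Eliminating $v_s^{2}$ expresses $v_{{\rm apex},n}^{2} - v_{{\rm apex},c}^{2}$ as a difference of two squared-distance terms in $x_s$ and the foot/apex abscissae. For straight walking, where $x_{{\rm apex},c} = x_{{\rm foot},c}$ and $x_{{\rm apex},n} = x_{{\rm foot},n}$, this collapses after a difference-of-squares factorization to the linear form $\omega^{2} d\,(2 x_s - x_{{\rm foot},c} - x_{{\rm foot},n})$.

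The critical step, and the one I expect to require the most care, is justifying the admissible range of the switching coordinate $x_s$. Locomotion safety in the sense of Def.~\ref{def:balance_saf} forbids two phase-plane failure modes: a stance switch before the CoM has crossed the current apex (which would leave the state on the pre-apex hyperbolic branch and eventually regress, violating the positive-forward-velocity requirement for the current quadrant), and a switch after the CoM would have passed the next apex (which plants the swing foot too late, so that on the next-step hyperbola the CoM starts already beyond its new apex and cannot settle into a bounded apex state). Together these constraints pin the switching abscissa to the closed interval $x_{{\rm apex},c} \le x_s \le x_{{\rm apex},n}$, i.e., the switch must occur between the two consecutive apex positions.

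Finally, since $v_{{\rm apex},n}^{2} - v_{{\rm apex},c}^{2}$ is linear (hence monotone) in $x_s$ with positive slope $2\omega^{2} d$, its extrema over this interval are attained at the endpoints. Substituting $x_s = x_{{\rm apex},n}$ collapses one of the squared-distance terms and, after applying the identity $(x_{{\rm apex},n} - x_{{\rm foot},c})^{2} - (x_{{\rm apex},c} - x_{{\rm foot},c})^{2} = (x_{{\rm apex},n} - x_{{\rm apex},c})(x_{{\rm apex},n} + x_{{\rm apex},c} - 2 x_{{\rm foot},c})$, produces the upper bound $+\omega^{2} d^{2}$; substituting $x_s = x_{{\rm apex},c}$ (together with the straight-walking identification $x_{{\rm apex},c} = x_{{\rm foot},c}$ that makes the surviving cross-terms cancel) produces the lower bound $-\omega^{2} d^{2}$. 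Monotonicity in $x_s$ then extends these endpoint values to every admissible switch, yielding $-\omega^{2} d^{2} \le v_{{\rm apex},n}^{2} - v_{{\rm apex},c}^{2} \le \omega^{2} d^{2}$ as claimed.
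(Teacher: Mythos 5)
Your proposal is correct and follows essentially the same route as the paper's proof in Appendix~\ref{appendiX:proof_prop_straight}: the orbital-energy invariant you derive is exactly the phase-space relation behind the paper's analytical switching-position formula (Appendix~\ref{appendix1}), and the decisive condition in both arguments is confining the switch to $x_{{\rm apex},c} \le x_{\rm switch} \le x_{{\rm apex},n}$. The only difference is cosmetic — you express the apex-velocity-square difference as a linear function of $x_{\rm switch}$ and evaluate at the endpoints, whereas the paper solves for $x_{\rm switch}$ in terms of that difference and rearranges the same inequality.
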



Another consideration for safety is to limit the maximum allowable velocity of the CoM. Since the maximum velocity occurs at the foot switching instant, we explicitly enforce an upper velocity bound to this switching velocity $v_{\rm switch}$ to avoid over-accelerated motions, which can be further magnified by the ground impact dynamics in the real system. Through the analytical solution of the ROM in Appendix~\ref{appendix1}, we solve for $v_{\rm switch} = \Psi(v_{{\rm apex}, c}, v_{{\rm apex}, n}, d)$. 
Therefore, we set an upper bound on $v_{\rm switch}$, i.e., $v_{\rm switch} \leq v_{\rm max}$.

Similar to Theorem~\ref{thm:straight}, $v_{\rm switch}$ provides a nonlinear relationship between sagittal apex velocities for two consecutive apex states. Combining the boundary conditions in Theorem~\ref{thm:straight} and the limit of $v_{\rm switch}$ allows us to quantify the viable region of $v_{{\rm apex},n}$ given $v_{{\rm apex},c}$, $d$, and $\omega$.

The steering case requires a more restrictive criterion. A fall will occur when the turning angle $\Delta \theta$ is too large such that $v_{{\rm apex},c}$ in the new local coordinate after the turn is out of a safety range such that either the lateral CoM velocity cannot reach zero at the next apex or the sagittal CoM can not climb over the next apex.

\begin{thm}\label{thm:steering1}
For safety-guaranteed steering walking, the current sagittal CoM apex velocity $v_{{\rm apex},c}$ in the original local coordinate should be bounded by
\begin{equation}
    \Delta y_{2,c} \cdot \omega \cdot \tan{\Delta\theta} \leq v_{{\rm apex},c} \leq \frac{\Delta y_{2,c} \cdot \omega}{\tan{\Delta\theta}}
    \vspace{0.05in}
    \label{eq:steering_prop}
\end{equation}
\end{thm}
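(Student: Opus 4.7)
The plan is to transform the CoM apex state into the post-turn local frame and then impose two independent phase-space safety conditions---one on the new sagittal channel and one on the new lateral channel---via the orbital-energy invariant of the PIPM.

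First, I would write down the CoM state at the current sagittal apex in the original local frame centered at the current stance foot: position $(0,\Delta y_{2,c})$ and velocity $(v_{{\rm apex},c},0)$, using the apex condition that both the sagittal offset and the lateral velocity vanish. Applying the planar rotation by $\Delta\theta$ that aligns the new sagittal axis with the upcoming walking direction yields, in the new frame, position $(\Delta y_{2,c}\sin\Delta\theta,\,\Delta y_{2,c}\cos\Delta\theta)$ and velocity $(v_{{\rm apex},c}\cos\Delta\theta,\,-v_{{\rm apex},c}\sin\Delta\theta)$. Figure~\ref{fig:steering_safety}(a) confirms that the old-frame apex is a non-apex state of the new frame, so the rotated state carries both a nonzero new-sagittal offset and a nonzero new-lateral velocity that must be tracked.

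Second, I would invoke the orbital-energy invariant of the linear PIPM flow $\ddot q=\omega^{2}q$, namely $E_q:=\dot q^{2}-\omega^{2}q^{2}$, applied independently to the new sagittal channel $\tilde x$ and the new lateral channel $\tilde y$ about the current stance foot. Substituting the rotated state produces
\[
E_{\tilde x}=v_{{\rm apex},c}^{2}\cos^{2}\!\Delta\theta-\omega^{2}\,\Delta y_{2,c}^{2}\sin^{2}\!\Delta\theta,
\]
\[
E_{\tilde y}=v_{{\rm apex},c}^{2}\sin^{2}\!\Delta\theta-\omega^{2}\,\Delta y_{2,c}^{2}\cos^{2}\!\Delta\theta.
\]
I would then translate the two qualitative safety requirements stated immediately before the theorem into sign conditions on these energies: the new-sagittal CoM must cross the saddle of its hyperbolic portrait with strictly positive velocity, which is equivalent to $E_{\tilde x}\ge 0$ and yields the lower bound $v_{{\rm apex},c}\ge\Delta y_{2,c}\,\omega\tan\Delta\theta$; the new-lateral CoM must remain on a bounded hyperbolic arc terminating at a zero-velocity apex on the next support side, which is equivalent to $E_{\tilde y}\le 0$ and yields the upper bound $v_{{\rm apex},c}\le\Delta y_{2,c}\,\omega/\tan\Delta\theta$. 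Combining the two produces Eq.~\eqref{eq:steering_prop}.

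The main obstacle I expect is the geometric bookkeeping in step one: justifying that the proper pivot of the coordinate rotation is the current stance foot (so that the rotated offset retains magnitude $\Delta y_{2,c}$) and that the rotated state is the correct initial condition for the new-frame decoupled PIPM flow, including the sign conventions on $\Delta\theta$ and on which lateral side the upcoming foot is placed. A secondary subtlety is justifying the direction of each inequality in step two: from the hyperbolic phase portrait of $\ddot q=\omega^{2}q$, one must argue why $E_{\tilde x}\ge 0$ is precisely the ``forward-crossing'' condition for the new sagittal channel, whereas $E_{\tilde y}\le 0$ is precisely the ``bounded lateral swing toward a zero-velocity apex'' condition, rather than the swapped pairing.
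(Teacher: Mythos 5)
Your proposal is correct and follows essentially the same route as the paper's proof in Appendix~C: rotate the apex state into the post-turn frame (giving sagittal offset $\Delta y_{2,c}\sin\Delta\theta$ with velocity $v_{{\rm apex},c}\cos\Delta\theta$, and the complementary lateral components), then require the new sagittal state to lie above the phase-space asymptote and the new lateral state to lie below it. Your orbital-energy sign conditions $E_{\tilde x}\ge 0$ and $E_{\tilde y}\le 0$ are exactly equivalent to the paper's linear asymptote inequalities $v_{{\rm apex},c}\cos\Delta\theta \ge \omega\,\Delta y_{2,c}\sin\Delta\theta$ and $v_{{\rm apex},c}\sin\Delta\theta \le \omega\,\Delta y_{2,c}\cos\Delta\theta$ after taking square roots of positive quantities, and combining them yields Eq.~(\ref{eq:steering_prop}) just as in the paper.
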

The proof of this theorem is shown in Appendix~\ref{appendiX:proof_prop_steering}. This theorem provides a bound on the heading angle change $\Delta \theta$ given the current apex velocity $v_{{\rm apex},c}$. Fig.~\ref{fig:steering_safety} shows a steering walking trajectory and phase-space plot that satisfy Theorem~\ref{thm:steering1}. Namely, the CoM location in the sagittal and lateral phase-space in the new local coordinate after the turn should not cross the asymptote line of the shaded safety region in Fig.~\ref{fig:steering_safety}. This criterion is specific to steering walking, as the heading change ($\Delta \theta$) introduces a new local frame and yields the current state $\boldsymbol{\xi}_c$ to no longer be an apex in the new coordinate. As such, it has non-apex sagittal and lateral components, i.e., $v_{y,c} \neq 0$, and $x_{\rm{apex},c} \neq x_{\rm{foot},c}$.\footnote{In this study, we use $\dot x$ and $v$ exchangeably to represent the CoM velocity.}

\begin{cor}
For steering walking in Theorem~\ref{thm:steering1}, given $d$, $\Delta \theta$, $\Delta y_{2,c}$ and $\omega$, two consecutive apex velocities ought to satisfy the following velocity constraint:
\begin{equation}\label{eq:steering1}
    -\omega^{2}d^{2} \le v_{{\rm apex},n}^{2}-(v_{{\rm apex},c}\cos{\Delta\theta})^{2} \le \omega^{2}d^2_{+}
\end{equation}
where $d^2_{+} = d^{2}+2\Delta y_{2,c}d\sin{\Delta\theta}$.
\end{cor}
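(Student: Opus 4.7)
The plan is to extend the orbital-energy argument that proves Theorem~\ref{thm:straight} (Appendix~\ref{appendiX:proof_prop_straight}) to the rotated local frame induced by the heading change $\Delta\theta$. Recall that the PIPM dynamics in Eq.~(\ref{eq:centrodial_dynamics}) conserve the sagittal orbital energy $E = \dot x^2 - \omega^2(x - x_{\rm foot})^2$ within a single support phase. In the straight case, one applies this identity across the two consecutive support phases, parameterizes the sagittal foot-switching offset $u = x_{\rm switch} - x_{{\rm foot},c}$, and obtains $v_{{\rm apex},n}^2 - v_{{\rm apex},c}^2 = \omega^2(2ud - d^2)$; the envelope $\pm\omega^2 d^2$ then follows by constraining $u \in [0,d]$. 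I will mirror this derivation in the new sagittal coordinate.

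First, I would express the current apex state in the new sagittal frame aligned with the next walking direction. Because the velocity at the current apex is purely sagittal in the old frame, its projection onto the new sagittal axis is $v_{{\rm apex},c}\cos\Delta\theta$, which is exactly the velocity that must appear on the left-hand side of the energy identity in the new frame. The current CoM position, sitting a lateral distance $\Delta y_{2,c}$ from stance foot $c$ in the old frame, acquires a new-frame sagittal offset of $\Delta y_{2,c}\sin\Delta\theta$ from foot $c$. Since by Def.~\ref{def:keyframe} the step length $d$ is measured in the new frame between the two apexes, the corresponding new-frame foot-to-foot sagittal separation is $D := d + \Delta y_{2,c}\sin\Delta\theta$.

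Next, I would apply orbital energy conservation separately over each support phase---from the current state to the switch on foot $c$, and from the switch to the next apex (where the CoM lies directly above foot $n$ by the PSP construction) on foot $n$---then eliminate $v_{\rm switch}$. Parameterizing the switch by its new-frame sagittal offset $\tilde u$ from foot $c$, a routine computation yields
\begin{equation*}
    v_{{\rm apex},n}^2 - (v_{{\rm apex},c}\cos\Delta\theta)^2 = \omega^2\bigl[\,2\tilde u\,D - D^2 - (\Delta y_{2,c}\sin\Delta\theta)^2\,\bigr].
\end{equation*}
Because the right-hand side is linear in $\tilde u$, its extrema are attained at the endpoints of the admissible range, which I would argue is $\tilde u \in [\,\Delta y_{2,c}\sin\Delta\theta,\, D\,]$: the switch cannot lie behind the new-frame projection of the current apex, nor beyond foot $n$. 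Substituting the lower endpoint collapses the bracket to $-(D - \Delta y_{2,c}\sin\Delta\theta)^2 = -d^2$, and the upper endpoint gives $D^2 - (\Delta y_{2,c}\sin\Delta\theta)^2 = d^2 + 2d\,\Delta y_{2,c}\sin\Delta\theta = d_+^2$, which are precisely the two bounds in~(\ref{eq:steering1}).

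The main obstacle I anticipate is not algebraic but geometric: pinning down the signed relationships in the rotated frame, specifically that the sagittal offset of the current apex in the new frame is indeed $+\Delta y_{2,c}\sin\Delta\theta$ under the sign conventions of Fig.~\ref{fig:steering_safety}, and that the feasible switch window is the interval above rather than, e.g., $[0,D]$ (which would violate the claimed lower bound whenever $\Delta y_{2,c}\sin\Delta\theta>0$). As a sanity check I would verify that the result reduces correctly to Theorem~\ref{thm:straight} in the $\Delta\theta\to 0$ limit, where $D\to d$, $\cos\Delta\theta\to 1$, and $d_+^2\to d^2$; this consistency would also guard against sign errors introduced by the rotation.
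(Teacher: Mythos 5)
Your proof is correct and follows essentially the route the paper intends: the corollary is stated without an explicit proof, but it is exactly the general apex-velocity bound from Appendix~\ref{appendiX:proof_prop_straight} (the constraint $x_{{\rm apex},c}\le x_{\rm switch}\le x_{{\rm apex},n}$ applied to the orbital-energy/switching-position identity) re-evaluated in the rotated local frame of Appendix~\ref{appendiX:proof_prop_steering}, where the current apex acquires a sagittal offset $\Delta y_{2,c}\sin{\Delta\theta}$ from the stance foot and a sagittal velocity $v_{{\rm apex},c}\cos{\Delta\theta}$. Your switch window $\tilde u \in [\Delta y_{2,c}\sin{\Delta\theta},\,D]$ and the resulting endpoint values $-d^{2}$ and $d^{2}_{+}$ coincide with what that specialization yields, and your $\Delta\theta\to 0$ sanity check correctly recovers Theorem~\ref{thm:straight}.
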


\begin{cor}
For steering walking in Theorem~\ref{thm:steering1}, similarly, given $d$, $\Delta \theta$, $\Delta y_{2,c}$, and $\omega$, two consecutive apex velocities ought to satisfy the following velocity constraints,
\begin{equation}\label{eq:steering2}
    -\omega^{2}d^{2} \le v_{{\rm apex},n}^{2}-(v_{{\rm apex},c}\cos{\Delta\theta})^{2} \le \omega^{2}d^2_{-}
    \vspace{0.05in}
\end{equation}
where $d^2_{-} = d^{2} - 2\Delta y_{2,c}d\sin{\Delta\theta}$. Note that, parameters $v_{{\rm apex},n}$, $d$, and $\Delta \theta$ in Eqs.~(\ref{eq:straight})-(\ref{eq:steering2}) are the keyframe states.
\end{cor}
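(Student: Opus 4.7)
The plan is to mirror the argument used for the first corollary: rotate into the post-turn local frame, invoke PIPM energy conservation across the foot switch, and optimize over admissible switch positions. The two corollaries differ only in the sign of the new-sagittal projection of the lateral stance-foot-to-CoM offset, corresponding to which side of the post-turn heading axis the current stance foot lies on.

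First I would express the current apex state in the new local coordinate whose sagittal axis is rotated by $\Delta\theta$. Since the lateral velocity vanishes at the original apex, the new-sagittal velocity becomes $v_{{\rm apex},c}\cos\Delta\theta$. The new-sagittal displacement from the current stance foot to the current CoM is the projection of the lateral offset $\Delta y_{2,c}$, namely $\delta = -\Delta y_{2,c}\sin\Delta\theta$ (opposite in sign to the configuration used in the first corollary, reflecting the mirrored geometry). Accordingly, the next stance foot—directly under the next apex in the new sagittal direction—sits at new-sagittal offset $u = d + \delta$ from the current stance foot.

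Next I would apply the PIPM energy invariant $\dot{x}^{2} - \omega^{2}(x - x_{\rm foot})^{2} = \mathrm{const}$, which follows directly from the linear ROM with closed form given in Appendix~\ref{appendix1}, on each single-support phase. Letting $s = x_{\rm switch} - x'_{{\rm foot},c}$ denote the new-sagittal distance from the current stance foot to the foot-switching point, and eliminating the switch velocity between the pre-switch invariant (with stance $x'_{{\rm foot},c}$) and the post-switch invariant (with stance $x'_{{\rm foot},n} = x'_{{\rm apex},n}$) yields
\begin{equation*}
v_{{\rm apex},n}^{2} - (v_{{\rm apex},c}\cos\Delta\theta)^{2} = \omega^{2}\bigl[\,u(2s - u) - \delta^{2}\,\bigr].
\end{equation*}

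Finally, I would bound $s$ using the PSP geometry: the forward phase-space orbit from the current state is entered only as $x$ advances past $\delta$, and the backward orbit from the next apex lies below $x = u$, so an admissible intersection satisfies $s \in [\delta, u]$. Evaluating at $s = u$ gives $u(2s - u) - \delta^{2} = u^{2} - \delta^{2} = d^{2} + 2d\delta = d^{2} - 2\Delta y_{2,c} d \sin\Delta\theta = d_{-}^{2}$, yielding the upper bound $\omega^{2} d_{-}^{2}$; evaluating at $s = \delta$ gives $u(2\delta - u) - \delta^{2} = -d^{2}$ after the $2d\delta$ and $\delta^{2}$ terms cancel, yielding the matching lower bound $-\omega^{2} d^{2}$. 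The principal obstacle is pinning down the correct sign of $\delta$ and the admissibility interval for $s$ directly from the geometry in Fig.~\ref{fig:steering_safety} so that the case split between $d_{+}^{2}$ and $d_{-}^{2}$ is unambiguous; the subsequent algebra is routine once the local frame and sign convention are fixed.
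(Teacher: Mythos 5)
Your derivation is correct and matches the paper's intended argument: the corollary is stated without an explicit proof, but it follows from the switching-position requirement $x_{{\rm apex},c} \le x_{\rm switch} \le x_{{\rm apex},n}$ used in the proof of Theorem~\ref{thm:straight} (Appendix~\ref{appendiX:proof_prop_straight}), re-applied in the post-turn local frame, which is exactly your energy-invariant computation with the sagittal offset $\delta = -\Delta y_{2,c}\sin\Delta\theta$ and the endpoint evaluations giving $-\omega^{2}d^{2}$ and $\omega^{2}d_{-}^{2}$. Your observation that the sign of $\delta$ is the only difference from the companion corollary with $d_{+}^{2}$ (i.e., which side of the new heading axis the stance foot lies on) is the correct reading of the geometry in Fig.~\ref{fig:steering_safety}.
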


The aforementioned safety theorems provide quantifiable bounds on the next keyframe selection that leads to viable transitions under the governance of nominal disturbance-free PIPM dynamics. The next section will focus on definitions based on controllable regions and sequential composition to provide guarantees on the safe progression of the continuous system states $\boldsymbol{\xi}$ adhering to Theorems~\ref{thm:straight}-\ref{thm:steering1} under bounded disturbances $\boldsymbol{\tilde \Xi}$.

\subsection{Controllable Regions and Sequential Composition}
\label{subsubsec:capturebasin}

First, let us decompose OWS into two half steps at the instant when the hybrid control input 
$\boldsymbol{u}$ switches. Therefore, the First Half Walking Step (FHWS) starts from $\boldsymbol{\xi}_c$ until the foot contact switching state $\boldsymbol{\xi}_{\rm switch}$, the Second Half Walking Step (SHWS) starts with $\boldsymbol{\xi}_{\rm switch}$ until $\boldsymbol{\xi}_n$. Note that $t_{\rm FHWS}$ and $t_{\rm SHWS}$ are not always equal in non-periodic walking. We start by defining controllable regions of FHWS and SHWS.

\begin{defn}[Controllable region of FHWS]  Given $\boldsymbol{\tilde \Xi}_{\rm synthesis}$, $\mathcal{U}$ and $\mathcal{T}_{\rm switch}$, a controllable region of FHWS is defined as $\mathcal{C}_{\rm FHWS} \coloneqq \{\boldsymbol{\xi} | \boldsymbol{\dot \xi}(t) = \Phi(\boldsymbol{\xi}(t)+\boldsymbol{\tilde \xi}(t), \boldsymbol{u}(t)), \exists \boldsymbol{u}(t) \in \mathcal{U}, \textnormal{such that} \; \exists  \boldsymbol{\xi}(t_{\rm FHWS}) \in \mathcal{T}_{\rm switch}, \; t_{\rm FHWS}\;  \textnormal{is finite} \}$, where $\boldsymbol{\tilde \xi}(t) \in \boldsymbol{\tilde \Xi}_{\rm synthesis}$ represents a bounded state disturbance.
\label{def:CB_FHWS}
\end{defn}
\begin{defn}[Controllable region of SHWS] Given $\boldsymbol{\tilde \Xi}_{\rm synthesis}$, $\mathcal{U}$ and $\mathcal{T}_{\rm OWS}$, a controllable region of SHWS is defined as $\mathcal{C}_{\rm SHWS} \coloneqq \{\boldsymbol{\xi} | \boldsymbol{\dot \xi}(t) = \Phi(\boldsymbol{\xi}(t)+\boldsymbol{\tilde \xi}(t), \boldsymbol{u}(t)), \exists \boldsymbol{u}(t) \in \mathcal{U}, \textnormal{such that} \; \exists  \boldsymbol{\xi}(t_{\rm SHWS}) \in \mathcal{T}_{\rm OWS}, \; t_{\rm SHWS}\;  \textnormal{is finite} \}$, where $\boldsymbol{\tilde \xi}(t) \in \boldsymbol{\tilde \Xi}_{\rm synthesis}$ represents a bounded state disturbance.
\label{def:CB_SHWS}
\end{defn}
Numerical computation of the controllable region for OWS is achievable given $\boldsymbol{\Xi}_c$, $\mathcal{T}_{\rm OWS}$, $\mathcal{U}$, and a bounded disturbance $\boldsymbol{\tilde \Xi}$ through backward dynamic propagation using robustly complete control synthesis (ROCS) \cite{li2018rocs}. Unlike other reachability analysis tools~\cite{holmes2020reachable, bansal2017hamilton}, ROCS is a partition-based control synthesis tool for nonlinear systems. Over-approximation of the controllable region is computed through interval-valued functions of the PIPM dynamics. All reachable states after a predefined time step from any state in $\boldsymbol{\Xi}_c$ are captured in the output. For more details please refer to ~\cite{liu2017robust, jaulin:hal-00845131, zhao2018reactive}. Given the backward propagation nature of ROCS, $\mathcal{C}_{\rm SHWS}$ is computed first starting from $\mathcal{T}_{\rm OWS}$, which is selected to be the set of desired $\boldsymbol{\xi}_n$ at the next apex. Then we set $\mathcal{T}_{\rm switch}$ to be all the states of $\mathcal{C}_{\rm SHWS}$ that are within the tangent manifolds of FHWS, where the tangent manifolds represent the nominal phase-space trajectory given $v_{{\rm apex, min}}$, $v_{{\rm apex, max}}$ and $x_{{\rm foot},c}$ (see Fig.~\ref{fig:PSP_viability}) \cite{zhao2017robust,zhao2018reactive}. Then we compute $\mathcal{C}_{\rm FHWS}$ with $\mathcal{T}_{\rm switch}$ being the target set. Note that $\mathcal{T}_{\rm switch}$ represents the set of system states that a foot contact transition can occur at. Moreover, ROCS also allows us to synthesize a controller $\boldsymbol{u}(t) \in \mathcal{U}$ that guarantees that the system state $\boldsymbol{\xi}$ reaches the target set $\mathcal{T}_{\rm OWS}$ as long as the system state remains within the controllable region. The details of the controllable regions in Defs.~\ref{def:CB_FHWS}-\ref{def:CB_SHWS} are shown in Fig.~\ref{fig:PSP_viability}.

\begin{figure}[t]
\centerline{\includegraphics[width=.45\textwidth]{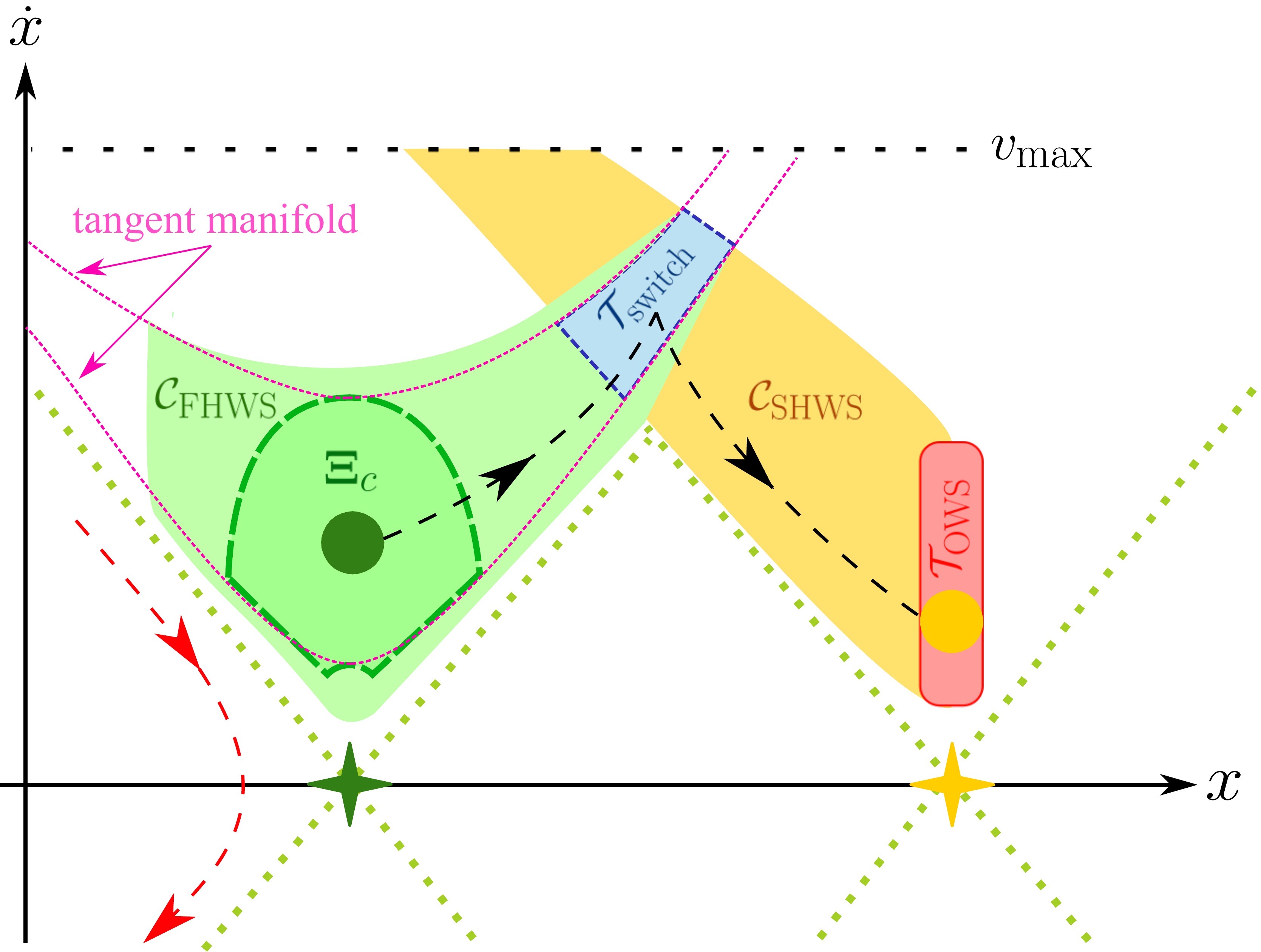}}
\caption{Projection of the controllable regions for OWS on the sagittal phase-space based on Defs.~\ref{def:CB_FHWS}-\ref{def:CB_SHWS}. Given $\boldsymbol{\xi}_c$ (green circle) $\in \boldsymbol{\Xi}_c$ (green dashed region) $\subset \mathcal{C}_{\rm FHWS}$ (green region), the continuous system state is guaranteed to reach $\mathcal{T}_{\rm switch}$ (blue region). $\mathcal{T}_{\rm switch}$ is bounded by pink tangent manifolds and $\mathcal{C}_{\rm SHWS}$. Switching from the current foot stance to the next foot stance (green and yellow stars respectively) at $\mathcal{T}_{\rm switch}$ guarantees that the continuous system state will reach $\boldsymbol{\xi}_n$ (yellow circle) $\in \mathcal{T}_{\rm OWS}$ (red region). The red dashed arrow shows a system state outside of the controllable region and results in a fall.}
\label{fig:PSP_viability}
\vspace{-0.15in}
\end{figure}

Sequentially composing the controllable regions defined in Defs.~\ref{def:CB_FHWS}-\ref{def:CB_SHWS}, i.e, $\mathcal{T}_{\rm switch} \neq \emptyset$, affords a guarantee on safe task completion for OWS. Controllable regions have a ``funnel"-type geometry that is guaranteed to reach a target set, and correct switching between such funnels ultimately leads to the target set  $\mathcal{T}_{\rm OWS}$ as seen in Fig.~\ref{fig:funnels}. A projection of the composition of the controllable regions on the sagittal phase-space satisfying Theorem~\ref{thm:sequential_Comp} can be seen in Fig.~\ref{fig:PSP_viability}.

The controllable regions in Def.~\ref{def:CB_FHWS}-\ref{def:CB_SHWS} depend on not only the state of the system but also the high-level keyframe state $\boldsymbol{k}$ as it determines the target set $\mathcal{T}$ as well as the foot placement in the hybrid control input $\boldsymbol{p}_{\rm foot}\in \boldsymbol{u}$. This further provides safety guarantees within our layered framework.

\begin{figure}[t]
\centerline{\includegraphics[width=.4\textwidth]{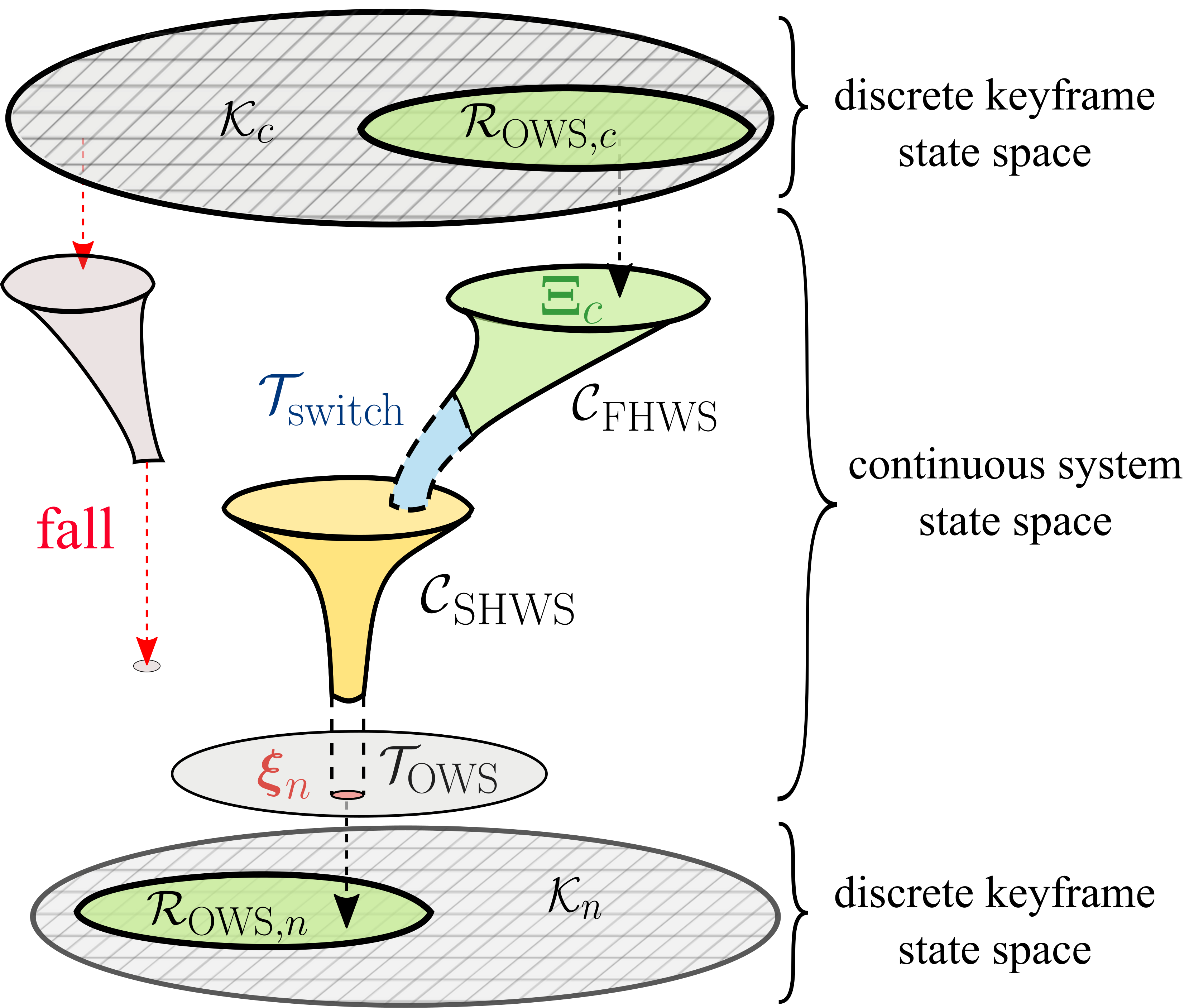}}
\caption{A conceptual illustration of the keyframe transition between two consecutive keyframes, where starting from the current viable keyframe state set $\mathcal{R}_{{\rm OWS},c}$, can be projected to the continuous state space and the dynamics are then modeled as the hybrid control system in Eq.~(\ref{eq:centrodial_dynamics}). The evolvement of the continuous state is formulated as a sequential composition of controllable regions for One Walking Step (OWS) as in Defs.~\ref{def:CB_FHWS}-\ref{def:CB_SHWS}. The state of the system starts from $\boldsymbol{\Xi}_c$ in $\mathcal{C}_{\rm FHWS}$ (green funnel). After a finite time, the state reaches $\mathcal{T}_{\rm switch}$ (dashed blue region), where the dynamics of the system switch from $\mathcal{C}_{\rm FHWS}$ to $\mathcal{C}_{\rm SHWS}$ (the yellow funnel). Finally, the state of the system reaches $\mathcal{T}_{\rm OWS}$ (red region). The switch between $\mathcal{C}_{\rm FHWS}$ and $\mathcal{C}_{\rm SHWS}$ can occur at any instant within $\mathcal{T}_{\rm switch}$ and the state would still be guaranteed to reach $\mathcal{T}_{\rm OWS}$. Any $\boldsymbol{k}_c \in \mathcal{R}_{\rm OWS}$ is guaranteed to reach $\mathcal{T}_{\rm OWS}$ based on  Def.~\ref{def:Rows}, while the states outside $\mathcal{R}_{\rm OWS}$ are considered failed state as shown in the red arrows.}
\label{fig:funnels}
\vspace{-0.15in}
\end{figure}

\begin{thm}
\label{thm:sequential_Comp}
The controllable regions for OWS are sequentially and safely composable, i.e., $\mathcal{C}_{\rm FHWS} \cap \mathcal{C}_{\rm SHWS} = \mathcal{T}_{\rm switch} \neq \emptyset$, if the locomotion safety defined in Def.~\ref{def:balance_saf} is satisfied, i.e., the PSP obeying (i) Theorem~\ref{thm:straight}, (ii) $v_{\rm switch} \leq v_{\rm max}$, and (iii) Theorem~\ref{thm:steering1} generates a feasible CoM trajectory.
\end{thm}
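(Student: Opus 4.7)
The plan is to prove Theorem~\ref{thm:sequential_Comp} by a constructive argument: start from the nominal (disturbance-free) case, exhibit a non-empty switching set from the analytical PIPM solution under hypotheses (i)--(iii), and then use continuity of the flow with respect to bounded disturbances to extend the result. First I would unpack the two controllable-region definitions in Defs.~\ref{def:CB_FHWS}--\ref{def:CB_SHWS} and observe that both are defined via the \emph{same} hybrid control input family $\mathcal{U}$ acting on the \emph{same} centroidal dynamics $\Phi$ of Eq.~(\ref{eq:centrodial_dynamics}); the only difference is the target set ($\mathcal{T}_{\rm switch}$ versus $\mathcal{T}_{\rm OWS}$) and the stance foot used in $\boldsymbol{u}$. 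By construction in Sec.~\ref{subsubsec:capturebasin}, $\mathcal{T}_{\rm switch}$ is defined precisely as the subset of $\mathcal{C}_{\rm SHWS}$ that lies on the tangent manifolds emanating from $\boldsymbol{\Xi}_c$ with $v_{{\rm apex}} \in [v_{{\rm apex,min}}, v_{{\rm apex,max}}]$, so $\mathcal{T}_{\rm switch} \subseteq \mathcal{C}_{\rm SHWS}$ holds by definition, and what remains is to prove (a) $\mathcal{T}_{\rm switch} \neq \emptyset$, and (b) $\mathcal{T}_{\rm switch} \subseteq \mathcal{C}_{\rm FHWS}$.

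For (a), I would use the closed-form hyperbolic trajectory of the PIPM (Appendix~\ref{appendix1}) to write the forward trajectory from $\boldsymbol{\xi}_c$ under stance $\boldsymbol{p}_{{\rm foot},c}$ and the backward trajectory from $\boldsymbol{\xi}_n$ under stance $\boldsymbol{p}_{{\rm foot},n}$ as explicit functions of $x$. Since both phase-space curves are monotone in $x$ on the relevant half-plane, their intersection in the $(x,\dot x)$ plane exists iff the relation
\begin{equation*}
\dot x_{\rm forward}^{2}(x_{\rm switch}) = \dot x_{\rm backward}^{2}(x_{\rm switch})
\end{equation*}
admits a solution in the admissible interval $[x_{{\rm apex},c}, x_{{\rm apex},n}]$. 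Hypothesis (i), Theorem~\ref{thm:straight} (or its steering corollary under hypothesis (iii)), is exactly the algebraic compatibility condition $-\omega^{2}d^{2} \le v_{{\rm apex},n}^{2}-v_{{\rm apex},c}^{2} \le \omega^{2}d^{2}$ that makes this equation solvable in the correct sagittal quadrant; geometrically, it forces the forward and backward tangent manifolds to cross rather than diverge. Hypothesis (ii) then upper-bounds $\dot x$ at the crossing, ensuring that the computed intersection lies in the physically admissible (unsaturated) strip and is therefore a legitimate element of the switching manifold. Together these give a concrete $\boldsymbol{\xi}_{\rm switch} \in \mathcal{T}_{\rm switch}$.

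For (b), the selected $\boldsymbol{\xi}_{\rm switch}$ is reached from $\boldsymbol{\xi}_c$ by the FHWS flow under $\boldsymbol{u}(t) = (\omega_c, \boldsymbol{p}_{{\rm foot},c})$ in finite time $t_{\rm FHWS}$ (closed-form from the hyperbolic solution), which matches exactly the reachability predicate of Def.~\ref{def:CB_FHWS} with $\mathcal{T} = \mathcal{T}_{\rm switch}$; hence $\boldsymbol{\xi}_c \in \mathcal{C}_{\rm FHWS}$ and $\boldsymbol{\xi}_{\rm switch} \in \mathcal{C}_{\rm FHWS}$ since trajectories stay inside the controllable region. The steering case (iii) is handled by first transforming $\boldsymbol{\xi}_c$ into the post-turn local coordinate, where Theorem~\ref{thm:steering1} guarantees that the transformed sagittal velocity remains within the safety cone of Fig.~\ref{fig:steering_safety}, and then applying the same straight-walking argument in the new frame via the corollary Eq.~(\ref{eq:steering1})--(\ref{eq:steering2}).

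The hardest part will be extending the nominal argument to the disturbed setting $\boldsymbol{\tilde\xi}(t) \in \boldsymbol{\tilde\Xi}_{\rm synthesis}$, since then $\mathcal{T}_{\rm switch}$ is no longer a point but the image of an interval of perturbed trajectories and must be verified to retain a non-empty overlap. Here I would invoke the robustly-complete synthesis guarantee of ROCS~\cite{li2018rocs}: the interval-valued over-approximation of the backward-reachable set from $\mathcal{T}_{\rm OWS}$ converges to a tight outer bound of $\mathcal{C}_{\rm SHWS}$, and because the nominal intersection point lies strictly inside $\mathcal{T}_{\rm switch}$ (with margin provided by $v_{\rm max} - v_{\rm switch}$), continuity of $\Phi$ in $\boldsymbol{\tilde\xi}$ ensures the intersection persists under any disturbance ball small enough to fit inside that margin. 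This is where the paper's definition of $\boldsymbol{\tilde\Xi}_{\rm synthesis}$ (as opposed to an arbitrary disturbance) matters, and I expect the formal argument to reduce to showing that the partition granularity of ROCS plus the velocity slack in conditions (i)--(iii) jointly dominate the diameter of $\boldsymbol{\tilde\Xi}_{\rm synthesis}$.
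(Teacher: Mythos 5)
Your proposal is correct and follows essentially the same route as the paper's proof: both reduce the claim to the existence of a feasible nominal phase-space trajectory, using condition (i) (Theorem~\ref{thm:straight}) to guarantee a switching position $x_{\rm switch} \in [x_{{\rm apex},c}, x_{{\rm apex},n}]$ where the forward and backward PIPM branches intersect, condition (ii) to keep that intersection below $v_{\rm max}$, and condition (iii) to keep the post-turn state inside the safety cone so that $\mathcal{C}_{\rm FHWS}$ still meets $\mathcal{C}_{\rm SHWS}$. You are in fact more explicit than the paper in two places the published proof leaves informal --- the decomposition into $\mathcal{T}_{\rm switch}\subseteq\mathcal{C}_{\rm SHWS}$ (by construction) plus $\mathcal{T}_{\rm switch}\neq\emptyset$ plus $\mathcal{T}_{\rm switch}\subseteq\mathcal{C}_{\rm FHWS}$, and the persistence of the intersection under $\boldsymbol{\tilde\Xi}_{\rm synthesis}$, which the paper delegates to the ROCS computation rather than arguing from margins and continuity.
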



\begin{proof}
To guarantee the feasibility of a nominal phase-space plan generated through forward and backward propagation of the PIPM dynamics, the following two conditions on $v_{{\rm apex},n}$ need to be satisfied: (i) $\exists x_{\rm switch}$ such that $x_{{\rm apex},c} \le x_{\rm switch} \le x_{{\rm apex},n}$, which is guaranteed by \textit{designing} $v_{{\rm apex},n}$ that obeys Theorem~\ref{thm:straight} given a feasible $d$ and current keyframe state $\boldsymbol{\xi}_c \in \boldsymbol{\Xi}_c$; (ii) given a maximum CoM velocity threshold $v_{\rm max}$, $v_{{\rm apex},n}$ is \textit{chosen} such that $v_{\rm switch} \leq v_{\rm max}$ through the forward and backward propagation. 
The designed $v_{{\rm apex},n}$ meeting the two conditions above will guarantee feasible phase-space trajectories that safely compose the controllable regions of the two half-walking steps, i.e., $\mathcal{C}_{\rm FHWS} \cap \mathcal{C}_{\rm SHWS} = \mathcal{T}_{\rm switch} \neq \emptyset$. Even in the extreme case of that $v_{\rm switch}$ is highly close to $v_{\rm max}$, we still guarantee $\mathcal{T}_{\rm switch} \neq \emptyset$ since part of the $\mathcal{T}_{\rm switch}$ region is below the normal phase-space trajectory and guaranteed to be feasible. 

As for the turning case, given the designed $v_{{\rm apex},n}$, condition (iii) constrains the maximally allowable heading angle change $\Delta \theta$ for the next walking step. This condition guarantees that the CoM state in the sagittal and lateral phase-space in the new local coordinate after the turn will not cross the asymptote line of the shaded safety region (see Fig.~\ref{fig:steering_safety}). As such, the controllable region $\mathcal{C}_{\rm FHWS}$ centering around the nominal PSP trajectory will exist (in the space between the nominal PSP and the asymptote line) and interact with $\mathcal{C}_{\rm SHWS}$ such that $\mathcal{T}_{\rm switch} \neq \emptyset$.
\end{proof}


According to Theorem~\ref{thm:sequential_Comp}, we can determine a set of $v_{{\rm apex}, n}$, $\Delta \theta$, and $d$ parameters\footnote{Specific values of $d$, $v_{{\rm apex}}$, and $\Delta \theta$ are included in Table.~\ref{tab:value_table}, Sec.~\ref{sec:results}.} that satisfy conditions (i)-(iii) thus guaranteeing that the system state can reach the desired target set $\mathcal{T}_{\rm OWS}$. The selected set is included as safe locomotion specifications in the high-level planner as shown in Fig.~\ref{fig:frame} and detailed in Sec.~\ref{subsec:low_in_high}.  


\begin{cor}
To realize safe walking for an arbitrary number of steps, (i) the target set of the current step is required to be a subset of the controllable region of the first half walking step for the next step, i.e., $\mathcal{T}_{{\rm OWS},c} \subset \mathcal{C}_{{\rm FHWS},n}$, and (ii) the applied perturbation during execution does not push the system state outside of the controllable regions.
\label{col:composition}
\end{cor}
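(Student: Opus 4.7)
The plan is to prove this corollary by induction on the number of walking steps, using Theorem~\ref{thm:sequential_Comp} as the single-step building block and treating conditions (i) and (ii) as the ``handoff'' and ``robustness'' hypotheses that let the single-step guarantee propagate indefinitely. The overall structure is straightforward because the heavy lifting — forward-backward reachability, tangent-manifold intersection, and $\mathcal{T}_{\rm switch}\neq\emptyset$ — has already been discharged for one step by Theorem~\ref{thm:sequential_Comp}; the corollary only has to chain copies of that result together.

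First I would set up notation indexing the walking steps by $k=1,2,\ldots$, with controllable regions $\mathcal{C}_{\rm FHWS,k}$, $\mathcal{C}_{\rm SHWS,k}$, switching sets $\mathcal{T}_{\rm switch,k}$, and terminal targets $\mathcal{T}_{\rm OWS,k}$, each computed by ROCS under the bounded disturbance $\boldsymbol{\tilde \Xi}_{\rm synthesis}$ as in Defs.~\ref{def:CB_FHWS}--\ref{def:CB_SHWS}. The base case $k=1$ is immediate: if $\boldsymbol{\xi}(0)\in\mathcal{C}_{\rm FHWS,1}$ and the keyframe choice for step~$1$ satisfies Theorems~\ref{thm:straight} and \ref{thm:steering1} together with $v_{\rm switch}\leq v_{\rm max}$, Theorem~\ref{thm:sequential_Comp} yields $\mathcal{T}_{\rm switch,1}\neq\emptyset$, and the synthesized controller drives the state through $\mathcal{T}_{\rm switch,1}$ into $\mathcal{C}_{\rm SHWS,1}$ and on to $\mathcal{T}_{\rm OWS,1}$ in finite time.

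For the inductive step, I would assume that the state reaches $\mathcal{T}_{\rm OWS,k}$ after executing the first $k$ steps. Hypothesis (i), $\mathcal{T}_{\rm OWS,k}\subset\mathcal{C}_{\rm FHWS,k+1}$, then places the realized state inside the controllable region of the next half-step, so the preconditions of Theorem~\ref{thm:sequential_Comp} are re-established for step $k+1$, provided the keyframe decision maker picks $v_{{\rm apex},n}$, $d$, and $\Delta\theta$ for step $k+1$ that again obey Theorems~\ref{thm:straight}--\ref{thm:steering1} and the switching-velocity bound. Applying Theorem~\ref{thm:sequential_Comp} once more and peeling off step $k+1$ closes the induction. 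The role of hypothesis (i) here is exactly to ``glue'' adjacent funnels: without it, the terminal state of step $k$ might land outside every controllable region of step $k+1$ and no feasible control sequence would exist.

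The main subtlety — and the step I expect to take the most care — is hypothesis (ii), since it is a statement about the \emph{realized} disturbance rather than about the plan. I would make it rigorous by invoking the robust construction of the controllable regions: because each $\mathcal{C}$ is computed as a backward reachable set under every admissible $\boldsymbol{\tilde\xi}(t)\in\boldsymbol{\tilde\Xi}_{\rm synthesis}$ via ROCS, the synthesized controller renders $\mathcal{C}$ forward-invariant as long as the realized disturbance stays within $\boldsymbol{\tilde\Xi}_{\rm synthesis}$. Conversely, any perturbation that ejects the state from $\mathcal{C}_{\rm FHWS,k}$ or $\mathcal{C}_{\rm SHWS,k}$ voids the controllability certificate and the induction collapses at that step. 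I would therefore phrase the corollary as a closed-loop inheritance result: \emph{assuming} the planner respects (i) and the environment respects (ii), locomotion safety in the sense of Def.~\ref{def:balance_saf} is preserved across arbitrarily many walking steps. This framing also cleanly motivates the recoverability analysis discussed later in Sec.~\ref{sec:recoverbility}, which quantifies how much perturbation can be tolerated before (ii) fails.
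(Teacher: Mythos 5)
Your induction is correct and matches the paper's intended reasoning: the paper itself offers no formal proof of this corollary, only the conceptual funnel illustration and the preceding Theorem~\ref{thm:sequential_Comp}, and your argument—using (i) to glue the terminal set of step $k$ into the FHWS controllable region of step $k+1$ and (ii) to keep the single-step ROCS certificate valid—is exactly the chaining the authors have in mind. The one caveat worth keeping is the distinction you already draw between $\boldsymbol{\tilde \Xi}_{\rm synthesis}$ and the larger execution-time disturbances: condition (ii) as stated in the corollary is about the realized state staying inside the controllable regions, not about the disturbance staying inside the synthesis bound, so your "forward-invariance" phrasing should be read as sufficient-but-not-necessary for (ii) to hold.
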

Fig.~\ref{fig:funnels} conceptually shows how after a system state transition $\boldsymbol{\xi}_n = Tr(\boldsymbol{k}_c)$, $\boldsymbol{\xi}_n$ can be projected onto the viable keyframe set for the next OWS $\mathcal{R}_{{\rm OWS}, n}$ to guarantee the viability of next walking step.

\section{Keyframe Decision-Making for Navigation Waypoint Tracking}
\label{sec:Tracking}

In the previous section, we proposed safety theorems that guarantee locomotion safety. Now we shift our focus to another consideration for safe locomotion by ensuring tracking of the high-level waypoints. The lateral phase-space plan is determined based on the sagittal phase-space plan, as the contact switch timing in the lateral dynamics needs to obey that of the sagittal dynamics. Therefore, the lateral dynamics depends on sagittal apex velocities and sagittal step length. In our previous work~\cite{zhao2017robust}, the lateral foot placement is solved through a Newton-Raphson search method, such that the lateral CoM velocity is equal to zero at the next CoM apex. While our previous method achieved stable walking and turning, it lacks the guarantee of accomplishing high-level navigation through tracking of the waypoints. Therefore, the lateral CoM motion may not track the desired waypoints. In \cite{warnke2020towards}, we propose a heuristic-based policy that restricts the allowable keyframe transitions to achieve waypoint tracking for specific locomotion plans. In this study, we extend our previous work by designing an algorithm that formally manipulates the sagittal phase-space plan to take into account high-level waypoint tracking. Particularly, we use $\Delta y_{1}$ and $\Delta y_{2}$ to track the lateral distance between the CoM at the apex and the high-level waypoint as seen in Fig.~\ref{fig:notation}. First, let's define viable ranges for $\Delta y_{1}$ and $\Delta y_{2}$. 
\begin{defn}[Viable range for lateral-apex-CoM-to-waypoint distance $\Delta y_1$] $\mathcal{R}_{\Delta y_1} \coloneqq \{\Delta y_{1} | \Delta y_{1}+\Delta y_{2} \leq b_{\rm safety} \}$, where $b_{\rm safety}$ denotes the safety boundary around the waypoint.
\label{def:Rdelta_y1}
\end{defn}
\begin{defn}[Viable range for lateral-apex-CoM-to-foot distance $\Delta y_2$] Given the safety criterion for steering walking defined in Theorem~\ref{thm:steering1}, the viable range for lateral CoM-to-foot distance at apex is defined as $\mathcal{R}_{\Delta y_2} \coloneqq \{\Delta y_{2} | v_{\rm apex, max} \cdot \tan{\Delta\theta}/\omega \leq  \Delta y_{2} \leq (v_{\rm apex, min})/(\omega \cdot \tan{\Delta\theta}) \}$.
\label{def:Rdelta_y2}
\end{defn}
$\mathcal{R}_{\Delta y_1}$ and $\mathcal{R}_{\Delta y_2}$ is defined as such 
to avoid the lateral drift of the robot's CoM and foot location from the high-level waypoint, and further avoid collisions with obstacles. Given Defs.~\ref{def:Rdelta_y1}-\ref{def:Rdelta_y2}, we can track the high-level waypoint as follows.
\begin{prop}
Viable lateral tracking of the high-level waypoint is guaranteed only if (i) $\Delta y_{2}$ and $\Delta y_{1}$ are bounded within their respective viable ranges, i.e., $\Delta y_{1} \in \mathcal{R}_{\Delta y_1}$ and $\Delta y_{2} \in \mathcal{R}_{\Delta y_2}$, and (ii) the sign of $(\Delta y_{1}+\Delta y_{2})$ alternates between two consecutive keyframes.
\label{thm:tracking_viability}
\end{prop}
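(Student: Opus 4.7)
The plan is to treat the claim as a pair of necessary conditions for keeping the lateral CoM apex within a bounded neighborhood of the high-level waypoint across successive walking steps, and to prove each one by contrapositive. The two conditions address distinct failure modes: condition (i) rules out static violations of safety and tracking bounds at a single apex, while condition (ii) rules out a monotonic lateral drift that accumulates across consecutive steps. Throughout, I would work with the lateral phase-space analytical solution of the PIPM derived in Appendix~\ref{appendix1}, which gives a closed-form expression for the lateral CoM motion between apex states in terms of $y_{\rm foot}$, $y_{\rm apex}$, $\omega$, and the switching time fixed by the sagittal plan.

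For condition (i), the argument is essentially unpacking Definitions~\ref{def:Rdelta_y1} and \ref{def:Rdelta_y2}. First I would show that if $\Delta y_2 \notin \mathcal{R}_{\Delta y_2}$ on a turning step, then by construction (taking the worst case over $v_{\rm apex}\in[v_{\rm apex,min},v_{\rm apex,max}]$) the current apex velocity is forced outside the safety bound of Theorem~\ref{thm:steering1}, so either the sagittal CoM fails to cross the next apex or the lateral CoM fails to zero out its velocity at the next apex; in either case the next keyframe cannot be reached and tracking is lost. Then I would observe that if $\Delta y_1 \notin \mathcal{R}_{\Delta y_1}$, Definition~\ref{def:Rdelta_y1} forces $\Delta y_1+\Delta y_2 > b_{\rm safety}$, which by the geometric meaning of these offsets places the stance-foot/apex pair outside the admissible safety tube around the waypoint at this step, immediately violating tracking.

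For condition (ii), I would exploit the alternating stance-leg structure of bipedal walking together with the lateral PIPM solution. Between two consecutive apices the stance foot switches to the opposite side, and the signed quantity $\Delta y_1+\Delta y_2$ records on which side of the waypoint the current lateral apex and stance foot sit. If this sign did not alternate between consecutive keyframes, then the next apex and foot would be committed to the same side of the waypoint as the current one. Plugging this configuration into the closed-form lateral PIPM solution with the sagittal-imposed switching time shows that the next lateral apex $y_{{\rm apex},n}$ must shift away from $w_y$ by at least a step-width-dependent positive amount; iterating this for a finite number of steps necessarily breaches the bound $\Delta y_1+\Delta y_2 \le b_{\rm safety}$ from Definition~\ref{def:Rdelta_y1}, contradicting viable tracking. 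Conversely, sign alternation keeps the lateral apex oscillating around the waypoint inside the safety tube.

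The step I expect to be the main obstacle is making the drift argument for condition (ii) precise: it requires careful signed book-keeping of $y_{\rm foot}$, $y_{\rm apex}$, and $w_y$, and invoking the lateral PIPM closed-form with the sagittal-imposed switching time so that the drift accumulated over two non-alternating steps is lower-bounded by a strictly positive constant depending on $\Delta y_2$ and $\omega$. Once that lower bound is in hand, the contradiction with Definition~\ref{def:Rdelta_y1} is immediate, and combining it with the apex-level argument for condition (i) yields the proposition.
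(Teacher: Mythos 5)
The first thing to note is that the paper does not actually prove Proposition~\ref{thm:tracking_viability}: it is stated as a design requirement, followed only by an interpretive restatement (that $\Delta y_1+\Delta y_2$ is the signed lateral foot-to-waypoint offset, whose sign must alternate, and that the CoM and waypoint must stay within the foot-placement width) and a pointer to Fig.~\ref{fig:top_down_turn}. Immediately afterwards the authors state that $\Delta y_{1,n}$ and $\Delta y_{2,n}$ are such nonlinear functions of $d$, $\Delta\theta$, $v_{{\rm apex},c}$, $v_{{\rm apex},n}$ and the current lateral state that a quantitative analysis is not attempted, which is precisely why they fall back on the sampling search of Algorithm~\ref{alg:searching}. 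So your proposal is not an alternative route to the paper's proof; it is an attempt to supply a proof where the paper offers none, and it should be judged on its own merits.

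On those merits there are two concrete gaps. First, your argument for the $\Delta y_2$ half of condition (i) misreads the quantifier in Definition~\ref{def:Rdelta_y2}: $\mathcal{R}_{\Delta y_2}$ is the set of $\Delta y_2$ for which \emph{every} $v_{\rm apex}\in[v_{\rm apex,min},v_{\rm apex,max}]$ satisfies Theorem~\ref{thm:steering1}, so $\Delta y_2\notin\mathcal{R}_{\Delta y_2}$ only implies that \emph{some} admissible velocity violates the steering bound, not that the velocity actually selected by the planner does. To make the contrapositive work you must either argue at the level of the guarantee (tracking cannot be certified uniformly over the admissible velocity set) or restrict to the velocity the decision maker outputs, and these are different claims. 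Second, the drift lower bound you need for condition (ii) is exactly the step the authors declare analytically intractable: the lateral closed form (Appendix~\ref{appendix1} only derives the sagittal case; the lateral analog inherits its switching time from the sagittal plan) couples $y_{{\rm apex},n}-w_y$ to $v_{{\rm apex},c}$, $v_{{\rm apex},n}$, $d$ and $\Delta\theta$ through $t_{\rm FHWS}+t_{\rm SHWS}$, so a strictly positive, step-uniform lower bound on the one-sided drift is not something you can simply read off; without it the "iterate until $b_{\rm safety}$ is breached" contradiction does not close. The qualitative picture (two stance feet on the same side of the waypoint trap the lateral apex on that side) is right and matches the authors' informal gloss, but as written the proposal asserts rather than establishes the quantitative estimate it hinges on.
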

Proposition~\ref{thm:tracking_viability} requires that (i) the distance sign of the lateral foot stance position relative to the waypoint alternates between consecutive keyframes and (ii) the waypoints and CoM trajectory are bounded within the lateral foot placement width. An example of this trajectory is shown in Fig.~\ref{fig:top_down_turn}.

The analytical solutions of $\Delta y_{1,n}$ and $\Delta y_{2,n}$ are highly nonlinear functions of multiple parameters including the step length $d$, heading angle change $\Delta \theta$, current and next apex velocities $v_{{\rm apex},c}$, $v_{{\rm apex},n}$ and the current lateral state of the system $\Delta y_{1,c}$ and $\Delta y_{2,c}$. Thus, it is difficult to quantitatively analyze the relationship between $\Delta y_1$, $\Delta y_2$, and other parameters aforementioned. Since $(d, \Delta \theta) \in \boldsymbol{a}_{\rm HL}$ are determined by the navigation policy designed in the high-level task planner, and $v_{{\rm apex},c}$, $\Delta y_{1,c}$ and $\Delta y_{2,c}$ are fixed from the previous step, we manipulate $v_{{\rm apex},n}$ to adjust the sagittal phase-space plan and subsequently the lateral phase-space plan through the updated walking step timing. To this end, we sample a set of equidistant values $v_{{\rm apex},n} \in [v_{{\rm apex, min}},v_{{\rm apex, max}}]$ and calculate a cost $\lambda$, which penalizes deviation of $\Delta y_{1,n}$ and $\Delta y_{2,n}$ from their respective desired values $\Delta y_{1,d} \in \mathcal{R}_{\Delta y_1}$ and $\Delta y_{2,d} \in \mathcal{R}_{\Delta y_2}$\footnote{ $\Delta y_{1,d}$ and $\Delta y_{2,d}$ are heuristically selected according to our bipedal robot's leg kinematics. Exact values of $\Delta y_{1,d}$ and $\Delta y_{2,d}$ are shown in Table.~\ref{tab:value_table} in Sec.\ref{subsec:results_nominal}}. After the sampling, we set $v_{{\rm apex},n}$ to the optimal next apex velocity $v_{{\rm apex,opt}}$ that results in the minimum cost. This procedure is presented in Algorithm~\ref{alg:searching}.

\begin{algorithm}
\SetAlgoLined
 \textbf{Input:} $d$, $v_{{\rm apex},c}$, $\Delta y_{1,c}$, $\Delta y_{2,c}$, and a velocity sampling increment $v_{\rm inc}$\;
 \textbf{Set:} $v_{{\rm apex},n} \leftarrow v_{{\rm apex, min}}$, cost $\lambda \leftarrow \infty$, $\Delta y_{1, d}$ and $\Delta y_{2, d}$, desired step duration $T_d$, desired step width $W_{d}$ and cost weights $c_1$, $c_2$, $c_3$ and $c_4$\;
 \While{$v_{{\rm apex},n} \leq v_{{\rm apex, max}}$}{
  $t_{\rm FHWS}$, $t_{\rm SHWS}$ $\leftarrow$ sagittal PSP with ($d$, $v_{{\rm apex},c}$, $v_{{\rm apex},n}$)\;
  $\Delta y_{1,n}$, $\Delta y_{2,n}$ $\leftarrow$ Newton-Raphson Search \cite{zhao2017robust}\;
   $\lambda_{\rm new} = c_{1}| \Delta y_{1, d}-\Delta y_{1, n}| + c_{2}|\Delta y_{2, d}-\Delta y_{2, n}|$
    $ + c_{3}| T_{d} -(t_{\rm FHWS} + t_{\rm SHWS})|$ 
    $ + c_{4}|W_{d} - |y_{{\rm foot},n} - y_{{\rm foot},c}||$\;
   \If{$\lambda_{\rm new} < \lambda$}{
   $ \lambda \leftarrow \lambda_{\rm new}$\;
   $v_{{\rm apex, opt}} \leftarrow v_{{\rm apex}, n}$
   }
  
   $v_{{\rm apex},n} \leftarrow v_{{\rm apex},n} + v_{\rm inc}$
 }
 
 \textbf{Output:} $v_{{\rm apex},n}=v_{{\rm apex,opt}}$
 \caption{Optimal Next Apex Velocity Design for Lateral Waypoint Tracking}
 \label{alg:searching}
\end{algorithm}

Algorithm~\ref{alg:searching} also includes two regularization costs on step duration $t_{\rm FHWS} + t_{\rm SHWS}$ and step width $|y_{{\rm foot},n} - y_{{\rm foot},c}|$, respectively, where $T_d$ and $W_{d}$ are empirically selected to guarantee hardware implementation feasibility of the Digit robot (e.g., constraints from robot leg dynamics and kinematics)\footnote{Long step duration $>0.7$ s and large step width $>0.55$ m are impractical for maintaining Digit's locomotion safety.}. Algorithm~\ref{alg:searching} is robust to different step lengths during straight walking, however waypoint tracking during a turning sequence is more complex. 
In extreme turning cases where Algorithm~\ref{alg:searching} fails to find an apex velocity that yields viable waypoint tracking in Proposition~\ref{thm:tracking_viability}, we will propose an \textit{online} replanning mechanism to adjust the waypoint (see Sec.~\ref{subsec:action_planner}). 

\section{Task Planning via Belief Abstraction}
\label{sec:task planner}
This section will expound the high-level task planning structure, consisting of global navigation and local action planners that employ LTL to achieve safe locomotion navigation in a partially observable environment with dynamic obstacles. Low-level locomotion dynamics constraints are encoded into LTL specifications to ensure that high-level actions can be successfully executed by the middle-level motion planner to maintain balancing safety. 

%
\begin{defn}[Navigation Safety]
Navigation safety is defined as safe maneuvering in partially observable environments with uneven terrain while avoiding collisions with static and dynamic obstacles.
\end{defn}

To achieve safe navigation, the task planner evaluates observed environmental events at each walking step and commands a safe action set to the middle-level motion planner as shown in Fig.~\ref{fig:frame} while guaranteeing goal positions to be visited \textit{in order} and \textit{infinitely often}. In particular, we study a pick-up and drop-off task while guaranteeing static and dynamic obstacle collision avoidance.

We design our task planner using formal synthesis methods to ensure locomotion actions guarantee navigation safety and liveness, specifically we use General Reactivity of Rank 1 (GR(1)), a fragment of LTL.
 GR(1) allows us to design temporal logic formulas ($\varphi$) with atomic propositions (AP ($\varphi$)) that can either be \textsf{True} ($\varphi \vee \neg\varphi$) or \textsf{False} ($\neg$\textsf{True}). With negation ($\neg$) and disjunction ($\vee$) one can also define the following operators: conjunction ($\wedge$), implication ($\Rightarrow$), and equivalence ($\Leftrightarrow$). Other temporal operators include ``next" ($\bigcirc$),
 ``eventually" ($\Diamond$), and ``always" ($\square$). Safety specifications capture how the system and environment may transition during one step of the synthesized controller's execution, while liveness specifications capture which transitions must happen \textit{infinitely often}. Further details of GR(1) can be found in \cite{gr1}. Our implementation uses the SLUGS reactive synthesis tool \cite{ehlers2016slugs} to design specifications with Atomic Propositions (APs), natural numbers, and infix notation, which are automatically converted to ones using only APs.

\begin{rem}
The discrete abstraction granularity required to plan walking actions for each keyframe is too fine to synthesize plans for large environment navigation. Therefore, we have split the task planner into two layers: A high-level navigation planner that plays a navigation and collision avoidance game against the environment on a global coarse discrete abstraction, and an action planner that plays a local game on a fine abstraction of the local environment (corresponding to one coarse cell). The action planner generates action sets at each keyframe to progress through the local environment and achieve the desired coarse-cell transition after multiple walking steps. 
\end{rem}

\subsection{Navigation Planner Design}
A top-down projection of the navigation environment is discretized into a coarse two-dimensional grid
as shown in Fig. \ref{fig:belief_results}. Each time the robot enters a new cell, the navigation planner evaluates the robot's discrete location ($l_{r,c} \in \mathcal{L}_{r,c}$) and heading ($h_{r,c} \in \mathcal{H}_{r,c}$) on the coarse grid, as well as the dynamic obstacle's location ($l_o \in \mathcal{L}_o$), and determines a desired navigation action ($n_a \in \mathcal{N}_a$). The planner can choose for the robot to stop, or to transition to any reachable safe adjacent cell. $\mathcal{L}_{r,c}$ and $\mathcal{L}_o$ denote sets of all coarse cells the robot and dynamic obstacle can occupy, while $\mathcal{H}_{r,c}$ represents the four cardinal directions in which the robot can travel on the coarse abstraction. The dynamic obstacle moves under the following assumptions: (a) it will not attempt to collide with the robot when the robot is standing still,
(b) its maximum speed only allows it to transition to an adjacent coarse cell during one turn of the navigation game, and
(c) it will eventually move out of the way to allow the robot to pass. Assumption (c) prevents a deadlock \cite{alonso2018reactive}. Static obstacle locations are encoded as safety specifications. Given these assumptions, the task planner in Section~\ref{subsec:task_planner_synth} will guarantee that the walking robot can achieve a specific navigation goal. 

\subsection{Action Planner Design}
\label{subsec:action_planner}
The local environment, i.e., one coarse cell, is further abstracted into a fine discretization.
At each walking step, the action planner evaluates the robot's state in the environment ($\boldsymbol{e}_{\rm HL}$)\footnote{We use the symbol $\boldsymbol{e}_{\rm HL}$ to represent the robot state, since this symbol represents the second player in the game, i.e., the environment player.} consisting of the discrete waypoint location ($l_{r,f} \in \mathcal{L}_{r,f}$) and heading ($h_{r,f} \in \mathcal{H}_{r,f}$) on the fine grid, as well as the robots current stance foot index ($i_{\rm st}$), and determines an appropriate action set ($\boldsymbol{a}_{\rm HL}$) defined in Def.~\ref{def:keyframe}.
The action planner generates a sequence of locomotion actions guaranteeing that the robot eventually transitions to the next desired coarse cell while ensuring all action sets are safe and achievable based on $\boldsymbol{e}_{\rm HL}$ and $\boldsymbol{a}_{\rm HL}$.
Note that, the fine abstraction also models the terrain height for each fine-level cell, allowing the action planner to choose the correct step height $\Delta z_{\rm foot}$ for each keyframe transition. 

During locomotion, the nominal robot state transitions are deterministically modeled within the action planner based on the current game state and system action, but the nominal transition is not guaranteed. To account for this, we model additional necessary non-deterministic transitions to handle the following cases:
\begin{itemize}

    \item Not all the robot states can be captured in the discrete abstraction, such as the robot CoM velocity, which, however, may still affect transitions, i.e. the robot's CoM deviates from the desired high-level waypoint.

   \item The robot may be perturbed externally while walking, altering the foot location at the next walking step.
\end{itemize}
We have encoded non-deterministic transitions, and associated transition flags ($t_{nd}$), to capture these cases into the action planner's environment assumptions. This flag variable $t_{nd}$  is encoded as a special automaton state that will be used to replan the foot location of the next walking step. An example of addressing a sagittal perturbation will be shown in Fig.~\ref{fig:sagittal_replan} (c).


An example of modeled non-deterministic transitions can be seen in Fig.~\ref{fig:non_determinstic}.  The CoM trajectory sometimes imperfectly tracks the waypoints due to accumulated differences in the continuous keyframe state represented by the same discrete state $\boldsymbol{e}_{\rm HL}$. The reduced-order motion planner identifies when the waypoint needs to be shifted from the lateral case and informs the action planner, which verifies the updated waypoint is allowed by the non-deterministic transition model and continues planning from the new waypoint.

\subsection{Encoding Low-level Dynamics Constraints into High-level Planner Specifications}
\label{subsec:low_in_high}
To ensure the action planner only commands safe and feasible actions, we must take into account the underlying \textit{Locomotion Safety}. This is achieved by capturing low-level constraints in the high-level planner specifications. Action planner state transition limitations based on straight walking step length constraints in Theorem~\ref{thm:straight}, and kinematic constraints from the bipedal robot leg, are directly encoded in the action planner specifications. 
\textit{Locomotion safety} is guaranteed when the combination of apex velocity, heading angle change, and foot placement meets Theorems~\ref{thm:straight}-\ref{thm:steering1}. These constraints are not able to be directly captured as the action planner does not reason about CoM velocity and the dynamic equations of motion can not be encoded in symbolic specifications. Instead, they are captured by generating a library of permissible turning sequences based on discrete robot states that are known to meet the above constraints (see Table.~\ref{tab:value_table}). For example, given $\omega = 3.15$ rad/s, $\Delta y_{2,c} = 0.14$ m (equals to $\Delta y_{2,d}$ in Algorithm~\ref{alg:searching}), and an allowable $v_{\rm apex}$ range $[0.2, 0.7]$ m/s, Theorem~\ref{thm:steering1} results in $\Delta\theta \leq 24.40^\circ$. Any turning angle larger than this value will result in a high-level action that is not executable by the middle-level motion planner. Thus we choose $\Delta\theta = 22.5 ^\circ$ such that we can complete a $90^\circ$ turn in $4$ consecutive walking steps. 
A safe turning sequence can be seen in Fig.~\ref{fig:non_determinstic}. 

To ensure that collision avoidance in the abstract game translates to collision-free locomotion in the continuous domain, we guarantee the location $l_{r,f}$ stays far enough away from any obstacles.
Algorithm.~\ref{alg:searching} ensures that 
the distance between $l_{r,f}$ and the robot's desired foot placement does not exceed $b_{\rm safety}$ as detailed in Sec.~\ref{sec:Tracking}. 
The action planner guarantees $l_{r,f}$ is never in a cell that is less than a distance $b_{\rm safety}$ away from the neighboring coarse cell that may contain static or dynamic obstacles via safety specifications. The planner guarantees this distance even after non-deterministic sagittal and lateral transitions, ensuring collision avoidance.

\begin{rem}
The navigation planner cell can not be an arbitrary size because, in a locomotion setting, the underlying dynamics of the bipedal system and multiple walking steps need to be considered to ensure the safe and correct transition between adjacent coarse cells. 
\end{rem}

\begin{figure}[t]
\centerline{\includegraphics[width=.45\textwidth]{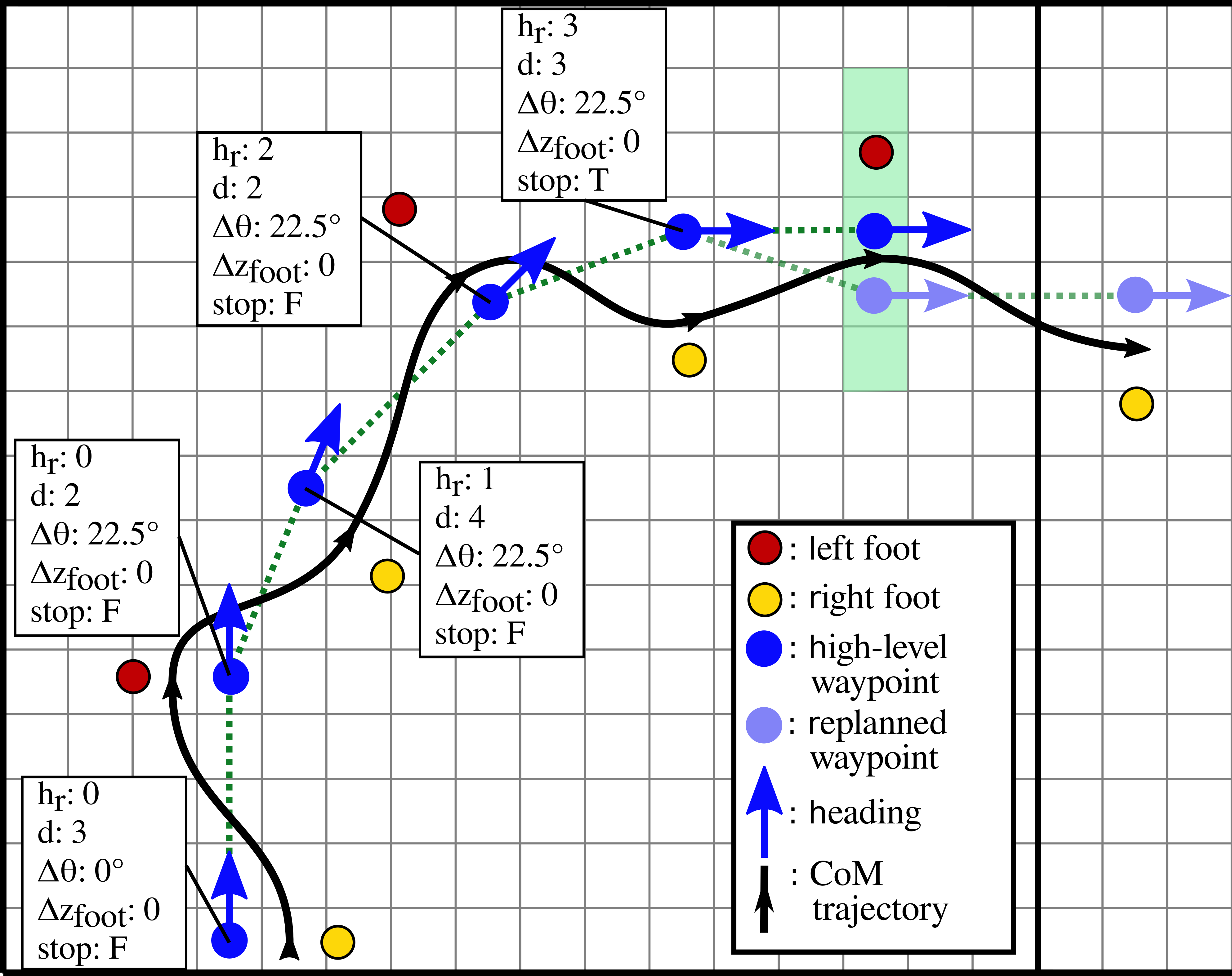}}
\caption{Illustration of fine-level steering walking within one coarse cell.
Discrete actions are planned at each keyframe allowing the robot to traverse the fine grid toward the next coarse cell. The waypoint transitions non-deterministically following the turn. A set of locomotion keyframe decisions are also annotated.}
\label{fig:non_determinstic}
\vspace{-0.15in}
\end{figure}

\subsection{Task Planner Synthesis}
\label{subsec:task_planner_synth}

A navigation game structure is proposed by including robot actions in the tuple $\mathcal{G} := (\mathcal{S}, s^{\rm init}, \mathcal{T}_{N})$ with
\begin{itemize}
    \item $\mathcal{S} = \mathcal{L}_{r,c} \times \mathcal{L}_o \times \mathcal{H}_{r,c} \times \mathcal{N}_a$ is the augmented state;
    \item $s^{\rm init} = (l_{r,c}^{\rm init},l_o^{\rm init},h_{r,c}^{\rm init},n_a^{\rm init}) $ is the initial state;
    \item $\mathcal{T}_{N} \subseteq \mathcal{S} \times \mathcal{S}$ is a transition relation describing the possible moves of the robot and the obstacle.
\end{itemize}
To synthesize the transition system $\mathcal{T}_{N}$, we define the rules for the possible successor state locations which will be further expressed in the form of LTL specifications $\psi$. The successor location of the robot is based on its current state and action $succ_r(l_{r,c},h_{r,c},n_a) = \{l'_{r,c} \in \mathcal{L}_{r,c} | \exists l'_o, h'_{r,c} ((l_{r,c},l_o,h_{r,c},n_a),(l'_{r,c},l'_o,h'_{r,c},n_a'))\in \mathcal{T}_{N}\}$. We define the set of possible successor robot actions at the next step as $succ_{n_a}(n_a,l_{r,c},l_{r,c}',l_o,l'_o,h_{r,c},h_{r,c}') = \{ n_a' \in \mathcal{N}_a  | ((l_{r,c},l_o,h_{r,c},n_a) , (l'_{r,c},l'_o,h'_{r,c},n_a')) \in \mathcal{T}_{N} \}$. We define the set of successor locations of the obstacle. $succ_o(l_{r,c},l_o,n_a) = \{l'_o \in \mathcal{L}_o | \exists l'_{r,c}, h'_{r,c}. ((l_{r,c},l_o,h_{r,c},n_a) , (l'_{r,c},l'_o,h'_{r,c},n_a')) \in \mathcal{T}_{N} \}$.
Later we will use a belief abstraction inspired by \cite{bharadwaj2018synthesis} to solve our synthesis in a partially observable environment.

The task planner models the robot and environment interplay as a two-player game. The robot action is Player 1 while the possible adversarial obstacle is Player 2. The synthesized strategy guarantees that the robot will always win the game by solving the following reactive problem.

\noindent\textbf{Reactive synthesis problem:}
Given a transition system $\mathcal{T}_{N}$ and linear temporal logic specifications $\psi$, synthesize a winning strategy for the robot such that only correct decisions are generated in the sense that the executions satisfy $\psi$.

The action planner is synthesized using the same game structure as the navigation planner, with possible states and actions corresponding to Section~\ref{subsec:action_planner}. Non-deterministic robot location transitions are captured in the robot successor function $succ_{r,f}(l_{r,f},h_{r,f},\boldsymbol{a}_{\rm HL}) = \{l'_{r,f} \in \mathcal{L}_{r,f}, h'_{r,f} \in \mathcal{H}_{r,f} | ((l_{r,f},h_{r,f},\boldsymbol{a}_{\rm HL}),(l'_{r,f},h'_{r,f},\boldsymbol{a}_{\rm HL}'))\in \mathcal{T}_{A}\}$, where $\mathcal{T}_{A}$ is the transition relation in the action planner. Compared to the transition relation $\mathcal{T}_{N}$, $\mathcal{T}_{A}$ does not have the obstacle location $l_o$ but includes locomotion actions $\boldsymbol{a}_{\rm HL}$.  Given the current robot state and action, $succ_{r,f}$ provides a set of possible locations at the next turn in the game. Obstacle avoidance is taken care of in the navigation game the obstacle location $\mathcal{L}_o$ and successor function $succ_o$ are not needed for action planner synthesis. Since reactive synthesis is used for both navigation and action planners, and the action planner guarantees the robot transition in the navigation game, the correctness of this hierarchical task planner is guaranteed.

\subsection{Belief Space Planning in A Partially Observable Environment}
The navigation planner above synthesizes a safe game strategy that is always winning but only in a fully observable environment. 
We relax this assumption by assigning the robot a visible range only within which the robot can accurately identify a dynamic obstacle's location. To reason about where an out-of-sight obstacle could be, we devise an abstract belief set construction method based on the work in \cite{bharadwaj2018synthesis}. Using this belief abstraction, we explicitly track the possible discrete locations of a dynamic obstacle, rather than assuming it could be in any non-visible cell. The abstraction is designed by partitioning regions of the environment into sets of discrete belief regions ($R_b$) and constructing a powerset of these regions ($\mathcal{P}(R_b)$). We choose smaller partitions around static obstacles that may block the robot's view as this allows the planner to guarantee collision-free navigation for a longer horizon like the scenario depicted in Fig. \ref{fig:belief_trans}.
We index each set in $\mathcal{P}(R_b)$ to represent a belief state $b_o \in \mathcal{B}_o$ that captures non-visible regions potentially with a dynamic obstacle.

The fully observable navigation game structure is modified to generate a partially observable belief-based navigation game with an updated state $\mathcal{S}_{{\rm belief}}$ and transition system $\mathcal{T}_{\rm belief}$ In addition to the obstacle location $l_o \in \mathcal{L}_o$, $\mathcal{S}_{{\rm belief}}$ captures the robot's belief of the obstacle $b_o \in \mathcal{B}_o$. A visibility function $vis : \mathcal{S}_{{\rm belief}} \rightarrow \mathbb{B} $ is added such that it maps the state ($l_{r,c}, l_o$) to the Boolean $\mathbb{b}$ as \textsf{True} if and only if $l_o$ is a location in the visible range of $l_{r,c}$. We do not need to modify $succ_{n_a}$ since the dynamic obstacle only affects the possible one-step robot action if it is in the visible range. $succ_r$ also remains the same as the relationship between the robot's actions and its state is not changed by the belief. The set of possible successor beliefs of the obstacle location, $b'_o$, is defined as $succ_{b_o} =\{b'_o \in \mathcal{B}_o | ((l_{r,c},b_o,l_o) , (l'_{r,c},b'_o,l_o)) \in \mathcal{T}_{\rm belief} \}$ where $b'_o$ indexes $\emptyset$ 
when $vis(l_{r,c},l'_o) = \textsf{True}$ and $b'_o$ indexes a nonempty set in $\mathcal{P}(R_b)$ when $vis(l_{r,c},l'_o) = \textsf{False}$.

\begin{figure}[t]
\centering
\begin{subfigure}{.68\columnwidth}
  \includegraphics[width=\linewidth]{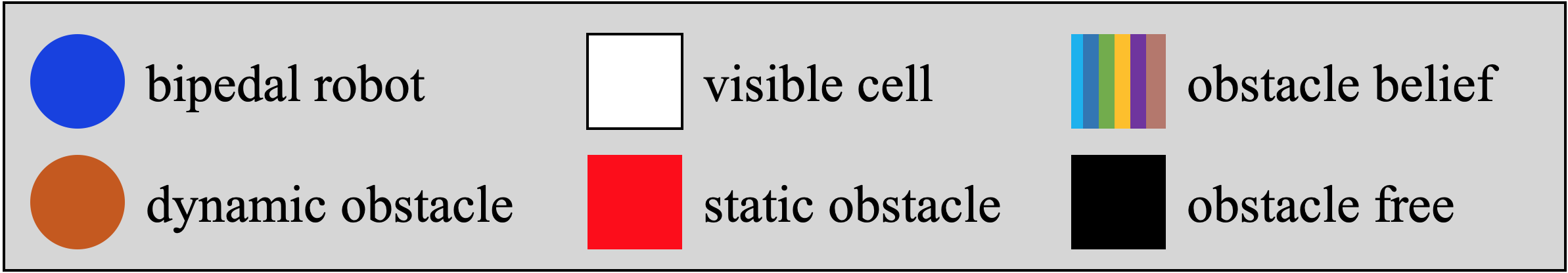}
  \label{fig:Belief_trans1}
\end{subfigure}%

\begin{subfigure}{.5\columnwidth}
  \centering
  \includegraphics[width=.95\linewidth]{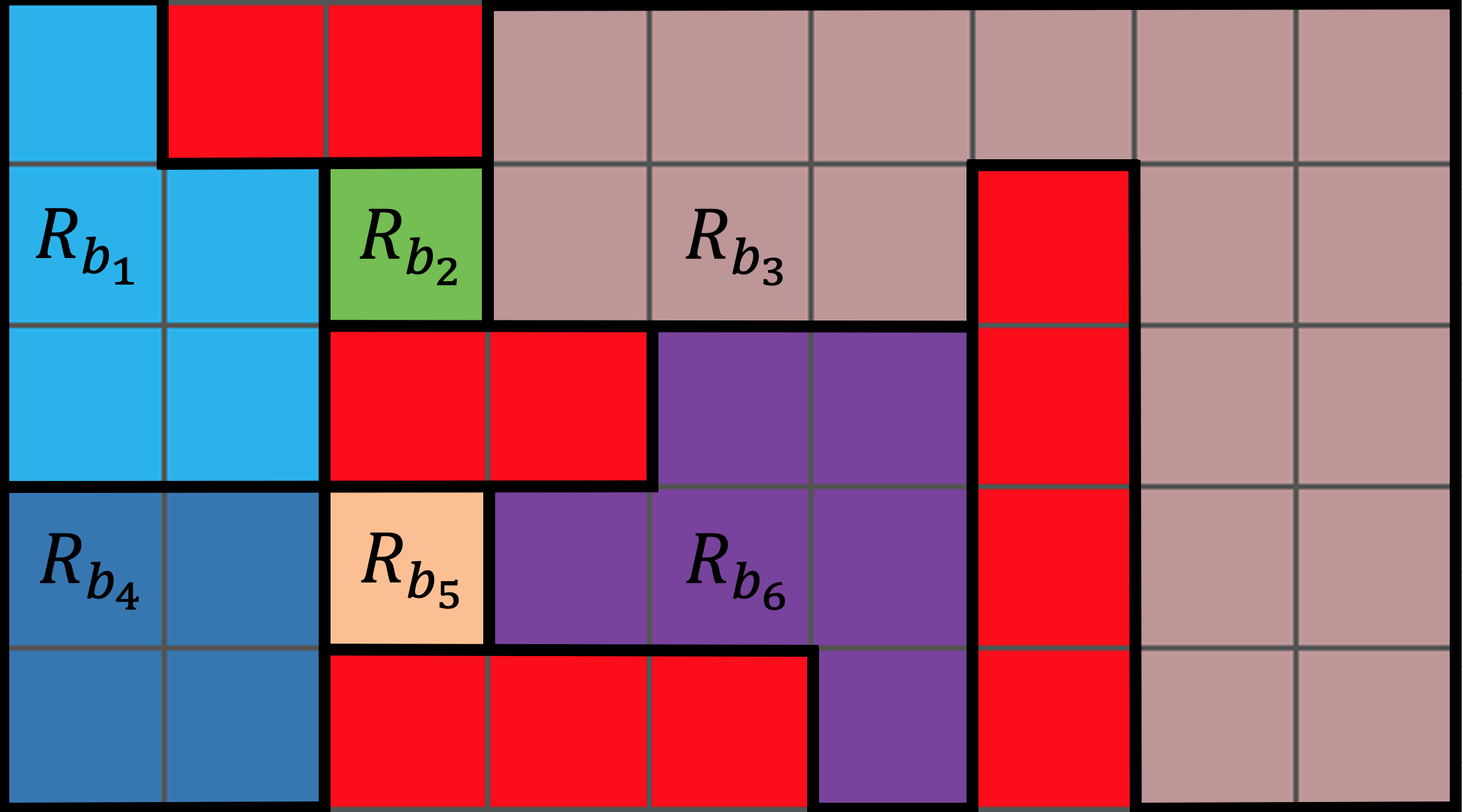}
  \caption{Environment divided into belief regions}
  \label{fig:Belief_trans1}
\end{subfigure}%
\begin{subfigure}{.5\columnwidth}
  \centering
  \includegraphics[width=.95\linewidth]{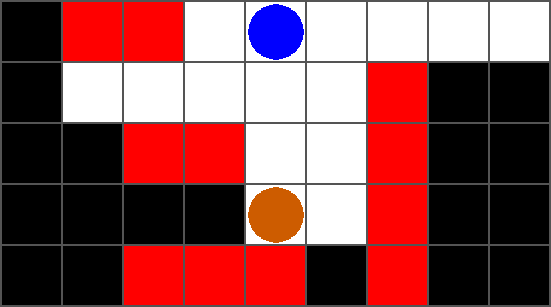}
  \caption{Obstacle before leaving visible range}
  \label{fig:Belief_trans2}
\end{subfigure}
\par\medskip
\begin{subfigure}{.5\columnwidth}
  \centering
  \includegraphics[width=.95\linewidth]{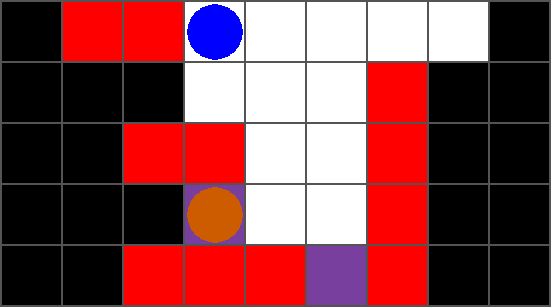}
  \caption{Obstacle not visible to robot}
  \label{fig:Belief_trans3}
\end{subfigure}%
\begin{subfigure}{.5\columnwidth}
  \centering
  \includegraphics[width=.95\linewidth]{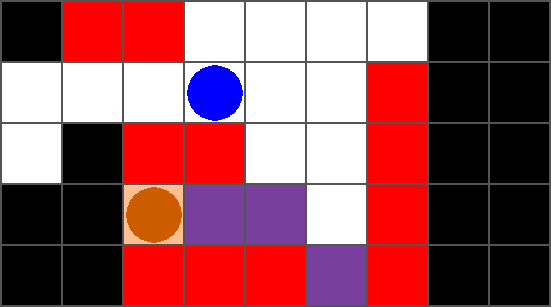}
  \caption{Obstacle not visible to robot}
  \label{fig:Belief_trans4}
\end{subfigure}
\par\medskip
\begin{subfigure}{.5\columnwidth}
  \centering
  \includegraphics[width=.95\linewidth]{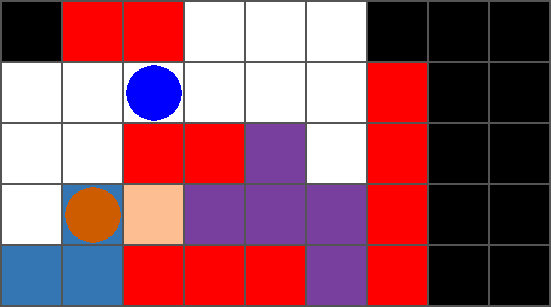}
  \caption{Obstacle not visible to robot}
  \label{fig:Belief_trans5}
\end{subfigure}%
\begin{subfigure}{.5\columnwidth}
  \centering
  \includegraphics[width=.95\linewidth]{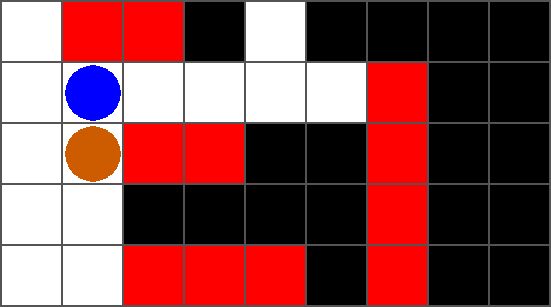}
  \caption{Obstacle reappears in visible range}
  \label{fig:Belief_trans6}
\end{subfigure}
\caption{Simulation showing how the navigation planner's belief evolves when the dynamic obstacle leaves the visible range for several turns. 6 colored belief regions are shown, as well as the robot (blue circle), the dynamic obstacle (orange circle), and the static obstacles (red cells). Black cells represent non-visible cells believed to be obstacle-free while white cells are visible. The planner believes the obstacle could be in any colored cell depicted, and can therefore reason where the obstacle could and could not reappear, allowing the planner to determine which navigation actions are safe.
}
\label{fig:belief_trans}
\vspace{-0.2in}
\end{figure}

Four classes of belief transitions, shown in Fig. \ref{fig:belief_trans}, are defined for accurate and meaningful belief tracking:
\begin{itemize}
   \item Visible to visible: as in the fully observable case, the obstacle may transition to any adjacent visible cell.
   \item Visible to belief: the belief state represents the set of regions containing non-visible cells adjacent to the obstacle's previous visible location.
   \item Belief to belief: the obstacle could be in any non-visible (or newly visible\footnote{Due to the turn-based nature of the game, an obstacle may be in the robot's visible range after the robot makes a move, but the obstacle may move from this newly visible cell before the robot reevaluates its new visible range.
   }) cell represented by the current belief state, the next belief state represents the current belief plus the belief regions the dynamic obstacle could have entered given its limited motion capability. 
   \item Belief to visible: similar to the previous case, the current obstacle may be in any non-visible or newly visible cell represented by the planner's belief, and may move to any adjacent cell, which defines the visible cells it could appear in at the next time step.

\end{itemize}

This method of belief tracking guarantees that all real transitions the obstacle can make during its turn are captured in the planner's belief. When the obstacle enters cells in a new belief region, the planner believes it could be anywhere in that region, therefore the belief is an over-approximation of possible obstacle locations. We guarantee that the obstacle is within the regions captured by the belief state, therefore we can guarantee that the obstacle can only appear in
a visible cell when there is a modeled transition from the current belief state to that cell.

Since both the action planner and the allowable navigation actions remain the same for the partially observable game, 
the game captures the same safety guarantees,
but allow for a larger set of navigation options than would be possible without tracking the belief of the dynamic obstacle's location. 




\subsection{Belief Tracking of Multiple Obstacles}
\label{subsec:belief_multi}
Our task planner is extensible to environments with multiple dynamic obstacles. 
It is possible to directly add any number of additional obstacles and their associated beliefs to the navigation planning game, however, the synthesis has polynomial time complexity. To improve computational tractability, we merge all non-visible obstacles' believed states into one combined belief region. Reasoning about a combined belief region still allows the planner to guarantee collision-free navigation without the complexity of tracking each obstacle individually.


To model a combined belief state we separate the obstacles' state from its belief. Each obstacle's state is either a visible cell on the grid or an index representing the obstacle is not visible ($l_{o,i,c}\in{L}_{o,i,c} | {L}_{o,i,c} = \mathcal{L}_{o}+\mathcal{I}_{nv}$). The joint belief state consists of the powerset of belief regions, including the empty set when all obstacles are visible. ($b_{oj}\in\mathcal{B} | \mathcal{B} = \mathcal{P}(R_b)$).


We generate a new multi-obstacle game structure $\mathcal{G}_{\rm combined-belief} := (\mathcal{S}_{\rm belief}, s_{\rm belief}^{\rm init}, \mathcal{T}_{\rm belief}, vis)$ with
\begin{itemize}
    \item $\mathcal{S}_{\rm belief} = \mathcal{L}_{r,c} \times \mathcal{L}_{o,i,c} \times \mathcal{B}_o \times \mathcal{H}_{r,c} \times \mathcal{N}_a$;
    \item $s_{\rm belief}^{\rm init} = (l_{r,c}^{\rm init},l_{o,i,c}^{\rm init},\{b_o^{\rm init}\},h_{r,c}^{\rm init},n_a^{\rm init})$ is the initial location of the obstacle known \textit{a priori};
    \item $\mathcal{T}_{\rm belief} \subseteq  \mathcal{S}_{\rm belief} \times \mathcal{S}_{\rm belief}$ are possible transitions where $((l_{r,c},l_{o,i,c},b_o,h_{r,c},n_a),(l'_{r,c},l'_{o,i,c},b'_o,h'_{r,c},n'_a)) \in \mathcal{T}_{\rm belief}$;
    \item $vis : \mathcal{S}_{{\rm belief}} \rightarrow \mathbb{B} $ is a visibility function that maps the state ($l_{r,c}, l_{o,i}$) to the boolean $\mathbb{b}$ as \textsf{True} iff $l_{o,i}$ is a real location in the visible range of $l_{r,c}$.
\end{itemize}
This game requires new specifications that govern $succ_{o,i}(l_{r,c},l_{o,i,c},b)$ and $succ_{b}(l_{r,c},l_{o,i,c},b)$, the allowable successor obstacle state and joint belief state, all other successor functions remain the same.
Even though the belief can represent multiple obstacles, the possible belief-to-belief transitions are the same as when the belief state represents a single obstacle. 
The key specifications to be changed are those governing $succ_{b_o}$ when an obstacle enters or exits the visible range. These changes can be made in the specifications defining the successor belief state $succ_{b_o}$. 


\section{Safe Recoverability and Replanning}
\label{sec:recoverbility}
\begin{figure}[t]
\centerline{\includegraphics[width=.49\textwidth]{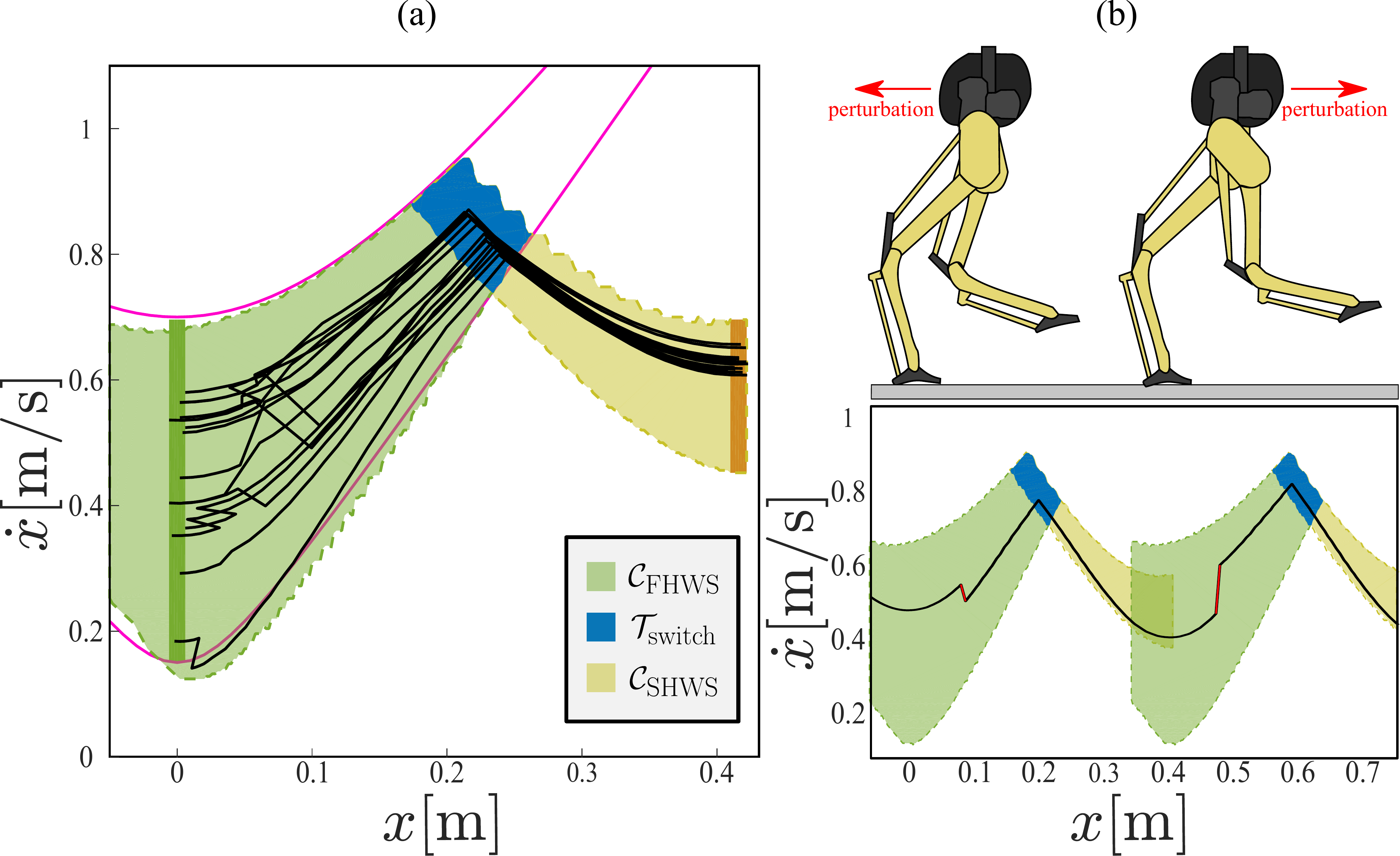}}
\caption{Results of OWS robust PSP. (a) shows 15 random keyframe transitions with bounded disturbances $\boldsymbol{\tilde \Xi}_{\rm execution}$, where $\mathcal{T}_{\rm OWS} = (0.416 \textnormal{ m}, [0.45, 0.7] \textnormal{ m/s})$. (b) Composition of controllable regions of OWS.  Here, we demonstrate that the synthesized controller is able to handle the perturbed CoM trajectory, shown as a black solid line, inside the superimposed controllable regions and successfully complete multiple steps when controllable regions are composed as proposed in Corollary~\ref{col:composition}.}
\label{fig:perturbation_rocs}
\end{figure}

The proposed sequential composition of controllable regions and reachability analysis in Sec.~\ref{subsubsec:capturebasin} allows our middle-level motion planner to be robust against perturbations exerted on the CoM in the sagittal space. Given a keyframe transition for OWS, the synthesized controller is able to guarantee that the CoM state reaches the targeted state within OWS, thus successfully completing an OWS safely. In Fig.~\ref{fig:perturbation_rocs}(b) we show the composition of controllable regions for multiple walking steps and demonstrate that the CoM trajectory is recoverable when employing the synthesized controller. Table.~\ref{tab:rocs_table} shows the success rate for randomly generated keyframe transitions, where the step length is $d_1 = 0.312$ m, $d_2 = 0.416$ m and $d_3 = 0.52$ m. The data is generated using ROCS \cite{li2018rocs} with $1000$ runs for each desired keyframe transition, a randomly selected $\boldsymbol{\xi}_c \in \boldsymbol{\Xi}_c$ and 
the applied disturbance bound at execution $\boldsymbol{\tilde \Xi}_{\rm execution}$\footnote{During online execution, the applied disturbance $\boldsymbol{\tilde \xi}_{\rm execution}$ is an instantaneous jump in the system states and is significantly larger than the considered disturbance during synthesis $\boldsymbol{\tilde \Xi}_{\rm synthesis}$ of the controllable regions.} is uniformly distributed within [$-2, 2$] m for CoM position and [$-5, 5$] m/s for CoM velocity.
The controllable regions are synthesized with state space granularity of $(0.002$ m, $0.004$ m/s), a control input $\omega \in [2.8,3.5]$ rad/s with a granularity of $0.02$ rad/s, 
and the added noise bound at synthesis $\boldsymbol{\tilde \Xi}_{\rm synthesis}$ is uniformly distributed within [$-0.01, 0.01$] m for CoM position and [$-0.02, 0.02$] m/s for CoM velocity. In Fig.~\ref{fig:perturbation_rocs}(a), we show $15$ successful random keyframe transitions where $v_{{\rm apex},n}=[0.45, 0.7]$ m/s and $d = 0.415$ m. The offline synthesis of the controller and controllable regions of a single transition as described in Table.~\ref{tab:rocs_table} takes on average 3.6 seconds.

 Large perturbations can push the system state outside of the controllable regions and the synthesized controller cannot recover to $\mathcal{T}_{\rm switch}$. To safely recover from such large perturbations, we employ a variant of the capture point formulation \cite{zhao2017robust,englsberger2011bipedal} to redesign the next foot position $x_{{\rm foot},n}$ while maintaining the desired $v_{{\rm apex},n}$ via the following formula:
\begin{equation}
    x_{{\rm foot},n} = x_{\rm switch} + \frac{1}{\omega}(\dot{x}_{\rm switch,dist}^{2} + v_{{\rm apex},n}^{2})^{1/2}
\end{equation}
where $x_{\rm switch}$ is determined analytically based on the nominal transition, and $\dot{x}_{\rm switch,dist}$ is the post-disturbance sagittal CoM velocity at switch instant and computed through a position guard $x = x_{\rm switch}$ shown as the vertical dashed line in Fig.\ref{fig:sagittal_replan} (a). The nominal foot position is determined by the high-level waypoint. In case that the new foot location lands in a different fine cell, the \textit{online} integration mechanism between the high level and middle level will update the action planner for a new waypoint location as shown in Fig.~\ref{fig:sagittal_replan}(b) and (c).
The action planner reacts to the perturbation by replanning $d$ and $\Delta\theta$ in $\boldsymbol{a}_{\rm HL}$, which further induces a waypoint change at the next walking step. In particular, the non-deterministic transition flag $t_{nd} = \{ \textsf{nominal}, \textsf{forward}, \textsf{backward}\}$ indicates the perturbation direction. The automata shown in Fig. \ref{fig:sagittal_replan} (c) is a fragment from the larger action planner consisting of 21447 nodes. The navigation planner automaton has 20545 nodes. Online resynthesis of these planner automata is computationally intractable, and thus we incorporate the non-deterministic transition flag $t_{nd}$ into the automaton offline synthesis and employ them online for action replanning.

\begin{rem}
It is common sense that any robotic system can not handle arbitrary large perturbations (due to limited actuation, control capability, kinematics limits, etc). Here, we merely demonstrate that our replanning strategy (i.e., the nondeterministic transitions in the action planner in Sec.\ref{subsec:action_planner}) has the potential to handle certain large perturbations such that the CoM state is pushed outside the controllable region. Certainly, there is no formal guarantee of recovery from extremely large perturbations. Note that, the recovery is formally guaranteed when the perturbed CoM is within the controllable region.
\end{rem}

\begin{table}[t]
    \centering
    \caption{Success rate of perturbed OWS transitions}
\begin{tabular}{ |c|c|c|c|  }
 \hline
 \multirow{2}{*}{$v_{{\rm apex},n}$ margin}&  \multicolumn{3}{c|}{success rate}\\
\cline{2-4} & $d_1$ & $d_2$ & $d_3$ \\
\hline
$[0.2, 0.45]$ m/s & $90.2\%$ & $91.6\%$ & $92.5 \%$ \\
$[0.45, 0.7]$ m/s & $91.8 \%$ & $ 92.2 \%$ & $93.6 \%$\\
\hline
\end{tabular}
\label{tab:rocs_table}
\end{table}
\vspace{-0.15in}

\begin{figure}[ht!]
\begin{subfigure}{1\columnwidth}
\centerline{\includegraphics[width=.98\textwidth]{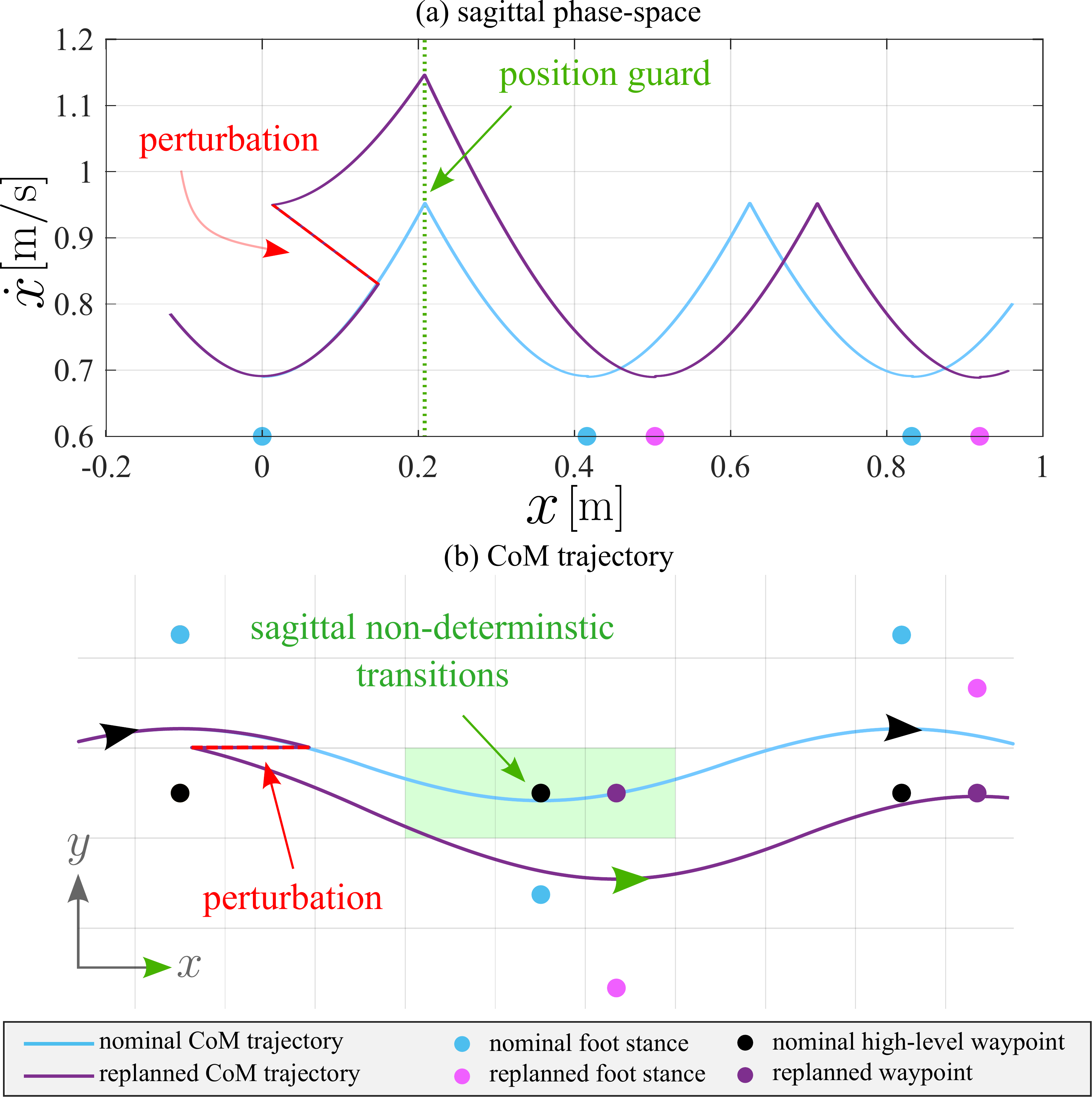}}
\caption*{(c) Navigation Automaton Fragment with Replanning}
\end{subfigure}%
\par\smallskip
\centering
\begin{subfigure}{1\columnwidth}
  \includegraphics[width=\linewidth]{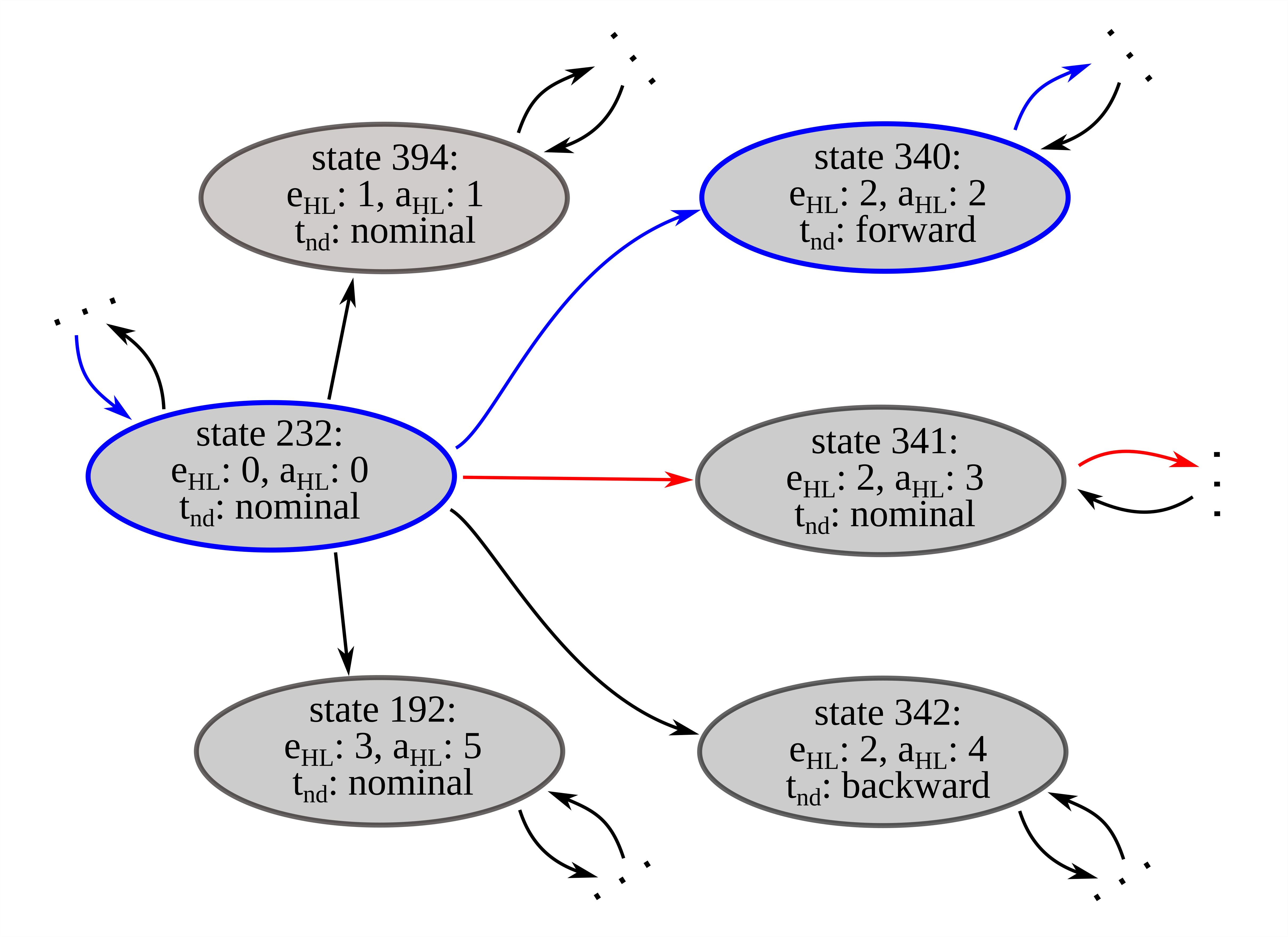}
  \label{fig:automaton}
\end{subfigure}%
\vspace{-0.15in}
\caption{Safe recovery from a large perturbation. (a) shows the sagittal phase-space plan, where a position guard is used to determine a safe replanned foot location to recover from the perturbation. (b) shows the CoM trajectory in Cartesian space and the \textit{online} integration of the high-level action planner and the middle-level PSP for a waypoint modification.
(c) shows a fragment of the synthesized action planner automaton capturing modeled non-deterministic transitions (with the associated flag $t_{nd}$). For each next state of the environment ($e_{\rm HL}$), there is a set of game states corresponding to all possible $t_{nd}$. Blue transitions capture the replanned execution when the robot CoM is perturbed forward while red transitions depict a nominal execution without any perturbation. Numerical values for $e_{\rm HL}$ and $a_{\rm HL}$ index distinct environment state and robot action sets in the algorithm implementation.
}
\label{fig:sagittal_replan}
\vspace{-0.05in}
\end{figure}

\vspace{0.1in}
\section{Low-level Control Implementation}
\label{sec:hardware}
In this section, we design a low-level full-body-dynamics-based controller to track the motion plan from the TAMP on the Digit bipedal robot \cite{agility}. The low-level hardware controller incorporates the angular-momentum-based foot placement controller\footnote{This controller modifies the foot placement from the phase-space plan to improve velocity tracking performance and achieve safer maneuvering.}~\cite{Gong2022AngularMomentum} and the passivity-based controller~\cite{sadeghian2017passivity} with modifications to handle non-periodic motion plans.

\subsection{Non-periodic Angular-Momentum-based Foot Placement Controller}

For low-level online execution, we build a foot placement controller based on the angular-momentum linear inverted pendulum (ALIP)~\cite{Gong2022AngularMomentum}, with a few critical modifications to track non-periodic phase-space plans. The motion plans between the ALIP controller \cite{Gong2022AngularMomentum} and our PSP differ in three folds: (i) state discretization; (ii) step duration; and (iii) the coordinate reference frame. Therefore, we adapt our phase-space plan and modify the ALIP controller to bridge these gaps.

First, the ALIP controller discretizes the ROM motion plan at $\boldsymbol{\xi}_{\rm switch}$, while our PSP uses $\boldsymbol{\xi}_{\rm apex}$ as keyframes. The phase-space plan between two consecutive keyframes is further transformed into an equivalent switching state, so it can be used by the ALIP controller. 
In the lateral plane, the ALIP controller takes a desired lateral velocity based on a periodic gait with fixed desired step width, whereas our phase-space plan has varying step widths and lateral velocities. We extend the ALIP controller to take in non-periodic lateral target velocities from the phase-space plan.

Another assumption in~\cite{Gong2022AngularMomentum} is that each step has a constant duration $T_{\rm ALIP}$ between $\boldsymbol{\xi}_{{\rm switch},c}$ and $\boldsymbol{\xi}_{{\rm switch},n}$. The steps in PSP, however, are non-periodic, and the step time $T_{\rm PSP}$ is the duration between $\boldsymbol{\xi}_{{\rm apex},c}$ and $\boldsymbol{\xi}_{{\rm apex},n}$. To command non-periodic phase-space plans to the ALIP controller, we relax the ALIP controller to take a varying step time $T_{{\rm ALIP},c} = t_{{\rm SHWS},p} + t_{{\rm FHWS},c}$ as seen in Fig.~\ref{fig:PSP2FP}. Note that two consecutive steps may have the same step time, i.e., $T_{{\rm ALIP},c} = T_{{\rm ALIP},n}$ 
can be true even in non-periodic walking.

In Sec.~\ref{subsubsec:safetyprop}, we propose a set of ROM-based safety theorems to generate safe turning behaviors (see Fig.~\ref{fig:steering_safety}). In these turning cases, however, the torso's heading direction is changing constantly and cannot be used as the reference frame. We instead adopt the PSP waypoint's heading direction as the reference frame and align the stance toe with that reference frame. As such, the target velocity in the next step needs a proper transformation to the reference frame of the current walking step. For example, the next-step trajectory from PSP is originally represented in that step's reference frame shown in red in Fig.~\ref{fig:PSP2FP}. To command safe turning, PSP hyperparameters (e.g., CoM velocities and foot placements) need to be transformed to the current reference frame shown in black in Fig.~\ref{fig:PSP2FP}. Fig.~\ref{fig:PSP2FP} shows the sagittal CoM phase-space trajectory for three consecutive keyframes, where $\dot{x}_{{\rm switch},n}$ is transformed from the next step's reference frame to the current one, as shown in red and black, respectively.

Although our hardware experiments are limited on even terrain walking (i.e., $\boldsymbol{p}_{\rm com,z}$ is fixed), our framework has demonstrated the non-periodic locomotion capability when tracking high-level waypoints and can be naturally extended to rough terrain maneuvering.


\begin{figure}[t]
\centerline{\includegraphics[width=.48\textwidth]{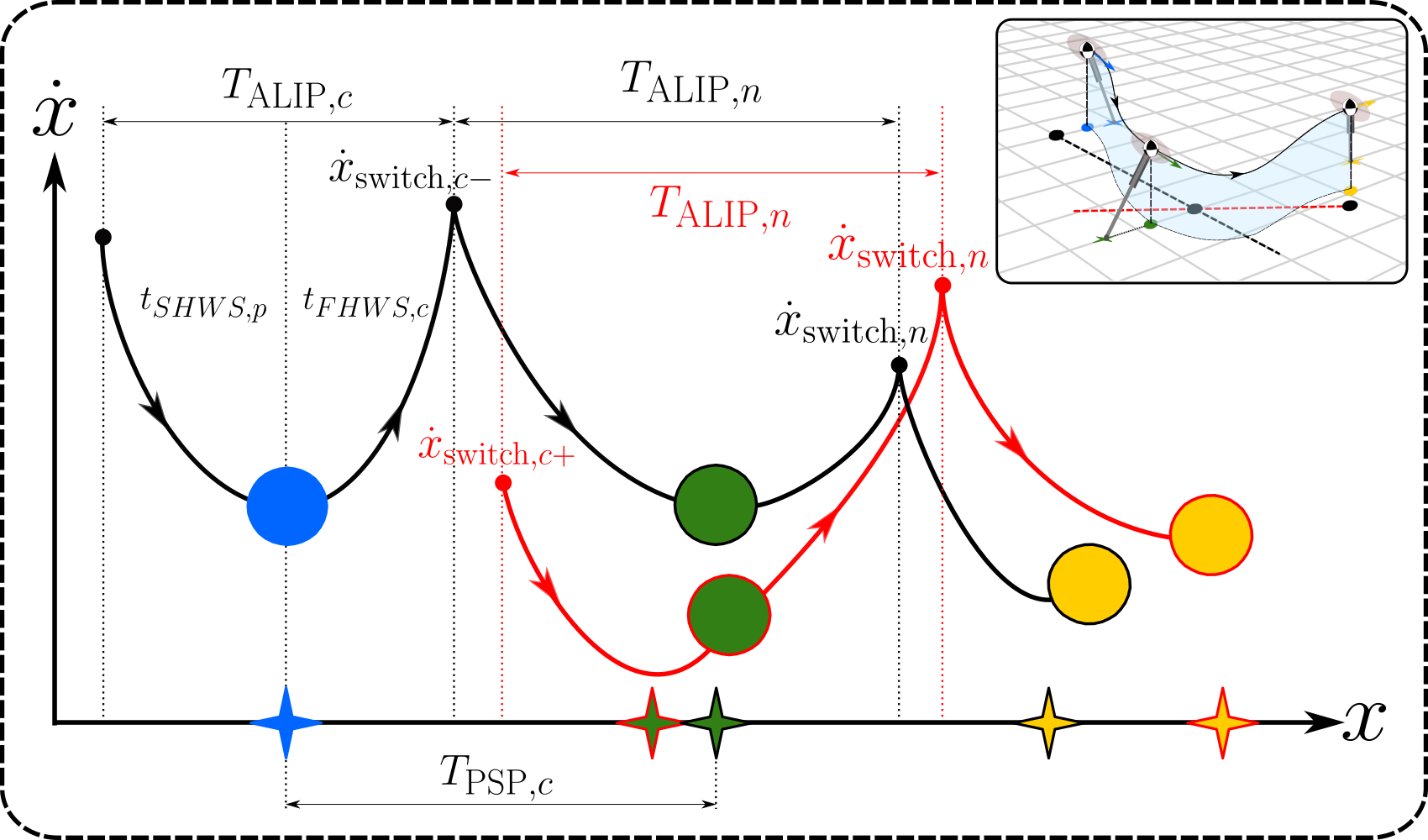}}
\caption{Transforming PSP to ALIP controller trajectory for the turning scenario. During the current walking step, the desired switching velocity needs to be transformed from the PSP representation \textcolor{red}{$\dot{x}_{{\rm switch},n}$} (in the next-step coordinate frame) to the current coordinate frame $\dot{x}_{{\rm switch},n}$. After contact at $\dot{x}_{{\rm switch},c}$, the coordinate reference is rotated according to the turn, and both the current switching velocity \textcolor{red}{$\dot{x}_{{\rm switch},c+}$} and \textcolor{red}{$\dot{x}_{{\rm switch},n}$} are represented in the next frame (red). In PSP, the coordinate transformation is executed at the apex state as explained in Sec.~\ref{subsubsec:safetyprop} and not switching state.}
\label{fig:PSP2FP}
\vspace{-0.05in}
\end{figure}

\subsection{Passivity-based Controller}
\label{sec:passivity}


At the low level, we implement a passivity-based controller to execute the phase-space plan from the middle-level motion planner. First, we design a full-body reference trajectory for control. We have, from PSP, the foot placement location and the CoM position trajectory that comprise the hyperparameters of the reference trajectory for each walking step. A smooth curve constructed with a half-period cosine function connects the measured state at the beginning of a step and the desired state at the end. The curve is defined in the task space, and the state incorporates the relative transformation between the CoM and two feet. A set of geometry-based inverse kinematics functions construct the full-body reference trajectory online.

We adopt the passivity-based controller \cite{sadeghian2017passivity} to achieve an accurate tracking performance at the joint level. The passivity-based controller preserves the natural dynamics, which is more appealing compared to the input-output linearization technique \cite{IOL_grizzle} that cancels these dynamics. 
Our controller takes in the target acceleration $\boldsymbol{\ddot{q}}^{{\rm target}}(t) = \boldsymbol{\ddot{q}}^{d}(t) + \boldsymbol{k}_p \boldsymbol{q}^e(t) + \boldsymbol{k}_d \boldsymbol{\dot{q}}^e(t)$, where the $\boldsymbol{\ddot{q}}^{d}(t)$ is the desired joint acceleration from the full-body reference trajectory, $\boldsymbol{q}^e(t)$ and $\boldsymbol{\dot{q}}^e(t)$ are the joint-level error for position and velocity, and $\boldsymbol{k}_p$ and $\boldsymbol{k}_d$ are the joint-level PD gains. For Digit, the dimension of $\boldsymbol{q}$ is 28, including the 6-DoFs world-to-torso floating joint, two 4-DoFs arms, and two 7-DoFs legs. The actuator torque is calculated based on the full-order dynamics of the Digit robot.
The passivity-based controller achieves asymptotical tracking performance, i.e., the joint position error $\boldsymbol{q}^e(t) = \boldsymbol{q}^d(t) - \boldsymbol{q}(t)$ converges to zero asymptotically.

\begin{figure*}[t]
\centerline{\includegraphics[width=1\textwidth]{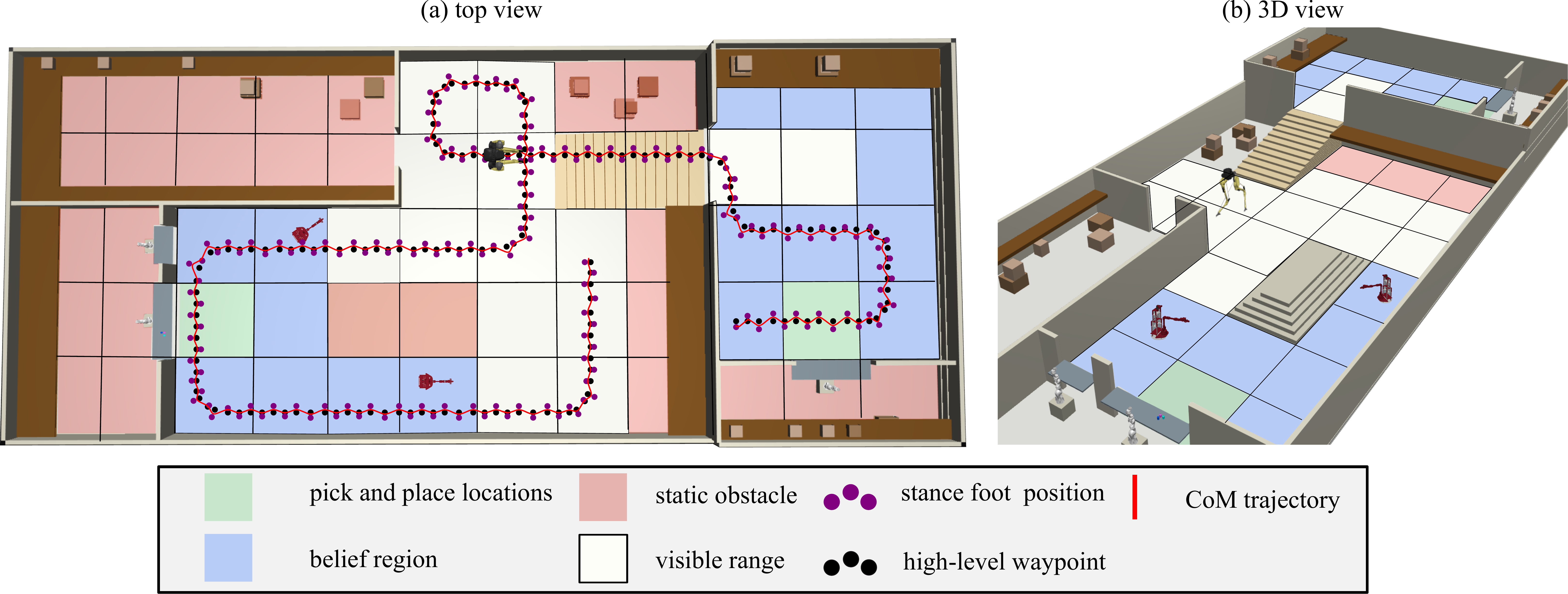}}
\caption{3D motion plans visualized using the Drake toolbox \cite{drake} on the Cassie robot navigating in the partially observable environment while avoiding collisions with two mobile robots that are treated as dynamic obstacles. Cassie's task is to move between designated initial and goal locations for package delivery. Trajectories of Cassie CoM, foot placements as well as environment coarse-level cell abstraction are shown in subfigure (a). Subfigure (b) shows a 3D view of the tested environment.}
\label{fig:psp}
\vspace{-0.1in}
\end{figure*}

Our passivity-based controller has several key modifications including the actuation at the toe joints and an acceleration reference design. One common feature of popular dynamic controllers \cite{gong2018feedback, HLIP_TRO_XIONG} for bipedal locomotion is the zero actuation of the ankle actuator (providing zero or a small damping torque to the ankle pitch and roll joints). The zero actuation allows the stance foot to quickly comply with the ground incline at the foot landing instant and makes the biped robot equivalent to a point-foot robot, consistent with the PIPM. These controllers rely on accurate foot placement to reach desired CoM velocity. A mismatch, however, usually exists between the PIPM and the full-order model. This mismatch can lead to non-trivial tracking errors, since the desired CoM velocity trajectory, analytically determined by the PIPM, is not an accurate description of how the full-order nonlinear system evolves. Therefore, a foot placement calculated by the PIPM, although accurately executed, would not necessarily drive the CoM to the desired velocity.
To this end, we adopt ankle control to compensate for this model mismatch. The desired acceleration of the torso is constructed using the feedback on the CoM state: $\boldsymbol{\ddot{q}}_{{\rm torso},xy} = g/h ((\boldsymbol{p}_{\rm com} - \boldsymbol{p}_{\rm foot}) + \boldsymbol{k}_{p,{\rm torso},xy} \boldsymbol{q}^e_{{\rm torso},xy} + \boldsymbol{k}_{d,{\rm torso},xy} \boldsymbol{\dot{q}}^e_{{\rm torso},xy})$, where $\boldsymbol{\ddot{q}}_{{\rm torso},xy}$ is the desired acceleration for the torso in the horizontal plane. $\boldsymbol{\ddot{q}}_{{\rm torso},xy}$ is a 2D subvector of its full vector, same for $\boldsymbol{q}^e_{{\rm torso},xy}$ and $\boldsymbol{\dot{q}}^e_{{\rm torso},xy}$. $\boldsymbol{k}_{p,{\rm torso},xy}$ and $\boldsymbol{k}_{d,{\rm torso},xy}$ are the task-level PD gains. The torso acceleration $\boldsymbol{\ddot{q}}_{{\rm torso},xy}$ results in additional control efforts for the stance leg including the ankle joints to trend the CoM toward the desired velocity.

\section{Implementation and Results}
\label{sec:results}
This result section evaluates the performance of (i) the high-level task planner by assessing its task completion, collision avoidance, and safe action execution; (ii) the middle-level motion planner by employing our designed keyframe decision maker to choose proper keyframe states and generating safe locomotion trajectories; and (iii) the low-level controllers for hardware implementation. 
The open-source code can be found here \href{https://github.com/GTLIDAR/safe-nav-locomotion.git}{\nolinkurl{https://github.com/GTLIDAR/safe-nav-locomotion.git}}. A video of the simulations is \href{https://youtu.be/oKGzubk2_9E}{\nolinkurl{https://youtu.be/oKGzubk2_9E}}. 

\subsection{LTL Task Planning Implementation}
The task planner is evaluated in an environment with multiple static and dynamic obstacles, and two rooms with different ground heights connected by a set of stairs as seen in Fig. \ref{fig:psp}. To generate the navigation planning abstraction, the environment is discretized into a $10\times5$ coarse grid, with a $2.7\times2.7$ m$^2$ cell size. $\mathcal{L}_{r,c}$ is the set of all accessible discrete cells, $\mathcal{H}_{r,c}$ is the set of cardinal directions, and $\mathcal{N}_a$ is a set of navigation actions in those cardinal directions (N, E, S, W).
Each coarse cell is further discretized into a finer $26\times26$ grid for local action planning. We model the possible actions as step length $d \in\{\textsf{small1},\textsf{small2},\textsf{medium1},\textsf{medium2},\textsf{large1},\textsf{large2}\}$, heading change $\Delta\theta \in \{\textsf{left}, \textsf{none}, \textsf{right}\}$), and step height $\Delta z_{\rm foot} \in \{z_{\rm down2},z_{\rm down1}, z_{\rm flat}, z_{\rm up1},z_{\rm up2}\}$. The possible heading changes $\Delta\theta \in \{-22.5^\circ, 0^\circ, 22.5^\circ\}$, are constrained by the minimum number of steps needed to make a $90^\circ$ turn, and the maximum allowable heading angle change that results in viable keyframe transitions as defined in Theorem~\ref{thm:steering1}. We choose $\Delta\theta = \pm 22.5^\circ$ so that a $90^\circ$ turn can be completed in four steps as shown in Fig. \ref{fig:non_determinstic}. Completing the turn in fewer steps is not feasible as it would overly constrain $v_{\rm apex}$, as can be seen in Fig. \ref{fig:steering_safety}(b). Due to the allowable heading change of $\pm 22.5^\circ$, $\mathcal{H}_{r,f}$ contains a discrete representation of the 16 possible headings the robot could have.


A set of specifications is designed to describe the allowable successor locations and actions in the transition system. Here, we only show a few specifications as examples: 
\begin{align}\nonumber
        &\square \big(( h_{r,f} = \mathcal{H}_{r,c}\wedge ((i_{\rm st} = \textsf{left} \wedge \Delta\theta = \textsf{right}) \\
        & \vee (i_{\rm st} = \textsf{right} \wedge \Delta\theta = \textsf{left})) \Rightarrow \bigcirc(d = \textsf{medium2})\big),\\\nonumber
        & \square \big(( h_{r,f} = \mathcal{H}_{r,c}\wedge ((i_{\rm st} = \textsf{left} \wedge \Delta\theta = \textsf{left}) \\
        & \vee (i_{\rm st} = \textsf{right} \wedge \Delta\theta = \textsf{right})) \Rightarrow \bigcirc(d = \textsf{small2})\big),
\end{align}
which govern the allowable step length during the first step of a turning process.

Both navigation and action planners are constructed by combining environment assumptions and system specifications generated by the successor functions described in Sections~\ref{subsec:task_planner_synth} and \ref{subsec:belief_multi} into a transition system and using the LTL synthesis tool SLUGS to generate a winning strategy. Synthesis occurs offline, and the winning strategy is efficiently encoded in a binary decision diagram (BDD) \cite{akers1978binary} which can be accessed online by interfacing the controller directly with SLUGS. At each turn of the game, the controller computes the new abstracted environment state and passes it to SLUGS which returns the corresponding system action.

\begin{figure}[t]
\centerline{\includegraphics[width=.48\textwidth]{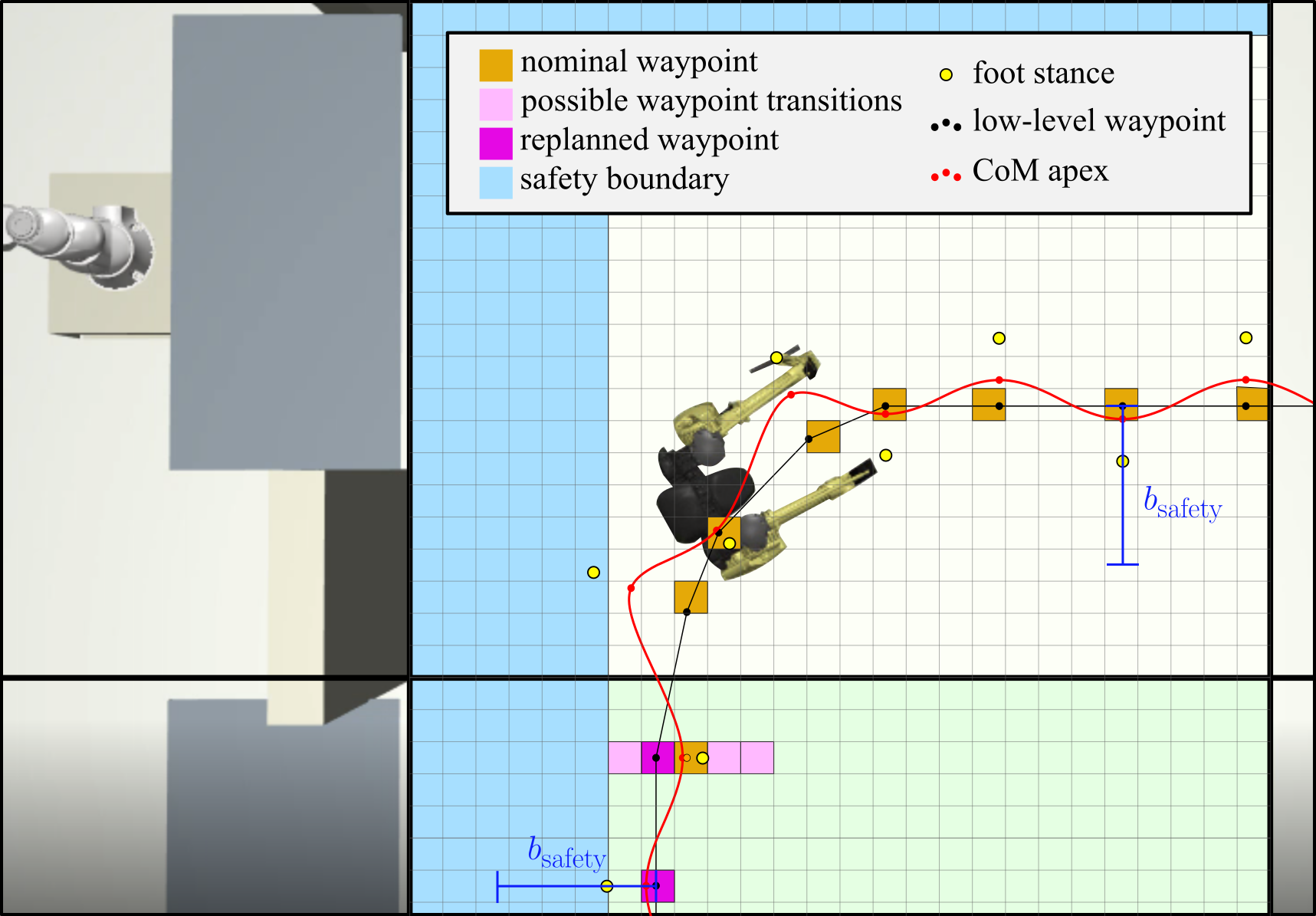}}
\caption{
Illustration of \textit{online} updating the high-level waypoint to maintain lateral tracking at the middle-level motion planner. The high-level waypoint is also required to keep a safe distance away from the adjacent coarse cell to avoid collisions with static or dynamic obstacles. In this run, we set the safety boundary to be $6$ fine cells as shown in light blue. 
}
\label{fig:top_down_turn}
\vspace{-0.05in}
\label{fig:result_turn}
\end{figure}

\subsection{Nominal Online Planning for A Pick and Place Task}
\label{subsec:results_nominal}
The middle-level motion planner is able to generate CoM trajectories of the ROM for a pick-and-place task infinitely often that includes traversing stairs, steering, stopping, and avoiding dynamic obstacles. The keyframe decision maker, detailed in Sec.~\ref{sec:Tracking}, selects the optimal next keyframe for waypoint tracking. The average execution time for one walking step in the middle level is $0.5$ ms. The action planner interfaces with the middle-level motion planner \textit{online} to pass the action set for the next keyframe. In the case when the keyframe decision maker cannot satisfy the lateral tracking of high-level waypoints in Proposition~\ref{thm:tracking_viability}, a new non-deterministic transition from the action planner is selected based on the modified lateral phase-space plan \textit{online}. The action planner receives the updated waypoint which allows the planner to choose the correct transition to the next game state. Our simulation shows that the robot successfully traverses uneven terrain to complete its navigation goals while steering away from dynamic obstacles when they appear in the robot's visible range. The robot's navigation trajectory is shown in Fig.~\ref{fig:psp}. The tracking results for multiple plans with different obstacle paths are detailed in Table.~\ref{tab:nominal_table} using PSP parameters given in Table.~\ref{tab:value_table}. Waypoint correction only occurs in the last step of a turning sequence due to the complexity of lateral tracking during steering scenarios. 
12 out of 260 steps\footnote{the step count refers to the number of high-level actions received by the middle-level motion planner, which includes stopping actions} result in alternative discrete state transitions in the lateral direction, all of which were seamlessly handled by the action planner as shown in Fig.~\ref{fig:result_turn}. This result shows that the integration of the high-level planner and the middle-level motion planner in an \textit{online} fashion allows for successful and safe TAMP.


\begin{table}[t]
    \centering
    \caption{Successful motion plan results for the pick and place task}
\begin{tabular}{ |c|c|c|  }
 \hline
\textbf {steps} & \textbf {turns} & \textbf {waypoint correction} \\
 \hline
 $200$ & $9$ & $4$ \\
 $260$ & $17$ & $12$ \\
 $500$ & $29$ & $22$ \\
 \hline

\end{tabular}

    \label{tab:nominal_table}
\end{table}
\begin{table}[t]
    \centering
    \caption{Nominal PSP parameters values}
\begin{tabular}{|c|c|c|c|}
 \hline
\textbf {parameter} & \textbf {value} & \textbf {parameter} & \textbf {value} \\
 \hline
 \hline
 $v_{{\rm apex, min}}$ & $0.20$ m/s & $v_{{\rm apex, max}}$ & $0.70$ m/s \\
 \hline
 $h_{\rm apex}$ & $0.985$ m & $\Delta z_{\rm foot}$ &$\{0, \pm 0.1, \pm 0.2\}$ m\\
 \hline
 \multirow{2}*{$\Delta y_{1, d}$} & $0.10$ m & \multirow{2}*{$\Delta y_{2, d}$} & \multirow{2}*{$0.14$ m} \\
 & ($0.0$ m for steering) & & \\
  \hline
 \multirow{2}*{$c_1$} & $1.0$ & \multirow{2}*{$c_2$} & $1.0$ \\
 & ($7$ for steering) & &  ($4$ for steering) \\
 \hline
  $c_3$ & $0$ & $c_4$ & $0$\\
 \hline
 $\Delta \theta$ & $\{0^\circ, \pm 22.5^\circ\}$ & $ b_{\rm safety}$ & $0.52$ m\\
 \hline

\multirow{3}*{$d$} & $\{0.21, 0.28, 0.31, $ & \multirow{3}*{$v_{\rm inc}$} & \multirow{3}*{$0.01$ m/s}\\
& $0.38, 0.42, 0.43,$ & & \\
& $0.47, 0.52\}$ m & & \\

 \hline
\end{tabular}

    \label{tab:value_table}
\end{table}
\vspace{-0.15in}




\subsection{Belief Space Planning}
The belief abstraction in the navigation planner is successful in tracking and bounding nonvisible obstacles as can be seen in Fig. \ref{fig:belief_trans}. The tracked belief enables the robot to navigate around static obstacles while guaranteeing that the dynamic obstacles are not in the immediate non-visible vicinity. Fig. \ref{fig:belief_results} depicts a snapshot of a simulation where the robot must navigate around such an obstacle to reach its goal states. The grid world environment is abstracted into 6 distinct belief regions resulting in 64 possible belief states. A successful strategy can be synthesized only when using a belief abstraction.
Without explicitly tracking possible non-visible obstacle locations, the task planner believes the obstacle could be in any non-visible cell when it is out of sight, including the adjacent visible cell in the next turn of the game. That means the planner can not guarantee collision avoidance and is not able to synthesize a strategy that would allow the robot to advance.
Fig. \ref{fig:Belief_sub2} depicts a potential collision that could occur in pink.
This comparison underlines the significance of the belief abstraction approach.

The belief abstraction provides additional information for deciding long-horizon navigation actions beyond guaranteeing immediate collision avoidance. In the simulation shown in Fig. \ref{fig:psp}, it is challenging to navigate around the vision occluding static obstacles at the lower-level (including the walls and a multi-stair platform). The synthesized strategy reacts to the additional information about the dynamic obstacle provided by belief tracking in three distinct ways. Based on the belief, the robot either (i) continues on the most direct route to the goal location; (ii) loops around to the right and positions itself to be able to go around either side of the static obstacle; or (iii) stops and waits until the dynamic obstacle disappears. The planner can choose any of these three strategies as long as all safety specifications are met. This non-deterministic mechanism offers the task planner flexibility in choosing safe navigation actions.


Generating global navigation task planners for two dynamic obstacles using a joint belief abstraction requires only 40\% of the synthesis time as that of independently tracking the belief state of each obstacle. Specifically, synthesizing a strategy for the scene in Fig. \ref{fig:psp} with two dynamic obstacles took 34 mins using joint belief tracking and 85 mins when individually tracking the belief of each obstacle.

\begin{figure}[t]
\centering
\begin{subfigure}{.5\columnwidth}
  \centering
  \includegraphics[width=.9\linewidth]{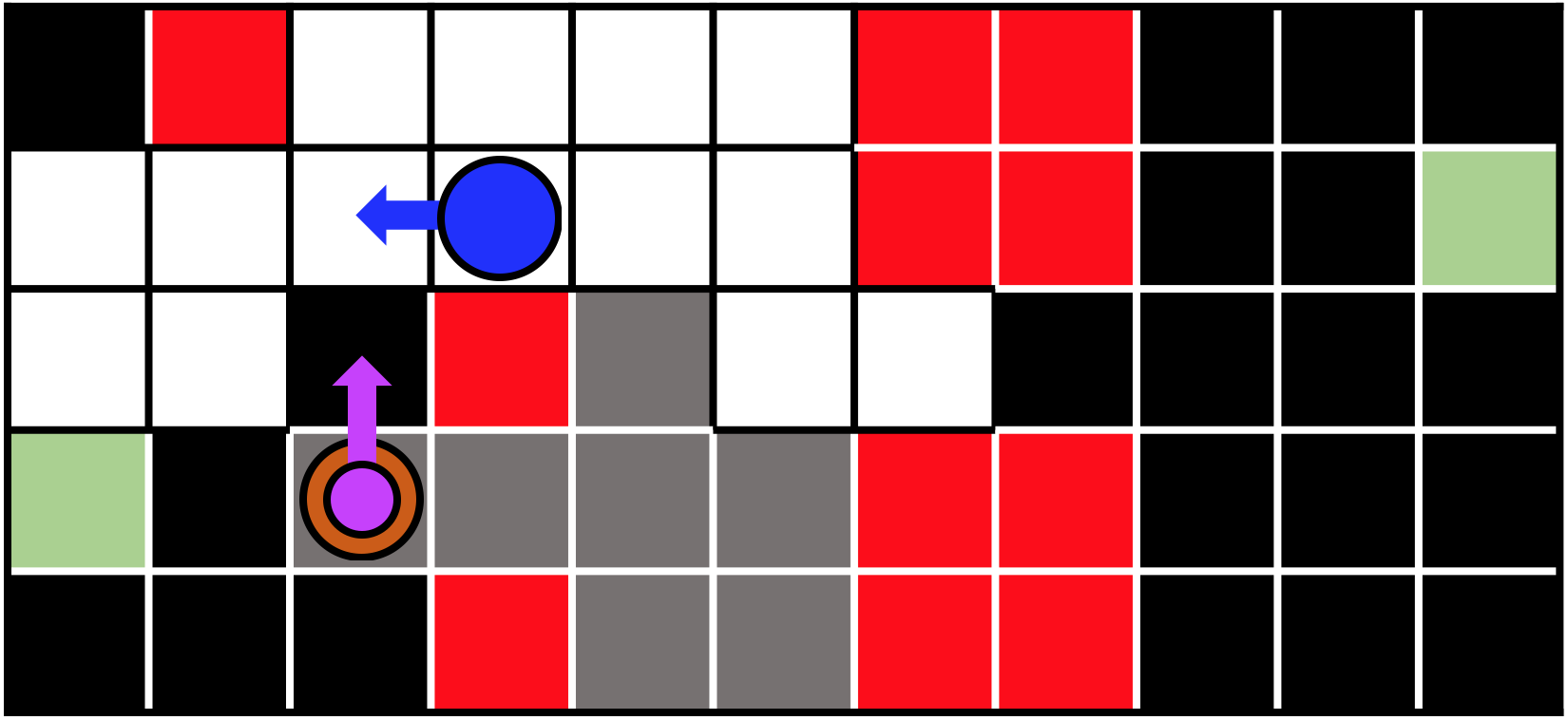}
  \caption{With explicit belief tracking}
  \label{fig:Belief_sub1}
\end{subfigure}%
\begin{subfigure}{.5\columnwidth}
  \centering
  \includegraphics[width=.9\linewidth]{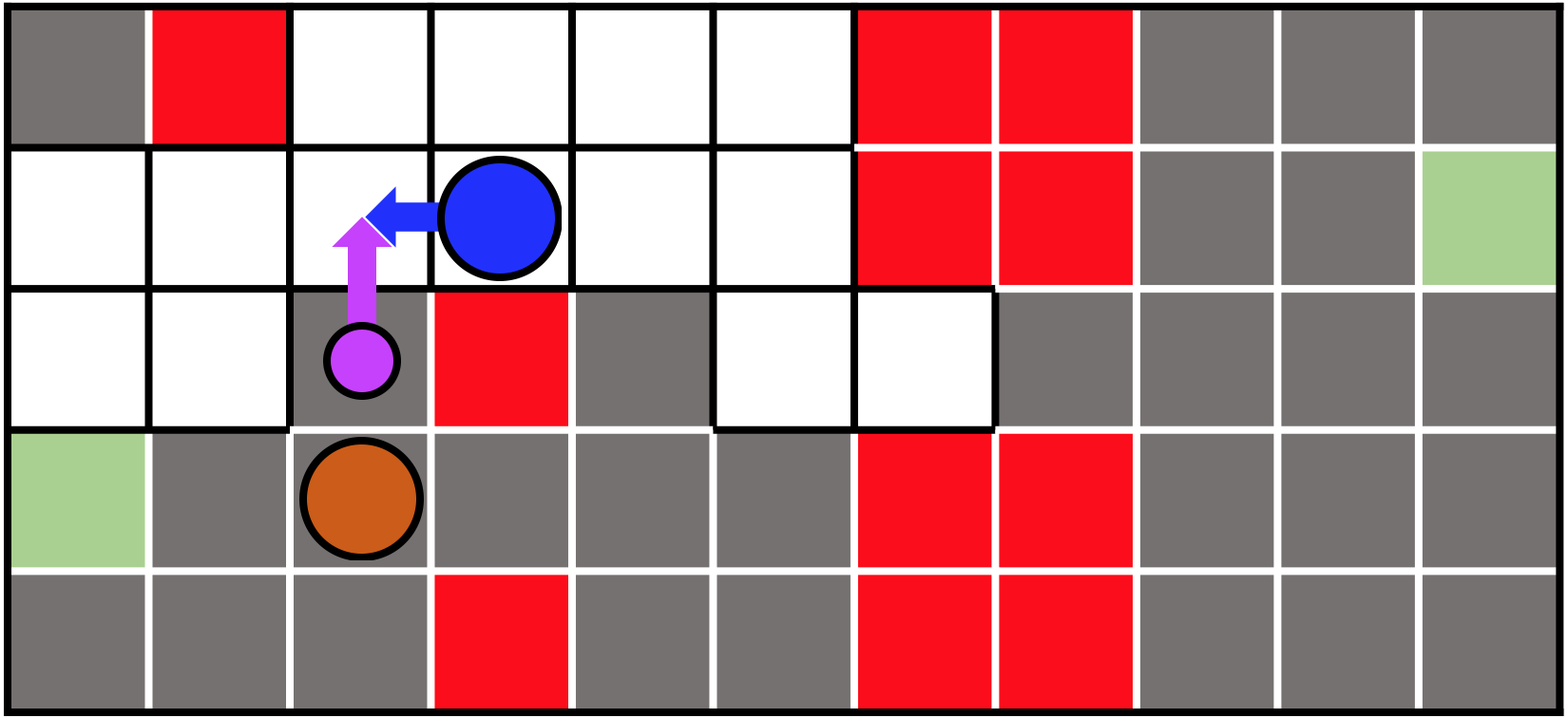}
  \caption{No explicit belief tracking}
  \label{fig:Belief_sub2}
\end{subfigure}
\caption{A snapshot of the coarse-level navigation grid during a simulation where the robot (blue circle) is going between the two goal states (green cells), while avoiding a static obstacle (red cells) and a dynamic obstacle (orange circle). White cells are visible while grey and black cells are non-visible. Gray cells represent the planner's belief of potential obstacle locations. The closest distance the obstacle could be to the robot, as believed by the planner, is depicted by the pink circle.
}

\label{fig:belief_results}
\vspace{-0.1in}
\end{figure}

\subsection{Hardware Experiment Setup and Results}
\label{sec:hardware_results}

\begin{figure}[t]
\centerline{\includegraphics[width=.48\textwidth]{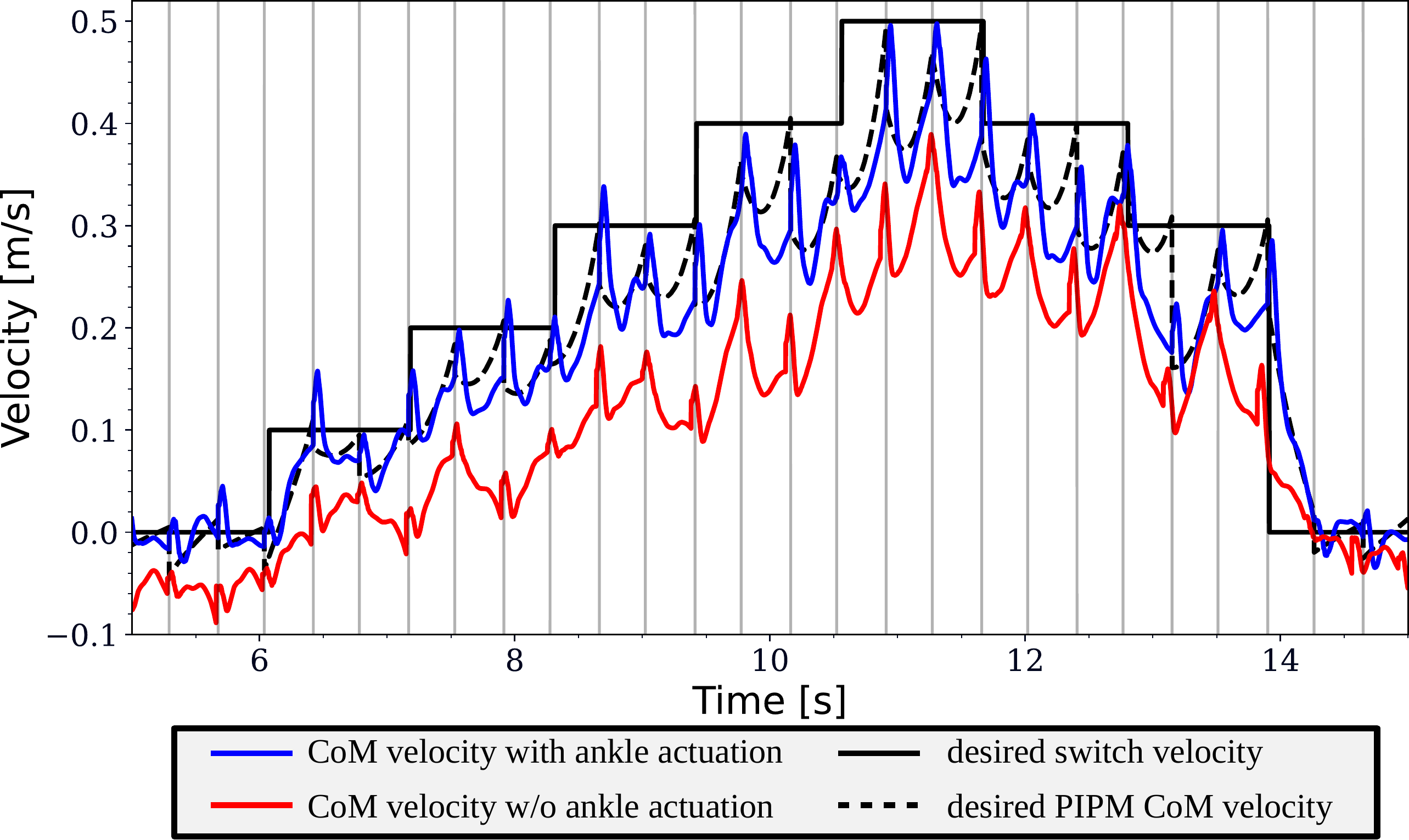}}
\caption{Sagittal velocity tracking performance of Digit hardware experiments. Given the same desired reference trajectory, the ankle-actuated velocity (shown in blue) achieves consistently small tracking errors, whereas the passive ankle velocity (shown in red) has a larger offset.}
\label{fig:vel_track}
\vspace{-0.05in}
\end{figure}

\begin{figure*}[t]
\centerline{\includegraphics[width=1\textwidth]{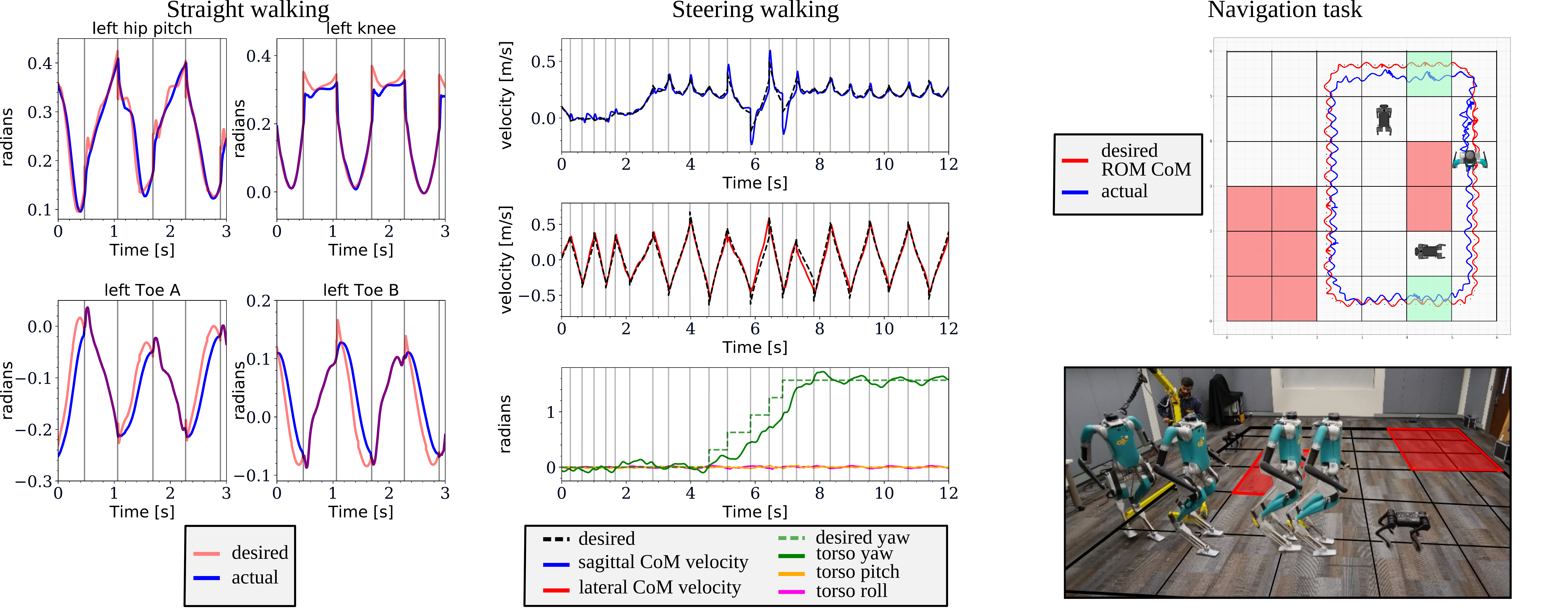}}
\caption{Hardware experiment results for Digit navigation tasks. For straight walking (left), we show multiple leg joint angles tracking performance. For steering walking (center), we show CoM sagittal and lateral velocity tracking performance in the top and the middle figures respectively, and torso Euler angles are shown in the bottom figure. The overall navigation task is shown in the right figure.}
\label{fig:hardware_results}
\vspace{-0.05in}
\end{figure*}

For Digit hardware experiments, we first test our velocity tracking performance in a straight walking setting. The achieved velocity tracking on Digit hardware is shown in Fig.~\ref{fig:vel_track}. Provided with the same desired reference trajectory, the ankle-actuated velocity (shown in blue) is much closer to the target and keeps the tracking error consistently small, whereas the passive ankle velocity (shown in red) has a larger offset. 
Fig.~\ref{fig:hardware_results} provides the tracking performance of the key joints on the left leg. The vertical black lines indicate the contact events. 
The tracking of the swing leg is important for achieving accurate foot placements. We achieve a satisfying performance---an error of less than 1 cm in the task space.
Fig.~\ref{fig:hardware_results} shows the torso's Euler angle during a $90^\circ$ turn while walking forward. The roll and pitch angles are kept flat at $0^\circ$, and the yaw angle tracks five consecutive $18^\circ$ step commands (dashed green line).



We demonstrate the efficacy of the entire framework through a navigation task in an open-world setting. The environment (see Fig.~\ref{fig:hardware_results}) is discretized into a $6\times6$ coarse grid, with a cell size scaled down to $1$ m$^2$ \footnote{The coarse cell size is scaled down for a compelling real-world implementation. Step length $d$ is scaled down accordingly and heading change is constrained to $\Delta \theta = \pm 18^\circ$ with five consecutive turning steps for $90^\circ$ turns.}, with flat ground, static obstacles, and two dynamic obstacles. The dynamic obstacles are Unitree A1 quadruped robots that perform their own navigation tasks sharing the same space. The Digit robot, as the protagonist, performs a navigation task---locomoting through the environment to two target coarse cells in sequence while avoiding all obstacles. The Digit robot can localize its own location with respect to the floor map known \textit{a priori}. 

The Digit robot runs the proposed framework online to conduct planning that generates the entire sequence of walking motions step by step\footnote{The parameters in Algorithm~\ref{alg:searching} are adjusted for hardware implementation, where $T_{d} = 0.45$ s, $W_{d} = 0.45$ m, $c_{1}= c_{2} = 4$, $c_{3} = 6$ and $c_{4} = 2$.}. This walking sequence needs to guarantee execution safety (i.e., obstacle collision avoidance and locomotion safety) and task completion (i.e., reaching the goal locations). The navigation result is shown in Fig.~\ref{fig:hardware_results}, where the Digit robot starts from the cell labeled with a Digit illustration and target coarse cells are shaded green. Along the way, one A1 gets in the way. Digit stops to avoid the collision, and then proceeds to finish the task after A1 moves away and the path is clear.

\section{Discussion and limitations}
\label{sec:discussion}
\textbf{Design and Computational Considerations of Bipedal Navigation: }Belief tracking expands the guaranteed safe navigation actions available to the navigation planner. Merging the belief of multiple dynamic obstacles into one abstract state captures less information than individual obstacle tracking by design. This reduces computational complexity while providing the same guarantees of capturing dynamic obstacle locations. One path to enhance the proposed framework in the future is to model small obstacles in the action planner so that an entire coarse cell containing such obstacles is still accessible to the robot in the navigation game.
Additionally, the library of turning sequences can be expanded to incorporate more aggressive navigation decisions while obeying the safety criteria proposed in Sec.~\ref{subsubsec:safetyprop} and to include fine-level obstacle avoidance maneuvers.


\textbf{Locomotion Safety Consideration in Real-world Deployment: }Our proposed safe PSP demonstrates the successful execution of high-level actions under nominal conditions for a large number of walking steps as detailed in Table.~\ref{tab:nominal_table}. While the framework still lacks a formal guarantee on successful lateral tracking of the high-level waypoints for an \textit{infinite} number of steps or under extremely large perturbations, our results show empirical guarantees afforded by the integration of the formal navigation and obstacle avoidance guarantees in the high-level task planner in Sec.~\ref{sec:task planner}, locomotion safety guarantees in Sec.~\ref{subsubsec:safetyprop}, and the online replanning algorithm for waypoint tracking in Sec.~\ref{sec:Tracking}. Such empirical guarantees are shown in Figs.~\ref{fig:psp}-\ref{fig:top_down_turn} and Table.~\ref{tab:nominal_table}.

The success rate of completing OWS safely under perturbation highly depends on various factors such as the state space granularity, robot actuation capability, environmental uncertainties, and the locomotion phase when the perturbation is applied. A comprehensive analysis of the success rate with respect to these factors is beyond the scope of this study. In the future, we plan to design more advanced safety criteria addressing adversarial pedestrians in the environment \cite{scianca2021behavior, zhi2021anticipatory, majd2021safe} and contact uncertainties from terrain \cite{drnach2021robust}. Moreover, the reachability analysis is based on the ROM dynamics, therefore a discrepancy will be induced when full-body dynamics is taken into account.

Moreover, in practice, locomotion safety is difficult to be guaranteed at the low level for the full-order-dynamics-based controller. Although existing works on CBF \cite{CBF_CLF_Ames} provided guarantees for full-order bipedal locomotion models, these guarantees can be easily violated due to various hardware implementation issues such as unmodeled actuator dynamics, communication delay, and imperfect state estimation. Given this fact, we have designed a full-order-dynamics-based controller implementation to maximize the success rate of task completion and safety from a practical perspective. Practically critical modifications to the controller, such as time-varying step length and toe actuation, have been incorporated to enable safety, although a theoretical guarantee can not be provided.

\textbf{Hardware Implementation Issues Induced by Robot Model Discrepancy: }Implementing such a planning framework on hardware gives rise to two main challenges, one relating to the high-level planner and one to the middle-level phase-space planner.

First, the abstraction of the robot state, while giving rise to task completion guarantees, also limits the actions that the robot can take to a specific granularity. Although we guarantee that our chosen granularity produces a safe motion plan for the reduce-order model, such safety guarantees might be compromised at the hardware level.

The middle-level phase-space planner is 
computationally efficient and allows for non-periodic motion plans. Discrepancy issues, however, will always arise when we use the plan of a ROM to control a full-order system. 
For example, the phase-space plan does not contain reference trajectories for the body orientation. In turning cases, a low-level full-body motion generator has to bridge the gap and design trajectories for the body orientation. We used task-space heuristics and inverse kinematics to generate the full-body trajectory for both turning and non-turning walking (see Sec.~\ref{sec:passivity}).
Another problem caused by the model mismatch is CoM velocity tracking. The desired CoM velocity trajectory, which is analytically determined by the PIPM, is not an accurate description of how the full-order nonlinear system evolves. Therefore, a foot placement calculated by the PIPM, although accurately executed, wouldn't necessarily drive the CoM to the desired velocity. This requires either a foot placement adjustment based on the full-order dynamics or additional regulation. We chose the second option and actuated the ankle torque in the low-level controller to provide better CoM velocity tracking performance, as shown in Fig.~\ref{fig:vel_track}.

\section{Conclusion}
\label{sec:conclusion}

Long-horizon and formally-guaranteed safe TAMP in complex environments with dynamic obstacles has long been a challenging problem, specifically for underactuated bipedal systems. On the other hand, symbolic planners are powerful in providing formal guarantees on safety and task completion in complex environments. For this reason, integrating high-level formal methods and low-level safe motion planning ought to be explored by the locomotion community to attain formally safe TAMP for real-world applications, to name a few, first-responders during search and rescue scenes, monitoring a controlled environment in civil or mechanical infrastructures, and planting crop products in agricultural environments. The way we address this problem is through multi-level safety in a hierarchically integrated planning framework.

Our proposed TAMP framework seamlessly integrates low-level locomotion safety specifications into a formal high-level LTL synthesis, to guarantee the safe execution of high-level commands. The middle-level motion planner generates non-period motion plans that accurately execute safe high-level actions. Our high-level planner employs a belief abstraction to address the partial observability of a large environment and guarantees safe navigation. We also investigate robustness against external perturbation through safe sequential composition of keyframe states to achieve robust locomotion transitions. By employing an online foot placement controller and a full-body passivity-based controller, the overall task and motion planning framework is also validated on a 28-DoFs Digit bipedal robot.

\appendices
\section{Analytical Solution for PIPM Dynamics}
\label{appendix1}
When the CoM motion is constrained within a piece-wise linear surface parameterized by $h = a(x -x_{{\rm foot}})+h_{\rm apex}$, the ROM becomes linear and an analytical solution exists:
\begin{align}
    \label{eqn:appendix1}
    &x(t) = Ae^{\omega t} + Be^{-\omega t} + x_{\rm foot} \\
    \label{eqn:appendix2}
    &\dot{x}(t) = \omega (Ae^{\omega t} - Be^{-\omega t})
\end{align}
where $\omega = \sqrt{\frac{g}{h_{\rm apex}}},
    A = \frac{1}{2}((x_{0}-x_{\rm foot})+\frac{\dot{x}_{0}}{\omega}),
    B = \frac{1}{2}((x_{0}-x_{\rm foot})-\frac{\dot{x}_{0}}{\omega})$.
manipulate Eq. (\ref{eqn:appendix1})-(\ref{eqn:appendix2}) gives
\begin{equation}
    x+\frac{\dot{x}}{\omega}-x_{\rm foot}=2Ae^{\omega t}
\end{equation}
which renders
\begin{equation}\label{eqn:appendix3}
    t=\frac{1}{\omega}\log(\frac{x+\frac{\dot{x}}{\omega}-x_{\rm foot}}{2A})
\end{equation}

To find the dynamics, $\dot{x} = f(x)$, which will lead to the switching state solution, remove the $t$ term by plugging Eq. (\ref{eqn:appendix3}) into Eq. (\ref{eqn:appendix1}).
\begin{equation}
    \frac{1}{2}(x-\frac{\dot{x}}{\omega}-x_{\rm foot})= \frac{2AB}{x+\frac{\dot{x}}{\omega}-x_{\rm foot}}
\end{equation}
\begin{equation}
    (x-x_{\rm foot})^{2}-(\frac{\dot{x}}{\omega})^{2}=4AB
\end{equation}
which yields
\begin{equation}
    \dot{x}=\pm \sqrt{\omega^{2}((x-x_{\rm foot})^{2}-(x_{0}-x_{\rm foot})^{2})+\dot{x}_{0}^{2}}
\end{equation}

If the apex height is constant, then $\omega$ is constant. According to the constraint that sagittal velocity should be continuous, the sagittal switching position is obtained by 
\begin{equation}
    x_{\rm switch} = \frac{1}{2}(\frac{C}{x_{{\rm foot}, n}-x_{{\rm foot}, c}}+(x_{{\rm foot}, c}+x_{{\rm foot}, n}))
\end{equation}
where
\begin{equation}
\begin{split}
    C = & (x_{{\rm apex},c}-x_{{\rm foot},c})^{2}-(x_{{\rm apex},n}-x_{{\rm foot}, n})^{2} \\ 
    & + \frac{v_{{\rm apex},n}^{2}-v_{{\rm apex},c}^{2}}{\omega^{2}}
\end{split}
\end{equation}

\section{Proof of Theorem~\ref{thm:straight}}
\label{appendiX:proof_prop_straight}
\begin{proof}
First, the sagittal switching position can be obtained from the analytical solution in Appendix~\ref{appendix1}:
\begin{equation}
\label{eq:x_switch}
    x_{\rm switch} = \frac{1}{2}(\frac{C}{x_{{\rm foot}, n}-x_{{\rm foot}, c}}+(x_{{\rm foot}, c}+x_{{\rm foot}, n}))
\end{equation}
where $C = (x_{{\rm apex},c}-x_{{\rm foot},c})^{2}-(x_{{\rm apex},n}-x_{{\rm foot}, n})^{2} +(\dot{x}_{{\rm apex},n}^{2}-\dot{x}_{{\rm apex},c}^{2})/\omega^{2}$. This walking step switching position is required to stay between the two consecutive CoM apex positions, i.e., 
\begin{equation}
    x_{{\rm apex},c} \le x_{\rm switch} \le x_{{\rm apex},n}
\end{equation}
which introduces the sagittal apex velocity constraints for two consecutive keyframes as follows.
\begin{equation}
\begin{split}
    \omega^{2}(x_{{\rm apex},n}-x_{{\rm apex},c})&(x_{{\rm apex},c}+x_{{\rm apex},n}-2x_{{\rm foot},n}) \\
    \le \; v_{{\rm apex},n}^{2} & -v_{{\rm apex},c}^{2} \le  \\
    \omega^{2}(x_{{\rm apex},n}-x_{{\rm apex},c})&(x_{{\rm apex},c}+x_{{\rm apex},n}-2x_{{\rm foot},c})
\end{split}
\end{equation}

Given this bounded difference between two consecutive CoM apex velocity squares, the corresponding safe criterion for straight walking can be expressed as Eq.~(\ref{eq:straight}). 
\end{proof}

\section{Proof of Theorem~\ref{thm:steering1}}
\label{appendiX:proof_prop_steering}
\begin{proof}
First, for the sagittal phase-space, the sagittal velocity  is required to be above the asymptote:
\begin{equation}
    \dot{x}_{{\rm apex},c} \geq \omega \cdot x_{{\rm foot},c}
    \label{eq:steering_xdot}
\end{equation}
Initiating a heading angle change introduces a new local sagittal coordinates as seen in Fig.~\ref{fig:steering_safety}. Therefore Eq.~(\ref{eq:steering_xdot}) becomes
\begin{equation}
    v_{{\rm apex},c}\cdot \cos{\Delta \theta} \geq \omega \cdot \Delta y_{2,c} \cdot \sin{\Delta \theta}
    \label{eq:steering_sagittal}
\end{equation}
As for the lateral phase-space, the lateral velocity is required to be below the asymptote in the new coordinate as follows
\begin{equation}
    v_{{\rm apex},c}\cdot \sin{\Delta \theta} \leq \omega \cdot \Delta y_{2,c} \cdot \cos{\Delta \theta}
\label{eq:steering_lateral}
\end{equation}
Combining Eqs.~(\ref{eq:steering_sagittal}-\ref{eq:steering_lateral}) results in the steering safety criterion in Eq.~(\ref{eq:steering_prop}).
\end{proof}

\section*{Acknowledgment}
The authors would like to express our gratefulness to Suda Bharadwaj and Ufuk Topcu for their discussions on belief abstraction, Yinan Li and Jun Liu for their assistance in the reachability controller implementation, and Jialin Li for his early-stage help in setting up our Cassie robot visualization. Special thanks to Victor Paredes, Guillermo Castillo, and Ayonga Hereid for their support on Digit controller implementation.

\bibliographystyle{IEEEtran}
\bibliography{lidar.bib}

%

\end{document}